\documentclass[a4paper, UKenglish, cleveref, autoref]{lipics-v2021}
\usepackage{multicol}
\usepackage{amsmath}
\usepackage{mathtools}
\usepackage{tikz}
\usetikzlibrary{arrows.meta,decorations.markings,decorations.pathmorphing,calc,backgrounds,shapes.misc,shapes.geometric,bending}
\usepackage{xcolor}
\def\loopangle{30}
\def\looplooseness{15}
\tikzset{
    >={Stealth[length=2.25mm]},
    deviation/.style={black!50},
    connexion/.style={orange!75!black},
    vertex/.style={fill,black,inner sep=0pt,minimum size=4pt,outer sep=2pt, circle},
    edge/.style={->, black, draw, very thick,},
    node/.style={rounded rectangle, draw, black,inner sep=2pt,minimum size=4pt,outer sep=2pt},
    tape/.style={rounded rectangle, draw=black,inner sep=2pt,minimum size=4pt,outer sep=2pt},
    double tape/.style={rounded rectangle, draw=black,inner sep=1pt,minimum size=4pt,outer sep=1pt, double distance=1pt},
    vm loop/.style={pos=.5, looseness = \looplooseness},
    north west loop/.style={vm loop, in={\the\numexpr 135 + \loopangle\relax}, 
                                     out ={\the\numexpr 135 - \loopangle\relax}},
    north east loop/.style={vm loop, in={\the\numexpr 45 + \loopangle\relax}, 
                                     out ={\the\numexpr 45 - \loopangle\relax}},
    south west loop/.style={vm loop, in={\the\numexpr -135 + \loopangle\relax}, 
                                     out ={\the\numexpr -135 - \loopangle\relax}},
    south east loop/.style={vm loop, in={\the\numexpr -45 + \loopangle\relax}, 
                                     out ={\the\numexpr -45 - \loopangle\relax}},
    north loop/.style={vm loop, in={\the\numexpr 90 + \loopangle\relax}, 
                                out ={\the\numexpr 90 - \loopangle\relax}},
    south loop/.style={vm loop, in={\the\numexpr 270 - \loopangle\relax}, 
                                out ={\the\numexpr 270 + \loopangle\relax}},
    east loop/.style={vm loop, in={\the\numexpr 0 + \loopangle\relax}, 
                                out ={\the\numexpr 0 - \loopangle\relax}},
    west loop/.style={vm loop, in={\the\numexpr 180 - \loopangle\relax}, 
                                out ={\the\numexpr 180 + \loopangle\relax}},
    /tikz/on layer/.code={
        \pgfonlayer{#1}\begingroup
        \aftergroup\endpgfonlayer
        \aftergroup\endgroup
    }, 
    order/.style={cbdarkblue,font=\footnotesize, inner sep=0pt, outer sep=0pt},
}

\definecolor{cborange}{HTML}{d55e00}
\definecolor{cbpink}{HTML}{cc79a7}
\definecolor{cbblue}{HTML}{0072b2}
\definecolor{cbyellow}{HTML}{f0e442}
\definecolor{cbteal}{HTML}{009e73}
\colorlet{cbdarkteal}{cbteal!75!black}
\colorlet{cbdarkorange}{cborange!90!black}
\colorlet{cbdarkblue}{cbblue!90!black}
\colorlet{cbdarkpink}{cbpink!50!black}
\usepackage[super]{nth}
\usepackage[bibliography=common]{apxproof}

\usepackage{adjustbox}

\newcommand{\mcomment}[2]{\ifmmode\margincomment{#1}{#2}\else\footcomment{#1}{#2}\fi}
\newcommand{\footcomment}[2]{{\color{blue}\textbf{(#1)}}\footnote{\textbf{#1:} #2}}
\newcommand{\margincomment}[2]{{\color{blue}\textbf{(#1)}}\footnotemark\marginnote{\tiny\textsuperscript{\thefootnote}\textbf{#1:} #2}}

\newcommand{\last}{\textsf{last}}
\newcommand{\sspecial}{\textsf{special}}

\newcommand{\calM}{\mathcal{M}}
\newcommand{\calD}{\mathcal{D}}
\newcommand{\calT}{\mathcal{T}}

\newcommand{\NN}{\mathbb{N}}
\newcommand{\ZZ}{\mathbb{Z}}
\newcommand{\bool}{\{0,1\}}

\newcommand\mirror{\mathrm{Mir}}

\AddToHook{env/proposition/begin}{\crefalias{theorem}{proposition}}
\AddToHook{env/lemma/begin}{\crefalias{theorem}{lemma}}
\AddToHook{env/example/begin}{\crefalias{theorem}{example}}
\AddToHook{env/definition/begin}{\crefalias{theorem}{definition}}
\AddToHook{env/corollary/begin}{\crefalias{theorem}{corollary}}
\AddToHook{env/claim/begin}{\crefalias{theorem}{claim}}
\AddToHook{env/remark/begin}{\crefalias{theorem}{remark}}

\newcommand{\upstate}{\uparrow}%
\newcommand{\downstate}{\downarrow}%
\colorlet{darkorange}{orange!75!black}
\colorlet{darkblue}{blue!75!black}
\newcommand{\omark}[2][draw=red, line width=1pt]{\tikz[baseline=(tmp.base)]{%
    \node[inner sep=0pt](tmp){$#2$}; 
    \path (tmp.base) ++ (0,-.2em) coordinate (below);
    
    \path[#1] 
        ($(below)+(-.1em,-.2em)$) -- (below) -- ($(below)+(.1em,-.2em)$) ;
    \pgfresetboundingbox
    \useasboundingbox (tmp.north east) rectangle (tmp.south west);
}}
\newcommand{\dimmedomark}[1]{\omark[draw=black!50, line width=1pt]{#1}}

\newcommand{\OO}{\mathrm{O}}

\newcommand{\LL}{\mathrm{L}}
\newcommand{\RR}{\mathrm{R}}
\newcommand{\bow}{{\triangleright}}
\newcommand{\eow}{{\triangleleft}}
\newcommand{\Even}{\mathsf{Even}}
\newcommand{\Odd}{\mathsf{Odd}}

\newcommand{\ii}{\mathrm{i}}
\newcommand{\ff}{\mathrm{h}}

\renewcommand\leadsto[1][]{%
{\begin{tikzpicture}%
\path coordinate (START) 
    ++(2.5ex,0) coordinate (MID)
    ++(1.1ex,0) coordinate (END);
\path[line width=.6pt,draw=white,-{Stealth[black,length=1.1ex,width=1.1ex]}] (MID) -- (END);
\draw[
    line width=.6pt,
    cap=round,
    decorate, decoration={
        snake, segment length=1ex, amplitude=.2ex, 
        pre length=0pt, post length=0cm},
    ] (START) -- node[midway,above]{\scriptsize$#1$}(MID);
\end{tikzpicture}\hskip-.1ex}%
}%

\DeclareRobustCommand
  \thincdots{\mathinner{\raisebox{.1ex}{$\cdotp\mkern-2mu\cdotp\mkern-2mu\cdotp$}}}

\DeclarePairedDelimiter\sparenv{\lbrack}{\rbrack}

\DeclareRobustCommand{\svdots}{%
  \vcenter{%
    \offinterlineskip
    \hbox{.}
    \vskip0.25\normalbaselineskip
    \hbox{.}
    \vskip0.25\normalbaselineskip
    \hbox{.}%
    \vskip0.125\normalbaselineskip
  }%
}

\newcommand{\vmtopstrut}{\rule{0pt}{1.1em}}
\newcommand{\vmbotstrut}{\rule[-.55em]{0pt}{1em}}

\newcommand{\qi}{q_{\ii}}
\newcommand{\qh}{q_{\ff}}

\newcommand{\hlpath}[1]{\textcolor{cbdarkorange}{#1}}
\newcommand{\hlpari}[1]{\textcolor{cbdarkteal}{#1}}

\newcommand{\ourimplemurl}{https://gitlab.com/marsault/constant-delay-gray-code-public/-/releases/v1.0}

\newcommand{\parcode}[1]{#1}

\title{Gray Codes with Constant Delay\texorpdfstring{\\}{} and Constant Auxiliary Space}

\titlerunning{Gray Codes With Constant Delay and Constant Auxiliary Space} %

\author{Antoine Amarilli}
       {{Univ.~Lille, Inria, CNRS, Centrale Lille, UMR 9189 CRIStAL, F-59000 Lille, France}}
       {antoine.a.amarilli@inria.fr}
       {https://orcid.org/0000-0002-7977-4441}
       {}
\author{Claire David}
       {Univ.~Gustave Eiffel, CNRS, LIGM, F-77454 Marne-la-Vallée, France}
       {claire.david@univ-eiffel.fr}
       {https://orcid.org/0000-0002-3041-7013}
       {}
\author{Nadime Francis}
       {Univ.~Gustave Eiffel, CNRS, LIGM, F-77454 Marne-la-Vallée, France}
       {nadime.francis@univ-eiffel.fr}
       {https://orcid.org/0009-0009-4531-7435}{}
\author{Victor Marsault}
       {Univ.~Gustave Eiffel, CNRS, LIGM, F-77454 Marne-la-Vallée, France
        \and \url{https://victor.marsault.xyz}}
       {victor.marsault@univ-eiffel.fr}
       {https://orcid.org/0000-0002-2325-6004}
       {}
\author{Mikaël Monet}
       {Univ.~Lille, Inria, CNRS, Centrale Lille, UMR 9189 CRIStAL, F-59000 Lille, France}
       {mikael.monet@inria.fr}
       {https://orcid.org/0000-0002-6158-4607}
       {}
\author{Yann Strozecki}
       {Université de Versailles Saint-Quentin, Laboratoire David, France}
       {yann.strozecki@uvsq.fr}
       {https://orcid.org/0000-0002-0891-3766}
       {}

\authorrunning{A. Amarilli, C. David, N. Francis, V. Marsault, M. Monet, and Y. Strozecki} %

\Copyright{Antoine Amarilli, Claire David, Nadime Francis, Victor Marsault, Mikaël Monet, and Yann Strozecki} %

\relatedversiondetails[cite=CDCASGrayCode-icalp,linktext={To appear in ICALP'26}]{Conference proceeding}{} 

\supplementdetails[subcategory={Gitlab Repository},linktext={Tape and Deque Machines Implementing Gray codes}, cite=OurImplementation]{Prototype implementation}
{\ourimplemurl} 

\keywords{Gray code, Constant delay, Constant auxiliary space, Enumeration algorithms, Linear bounded automata, Tape machine, Deque machines, Counter implementation} 

\ccsdesc[500]{Mathematics of computing~Enumeration}

\acknowledgements{%
We are grateful to Gabriel Bathie and Florent Capelli for their insights in discussions about the problem.
We also thank Thomas Colcombet and IRIF for their support.
}

\newtheoremrep{theorem}{Theorem}[section]       
\newtheoremrep{proposition}{Proposition}[section]
\newtheoremrep{lemma}{Lemma}[section]
\newtheoremrep{claim}{Claim}[section]

\hideLIPIcs
\nolinenumbers

\begin{document}

\maketitle

\begin{abstract}
    We give the first two algorithms to enumerate all binary words of
    $\{0,1\}^\ell$ (like Gray codes) while ensuring that the delay and the auxiliary space
    is independent from~$\ell$, i.e., constant time for each word, and constant
    memory in addition to the $\ell$ bits storing the current word.
    Our algorithms are given in two new computational models: \emph{tape machines} and \emph{deque machines}. 
    We also study more restricted models, \emph{queue machines} and \emph{stack machines}, and show that they cannot enumerate all binary words with constant auxiliary space, even with unrestricted delay.
    
    A \emph{tape machine} is a Turing machine that stores the current binary
    word on a single working tape of length $\ell$ (which never increases),
    using no other tape.
    The machine has a single head and must edit its tape to reach all possible words of $\{0,1\}^\ell$, and output them (in unit time, by entering special output states), with no duplicates. 
    Hence a tape machine uses constant auxiliary space by definition (up to the
    head position).
    We construct a tape machine that achieves this task with constant delay
    between consecutive outputs, so that the machine implements a so-called
    \emph{skew-tolerant quasi-Gray code}. We then construct a more involved tape machine that implements a Gray code.

    A \emph{deque machine} stores the current binary word on a double-ended queue of length~$\ell$, and stores a constant-size internal state. 
    It works as a tape machine, except that it modifies the content of the deque by performing push and pop operations on the endpoints.
    Hence again a deque machine uses constant auxiliary space by definition.
    We construct deque machines that enumerate all words of $\{0,1\}^\ell$ with constant-delay.  The main technical challenge in this model is to correctly detect when enumeration has finished. 
\end{abstract}

\begin{toappendix}
\section{Implementation}
\label{app:implementation}
In addition to this paper, we wrote several files and scripts~\cite{OurImplementation}
that can be downloaded from:\\[1mm]
\hspace*{0cm}\hfil\adjustbox{max width=\linewidth}{\url{\ourimplemurl}}\\[1mm]
These files provide implementations of our tape and deque machines, along with scripts that verify with empirical checks that they meet our claims.
Here is a short description of the files provided and their usage.

\begin{itemize}\setlength\itemsep{\medskipamount}
    \item The files \texttt{T0.txt}, \texttt{T1.txt}, \texttt{T2.txt}, \texttt{D0.txt}, \texttt{D1.txt}
    and \texttt{D2.txt} contain the transitions table of the machines~$\calT_0$ (\cref{fig:tape-easy-table}), $\calT_1$ (\cref{fig:tape-ham1-table}), $\calT_2$ (\cref{fig:tape-hs24-table}), $\calD_0$ (\cref{fig:deque-easy-table}),~$\calD_1$ (\cref{fig:deque-double-table}),~$\calD_2$ (\cref{fig:deque-lookahead-table}), respectively.
    
    Note that these are data files: they must be launched with scripts \texttt{tapeMachine.py} or \texttt{dequeMachine.py}

    \item The Python script \texttt{tapeMachine.py} runs a tape machine given by its table. 
    It has several options, in particular \texttt{-{}-word\_size} to specify the size of the words produced.
    Example usage:
    \begin{itemize}
        \item \texttt{python3 tapeMachine.py T0.txt -{}-word\_size 5} prints all words of size $5$
        \item \texttt{python3 tapeMachine.py T1.txt  -{}-verbose} prints all words of size 6 (default size) and output the sequence of configurations of the machine
        \item \texttt{python3 tapeMachine.py T1.txt -{}-reverse} runs the machine, whose transitions are the transitions of \texttt{T1.txt} reversed
    \end{itemize}

    \item The Python script \texttt{dequeMachine.py} runs a deque machine given by its table, with the same options and 
    usage as \texttt{tapeMachine.py}.
    
    \item The script \texttt{check.sh} tests whether all the machines are Hamiltonian (or prefix-Hamiltonian, for $\calD_0$), and their delay is checked against a given bound, for sizes 3 to 18. It also checks the Hamming-1 condition on $\calT_1$, and the bound on Hamming distance for $\calT_0$.
    
    \item The Python script \texttt{rankT1.py} computes the rank of any word of length~$\ell$ in the code for length~$\ell$ produced by the tape machine~$\calT_1$ (see
    Appendix~\ref{apx:rankunrank}).

    For instance, command \texttt{python3 rankT1.py 01001} prints \texttt{6}, which is correct according to \cref{fig:tape-ham1-tree}: 01001 is the \nth{7} word to be enumerated.
    
    \item The Python script \texttt{unrankT1.py} takes as input a rank~$n$ and a word length~$\ell$ and prints the word at rank~$n$ (that is, the $(n+1)$-th word) in the code for length~$\ell$ implemented by the tape machine~$\calT_1$ (see
    Appendix~\ref{apx:rankunrank}).

    For instance, command \texttt{python3 unrankT1.py 22 5} prints \texttt{10010}, which is correct according to \cref{fig:tape-ham1-tree}: 10010 is the \nth{23} word to be enumerated.

    \item The Python script \texttt{checkRankAndUnrankT1.py} checks that the three scripts \texttt{rankT1.py}, \texttt{unrankT1.py} and \texttt{tapeMachine.py T1.txt} are consistent:
    it launches \texttt{tapeMachine.py} as a subprocess, and, for each word output by \texttt{tapeMachine.py}, it launches \texttt{unrankT1.py} and \texttt{rankT1.py} to check they output the correct result.

    For instance, \texttt{python3 checkRankAndUnrankT1.py 8} checks the three scripts are consistent for word-length~$8$.

\end{itemize}

\end{toappendix}

\section{Introduction}
In this paper we propose new efficient
algorithms to produce all binary words of a certain
length $\ell \in \NN$, i.e., all words of $\bool^\ell$.
Specifically, we are looking for efficient
\emph{enumeration algorithms}, 
which produce these words one after another, without
repetitions, and while minimizing the
\emph{delay}
between two consecutively produced words.
If we require each word to be written from scratch,
then obviously we cannot hope for a better delay than $O(\ell)$.
For this reason, we adopt the “Do not count the
output” principle from the enumeration algorithms
literature~\cite[p.~8]{ruskey2003combinatorial}, in the spirit of 
\emph{combinatorial Gray codes}
\cite{savage1997survey,Mutze23-gray}.
In this setting, a single word is explicitly maintained in memory,
the algorithm has an $\mathsf{output}$ instruction that produces the current
word in unit time,
and the word can be modified in-place with a certain
set of edit operations, before calling the next
$\mathsf{output}$ instruction.
Of course, what can be achieved depends on which update operations are allowed.
Two common choices are \emph{substitutions} (i.e., store the current word in an
array and make it possible to change the $i$-th symbol) and \emph{endpoint
modifications} (i.e., store the current word in a doubly-linked list and make it
possible to push or pop on the endpoints).

In this model, for each set of allowed edit operations, we can try to optimize
two things: running time, and memory usage. For running time, the best possible
goal is to achieve \emph{constant delay}, i.e., the number of computation steps
and edits from one word produced to the next is bounded by a constant (independent from
the word length $\ell$). For memory usage, the best possible goal is to achieve
\emph{constant additional memory}, i.e., the memory state consists of the
currently produced word (i.e., $\ell$ bits) plus a finite state which is again
independent from the word length $\ell$.

For the more commonly studied setting of substitution updates, 
it is already known that each of these two aspects can be optimized in isolation.
We can achieve constant delay using, for instance, the 
\emph{Binary Reflected Gray Code} (BRGC)~\cite{rahman2010integer}:
start from the word $0^\ell$,
flip the rightmost bit on odd steps, and
flip the bit that is to the left of the rightmost~$1$ on even steps.
This gives what is known in this line of work as a \emph{loopless} or \emph{loop-free} algorithm~\cite{ehrlich1973loopless}.
However, this algorithm requires non-constant additional memory: to quickly find the bit
to flip on even steps, we would typically use a linear-sized stack
storing the current positions of the $1$'s;
see also~\cite{bitner1976efficient,rahman2010integer}.
As for \emph{constant additional memory}, the optimal memory required to
implement Gray codes has been studied in the \emph{decision assignment tree}
(DAT) model or bit-probe
model~\cite{fredman1978observations,bose2010improved,brodal2014integer,rahman2010integer,raskin2017linear},
with essentially a logarithmic lower bound.
However, these models are \emph{non-uniform}: they assume that the algorithm can
freely access the entire word and that it can perform arbitrary computations
with possibly high running time as a function of the word length.

For the setting of endpoint modifications, the situation is far more open.
There are known related constructions to enumerate binary words, e.g.,
\emph{universal words}~\cite{chung1992universal} or Hamiltonian paths in
so-called \emph{Shuffle Exchange networks}~\cite{feldmann1996shuffle}, but to
our knowledge they do not give favorable bounds on the delay and memory usage
of an algorithm. We also note that an efficient algorithm has
recently been proposed for the so-called \emph{Sigma-Tau
problem}~\cite{sawada2019solving,liptak2023constant}. This problem consists in
enumerating all permutations of $\{1,\ldots,n\}$ using specific endpoint
modification operations, and as far as we can tell this is a different problem
from ours.

Hence, our focus in this work is whether we can achieve the two requirements
simultaneously: Can we enumerate $\{0,1\}^\ell$ with an algorithm that uses
only constant auxiliary memory and produces each word in constant delay?
Our results show that the answer
depends on the set of edit operations that are allowed, and
on the computational model that is used (e.g., 
which instructions are allowed, and
what counts as memory).
We study this problem first for a kind of Turing machines
called \emph{tape machines}, and then for
another formalism 
that we introduce, called \emph{deque machines}.
In these two frameworks, we construct machines that 
enumerate $\{0,1\}^\ell$
with constant delay and while using constant auxiliary space.
Note that this does not contradict the known lower bounds from the DAT model: 
we discuss this in more detail in \cref{sec:tape} for tape machines, and for deque
machines we are not aware of known lower bounds because deque machines can 
perform circular shifts in unit time.
We present our contributions in more detail below.

\subparagraph{Contributions.}
We first present \emph{tape machines}, which are a kind of deterministic Turing machines that 
also resemble \emph{linear bounded automata}~\cite{kuroda1964classes}.
Tape machines 
have
a finite set of states, some of which are distinguished as output states,
along with a single working tape and one head. The tape alphabet is $\{0,1\}$. The tape is initialized with the word $0^\ell$ and
stays of length~$\ell$ throughout.
Like a Turing machine, a tape machine updates its tape based on its internal
state and on the symbol under the head.
It may then change the bit under the head, move its head left or right, and change its state.
Note that the machine only has a single working tape, and no input or output tape.
The word on the working tape is produced in unit time
whenever an output state is reached.

This definition ensures that tape machines satisfy our constant-memory requirement:
in addition to the working tape used to store the current word, 
the only auxiliary memory that they use is their internal state,
from a constant-sized set of states which is independent from~$\ell$.
Our goal is to design a \emph{Hamiltonian} tape machine:
for every word length~$\ell\geq 1$, when started on the word $0^\ell$, the machine
must produce all the words of $\{0,1\}^\ell$ exactly once (i.e., each word is seen exactly once in an
output state), and must then halt.
Further, the machine must be \emph{constant-delay}:
there must be a constant $B$ independent from~$\ell$ such that the machine produces a new word after at most~$B$ steps.
It is not obvious at all from the definition that Hamiltonian constant-delay tape machines exist, because they have no working memory beyond the tape used to store the words to be produced.
In particular, the constant-delay requirement implies that the bits flipped from one word to the next must be at constant distance from one another (i.e., if we flip bit $i$ and then bit $j$ then the distance $|j-i|$ must be small). This latter requirement is not satisfied by the BRGC, but it is obeyed by some Gray codes, for instance, the \emph{$k$-skew-tolerant} codes
of~\cite{sachimelfarb2025improved}
(attributed to~\cite{WilmerErnst2002}), 
where~$k$ is the maximum distance between consecutively flipped bits. 
Our first contribution is to show that we can in fact design a Hamiltonian tape machine. What is more, we can strengthen the result and ensure that the machine is \emph{Hamming-1}, i.e., the Hamming distance between any two successively produced words is 1: this property is satisfied by Gray codes such as the BRGC or the ones 
of~\cite{sachimelfarb2025improved}). 
Formally:

\begin{theorem}
  \label{thm:tape-intro}
There exists a Hamiltonian constant-delay Hamming-1 tape machine.
\end{theorem}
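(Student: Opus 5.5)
The plan is to build the machine from a recursively defined Hamming-1 code on suffixes. The tape contents, together with a constant amount of information carried in the state, must determine the next flip. The invariant is locality: every recursive block must begin and end with its head within a constant distance of a prescribed position.

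Concretely, I would define a small number of mutually recursive procedures acting on the suffix starting at position $i$ (positions $i,\dots,\ell$). Each procedure, started in a given state with the head at or near $i$, outputs every word of the suffix exactly once and performs exactly one bit flip between consecutive outputs. A procedure is described by its entry position, its exit position (near $i$, or near the right end $\ell$), and whether it runs forward or as the reversal of another procedure. The naive recursion $L(i)=L(i+1)\,;\,\text{flip } i\,;\,L(i+1)$ (the BRGC read from the other end) does not work. Its first flip happens at depth $\ell$, so the head would have to walk far without producing output.

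The recursion therefore has to be interleaved, roughly as follows.
\begin{itemize}
\item Flip near the current position immediately.
\item Descend into the next position while flipping on the way, so that every flip is followed by a recursive call that starts right next to it.
\item Come back using the reversed (mirror) procedure of a sub-enumeration whose exit is near its entry.
\end{itemize}
I would try two families: $F(i)$, which enters at $i$ and exits at the right end, and $R(i)$, which enters at the right end and exits at $i$. I would look for equations of the shape $F(i)=\text{flip } i\,;\,F(i+1)\,;\dots$ and $R(i)=\dots\,;R(i+1)\,;\,\text{flip } i$. The missing pieces would be filled in so that the word counts add up to $2^{\ell-i+1}$ and every concatenation point joins two flips at distance $O(1)$. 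If one bit at a time is too rigid, I would group bits in pairs, or use a left-end "pebble" encoding to obtain the needed freedom.

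Once the recursive code is fixed, I would prove three things.
\begin{itemize}
\item \emph{Hamiltonicity}, by induction on the suffix length: each procedure visits every suffix word exactly once, with boundary handling at $\ell$ and small base cases.
\item \emph{Hamming-1}, which holds immediately because each output step performs a single flip.
\item \emph{Implementability with constant delay.} The procedure call stack must not be stored. Instead, the procedure currently being executed and the position to return to must be recoverable by a finite-state scan of the bits near the head. This works because the bits to the left of the active region encode the recursion, as in the BRGC, where the rightmost $1$ encodes the stack. The transitions between consecutive outputs then only move the head $O(1)$ cells, so the delay is a constant $B$.
\end{itemize}
Finally, halting is detected when the head reaches the left end in the terminal procedure.

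The main obstacle is designing the recursion so that both locality conditions hold at once. Consecutive flips must be at bounded distance, and the next action must be locally decidable from the tape and the state alone. I would start from a skew-tolerant code in the spirit of~\cite{sachimelfarb2025improved} and modify it until the "which procedure am I in" information is visible in a constant window around the head. I would then check the resulting transition table exhaustively for small $\ell$ before writing the inductive proof.
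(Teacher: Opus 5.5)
Your proposal is a search strategy rather than a proof, and the gap is the construction itself. An existence statement like this one is established by exhibiting a specific code together with a local rule that drives it. In your proposal this part is never supplied:
\begin{itemize}
\item The recursion equations are left open (``$\dots$'', ``the missing pieces would be filled in'').
\item The fallbacks (pairing bits, a pebble encoding, modifying the code of \cite{sachimelfarb2025improved}) are not carried out.
\item Hamiltonicity, Hamming-1 and constant delay are then ``verified'' for an object that is never defined.
\end{itemize}
The implementability step is also only asserted. With several mutually recursive procedure types and their mirrors, deciding from a constant window which procedure is active and where it returns is the real difficulty. The remark ``the bits to the left encode the recursion'' does not establish it.

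Moreover, the one concrete shape you propose cannot work. Suppose $F(i)$ flips bit $i$ and then runs $F(i+1)$. This exhausts the half-cube with the new value of bit $i$ and leaves the head near $\ell$. The $2^{\ell-i}-1$ remaining words all need bit $i$ flipped back. Since the head moves only $O(1)$ cells between outputs, walking back to $i$ forces you to re-produce words.

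More generally, inside $F(i)$ every flip of bit $i$ happens with the head near $i$. So the first block (before the first such flip), and every block between two such flips, starts and ends near $i$. Hence some sub-enumeration must both enter and exit near the left end of its region, and neither $F$ nor $R$ does that.

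The missing idea, which is how the paper proves the theorem via \cref{prp:t1-correct}, is a single traversal procedure that enters and exits on the same cell. The call stack is made implicit by keeping the head on the rightmost~$1$. The construction is as follows.
\begin{itemize}
\item Each node $n$ of the complete binary tree $B_{\ell-2}$, with root-to-node path $u_n$, gets a bottom word $u_n x_n 1 0^k$ and a top word $u_n\overline{x_n}10^k$. Here $x_n$ is the parity of the number of $0$'s in $u_n$.
\item The traversal outputs the bottom word, recurses into the $x_n$-child, then into the $\overline{x_n}$-child, and outputs the top word.
\item The word $0^\ell$ is produced first and $10^{\ell-1}$ last.
\end{itemize}

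This parity choice forces the first-visited child to have $x=1$ and the second to have $x=0$. As a result, all four junctions are single flips within distance~$2$ of the head:
\begin{itemize}
\item parent bottom to first-child bottom;
\item first-child top to second-child bottom;
\item second-child top to parent top;
\item leaf bottom to leaf top.
\end{itemize}
Every move is decided by an up/down bit and the symbols next to the head. The right neighbour ($0$ or $\eow$) distinguishes internal nodes from leaves. When going up, the left neighbour tells whether the first or the second child has just been finished.

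Hamiltonicity is then a routine induction on the claim that from $(\downarrow, u\,x\,1\,0^k)$ the machine enumerates the subtree and reaches $(\uparrow, u\,\overline{x}\,1\,0^k)$. In your suffix language, this is the recursion $B_{m+1} = 0^{m+1},\ 0\cdot\overleftarrow{B_m[1{:}]},\ 1\cdot B_m[1{:}],\ 10^m$ given in the paper's appendix. Both of its sub-blocks start and end near the left end of their region, which is exactly the procedure type your $F$/$R$ plan is missing.
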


The result is intuitively established by implementing a traversal of a complete
binary tree with the “even-odd
trick”~\cite{karaganis1968cube,sekanina1960ordering}, and refining it to make
it Hamming-1. It turns out that our machine implements
a code closely related to \cite[Construction A]{sachimelfarb2025improved} (with $k=3$), despite having no apparent relationship to the simple inductive definition
of~\cite{sachimelfarb2025improved}.
Thus, we can also see \cref{thm:tape-intro} as a construction showing that (a variant of) the code of \cite[Construction A]{sachimelfarb2025improved} can be implemented to use only constant delay and constant auxiliary memory.
We explain how the two codes relate in Appendix~\ref{app:hs24}.
We also show that our machine can implement counters by detailing the decrement
function, and discuss rank and unrank functions in Appendix~\ref{apx:rankunrank}.
We relate our implementation to works using the DAT model \cite{fredman1978observations,bose2010improved,rahman2010integer}, and show that it is competitive with the best counter implementations in the RAM model~\cite{elmasry2013place}.

Our second contribution is to study a 
computational model 
that we call the \emph{deque machine}, in line with other works studying machines
augmented with deques~\cite{ayers1985deque,petersen2001stacks}.
In this model, the word is stored in a double-ended queue, the machine can
read the first and last bits of the word, and it can perform push and pop operations on the left and right endpoints.
The motivation for this model is to ensure that the entire configuration of the
machine consists of the current word plus the finite state, in contrast with 
tape machines that arguably hide $\lceil \log \ell \rceil$ auxiliary memory 
bits %
in 
the position of the head.
Moreover, deque machines are effective models to implement push-pop edits on words, as studied in \cite{amarilli2023enumerating} for enumeration of regular languages.
Finally, allowing only endpoints modifications is natural in other contexts,
such as enumerating walks in a graph, which is a core task in modern graph
database query processing \cite{DeutschEtAl22}. Indeed, the task of enumerating
binary words studied in the present paper is a first step towards the efficient
enumeration of walks in graphs: it corresponds to walk enumeration for the specific graph that has one
vertex and two self-loops labeled 0 and 1.

It is known that we can indeed enumerate all binary words with constantly many endpoint modifications between each word, as follows from 
\emph{universal words} \cite{chung1992universal} or Hamiltonian paths in so-called \emph{Shuffle
Exchange networks} \cite{feldmann1996shuffle}.
So our second contribution is to design deque machines that can produce such codes while only using constant auxiliary memory:

\begin{theorem}
  \label{thm:deque-intro}
  There exists a Hamiltonian constant-delay deque machine.
\end{theorem}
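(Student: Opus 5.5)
The plan is to build the deque machine in two stages. First we construct a machine $\calD_0$ that enumerates all words with constant delay but is only \emph{prefix-Hamiltonian}. Then we repair the one thing it cannot do, namely detect when the enumeration is finished.

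\textbf{Stage 1: the basic traversal.} Think of the deque content as a cyclic word. A rotation (pop left, push right) costs constant time. So the machine can walk a de Bruijn-like / shuffle-exchange Hamiltonian path on $\{0,1\}^\ell$ where each step is a rotation, possibly combined with flipping the bit that moves across the end. Concretely, we would design $\calD_0$ to simulate a depth-first traversal of the complete binary tree of depth $\ell$, using the even-odd trick as in the tape machine of \cref{thm:tape-intro}.

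The key difficulty in the simulation is that the tape machine's head position is not available. We replace it by a rotation offset: the "head" is always at the deque boundary, and moving the head corresponds to rotating the word. A head move to the right is then one rotation. A head move to the left needs popping the right end and pushing it on the left, which is also $O(1)$.

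So any tape machine translates directly into a deque machine with the same delay. However, the deque contents are then a rotation of the tape word. The output is the word read from the deque's left end, and different rotations of the same tape word are different words.

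\textbf{Hamiltonicity.} We must therefore argue that the set of outputs, namely the tape words rotated by the head position, is still all of $\{0,1\}^\ell$ without repetition. I would instead choose the output only at moments when the rotation offset is canonical, or design the traversal so that at output time the offset is a bijective function of the word. A cleaner route is to find a traversal in which each output is a rotation of the previous word with a single end flip, i.e.\ a necklace/de Bruijn-style successor rule computable from the two end bits and a constant lookahead. Lookahead can be simulated by popping a constant number of symbols into the finite state and pushing them back.

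We would verify the bijection by showing that the successor function is a permutation of $\{0,1\}^\ell$ with a single cycle. The intended proof of the single-cycle property is the usual cycle-joining argument for pure cycling registers.

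\textbf{Stage 2 (main obstacle): halting.} A rotation-based successor rule forms a cycle, so the machine has no memory telling it when it has returned to $0^\ell$. Checking "the word is $0^\ell$" takes $\Theta(\ell)$ time. The fix is to amortize this test by spreading it across outputs.

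The machine carries, within its finite state, a flag "all bits seen since the last marker were 0". As rotation passes bits across the boundary, this flag is updated. We place the test so that the candidate final word is confirmed exactly when a full rotation of zeros has been observed. The delicate part is ensuring that this observation period (about $\ell$ steps) coincides with outputs of genuinely new words, and that the last $\ell$ words are themselves checked without duplicates.

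I expect to handle this by arranging that the final $\ell$ outputs of the code are $0^{\ell-i}1^{i}$-like words, or rotations of $0^{\ell-1}1$. These can be recognized online as they stream by, letting the machine halt right after $0^\ell$. Proving that this end-detection never triggers prematurely is the main technical step.
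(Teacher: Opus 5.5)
Your proposal has a genuine gap: neither stage is actually constructed, and the routes you sketch would not go through.

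\textbf{Stage 1.} You rightly note, as the paper does, that conjugating a tape machine at its head is not injective, but your proposed fixes are not available. A deque machine has no markers, so it cannot recognize a ``canonical'' rotation offset. Your ``cleaner route'' also fails. A rule that always rotates in one direction, possibly complements the bit crossing over, and is decided by a bounded window at the ends, is a permutation only if it has the shape $x_1x_2\cdots x_\ell\mapsto x_2\cdots x_\ell\,\bigl(x_1\oplus g(x_2,\ldots,x_\ell)\bigr)$. Such a map is the pure cycling register composed with one transposition per $1$ of $g$. Hence its number of cycles has the parity of (number of binary necklaces of length $\ell$) plus (weight of $g$). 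For odd $\ell$ the necklace count is even, because complementation pairs necklaces up. If $g$ ignores even one of its inputs, its weight is even. So for odd $\ell$, a single-cycle rule of this kind must look at every bit, and no bounded window suffices. The cycle-joining rules you allude to rely on a necklace-type test of the whole word, costing $\Theta(\ell)$ per step.

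What you are missing is an encoding in which the deque word itself names a tree node. The paper distinguishes the \emph{leftmost} $1$ and encodes the node with path $w$ as $0^k1w$. Then:
\begin{itemize}
\item going to a child is pop-left/push-right;
\item going to the parent is pop-right/push-$0$-left;
\item the sibling is a flip of the right end;
\item a leaf is recognized by a $1$ at the left end.
\end{itemize}
The even-odd traversal over this encoding gives the prefix-Hamiltonian machine $\calD_0$.

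\textbf{Stage 2.} This is the real content of the theorem, and your test is not implementable as described. There are no markers. A constant-state machine seeing only the ends cannot tell when a rotation has wrapped around: on $0^\ell$ every rotation looks the same, and pumping zero-blocks confuses a single $1$ with several. So it cannot certify that the final words are rotations of $0^{\ell-1}1$. You also explicitly leave premature triggering unproved, which is exactly the difficulty the theorem is about.

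The paper's key observation is that, in the $0^k1w$ encoding, the leaves $\lambda_0=10^{\ell-1}$ and $\lambda_1=110^{\ell-2}$ \emph{are} detectable: they are the first moments when $10$, resp.\ $11$, appears at the left end. The machine can therefore start at $\lambda_0$ and count half-traversals in its finite state. What remains is that the root is undetectable, and the stored parity silently flips when the run passes through it via $0^\ell$. The paper handles this in one of two ways:
\begin{itemize}
\item $\calD_1$ runs two full traversals with opposite parities on the tree of depth $\ell-2$. The current parity is also written as the rightmost deque bit, so the two traversals partition the outputs.
\item $\calD_2$ performs a lookahead down and back up the leftmost branch below each $1$-child, tracking heights modulo $4$. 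This lets the machine recognize the root, through the second visit of $\lambda_0$, and correct the offset before the final descent.
\end{itemize}
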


The first step to show this result is to notice that the tape machine from \cref{thm:tape-intro}
can be adapted 
into a constant delay deque machine that produces all binary words once but then continues to run indefinitely and produces duplicates.
As it turns out, the main challenge is to detect when the enumeration is complete, for which we present two alternative techniques which we hope can be of independent interest.

Our third contribution is to show that the model of deque machine is
``minimal'' in the sense that further restricting the edit operations makes it impossible to design Hamiltonian machines.
We first study \emph{queue
machines}, where insertion is only possible at the left endpoint. We remind that
it is indeed possible to visit all words of $\{0,1\}^\ell$ 
via left-push and
right-pop using the construction of universal words, but show that this
cannot be achieved by a queue machine (i.e., with constant auxiliary memory), 
even without the constant-delay requirement.
We next study
\emph{stack machines}, where we push and pop at one end only, and show that in a sense such machines also
cannot satisfy the constant-memory requirement.

\subparagraph{Paper outline.} 
In \cref{sec:tape}, we define our model of tape machines and construct one that proves \cref{thm:tape-intro}.
We do the same in \cref{sec:deque} for deque machines and \cref{thm:deque-intro}.
Finally, in \cref{sec:stackqueue} we study weaker computation models (queue machines and stack machines) and show that Hamiltonian machines of this kind do not exist.
We also provide implementations to run our tape and deque machines (See \cite{OurImplementation} and  \cref{app:implementation}).

\section{Tape Machines}
\label{sec:tape}
This section presents \emph{tape machines}, which store the current word on a tape.
We first define tape machines formally. Then, we explain how to construct
such machines that enumerate all binary words of length~$\ell$ in constant delay.
Last, we build a machine which is additionally
\emph{Hamming-1}, i.e., it only flips one bit between two consecutive words, proving \cref{thm:tape-intro}. 

\subparagraph{Machine model.}
A tape machine is intuitively a variant of linear bounded Turing machines~\cite{kuroda1964classes}
 where we additionally distinguish some states as \emph{output states}. Formally,
let $\bow$ and~$\eow$ be fresh symbols to be used as left and right markers.
A \emph{(deterministic) tape machine} $\calM = (Q, q_\ii, q_\ff, Q_\OO, \delta)$
consists of a set $Q$ of \emph{states},
an \emph{initial state} $q_\ii \in Q$,
a \emph{halting state} $q_\ff \in Q$,
a set $Q_\OO \subseteq Q$ of \emph{output states},
and a \emph{transition function} $\delta
\colon (Q \setminus \{q_\ff\}) \times \{0,1, \bow,\eow\} \to Q \times \{0, 1, \bow,
\eow\} \times \{-1, 0, +1\}$. The arguments to the function describe the current state and
the tape symbol read; and the return value describes the next state, the
tape symbol
written, and the motion of the head. We require that when $\calM$  reads
the beginning-of-word marker symbol~$\bow$ then it writes back~$\bow$ and moves to the
right, and likewise for~$\eow$ moving to the left; further, $\calM$ 
never writes the marker symbols otherwise. 
A \emph{configuration} of $\calM$ is a triple consisting of a state, the
contents of the tape, and the position of the head: for brevity we will write them as a
pair indicating the state and the tape contents with the head position
written using a red mark. 
We sometimes use a gray mark to indicate the position of the head in the previous configuration.

For a given word length $\ell\geq 1$, the machine $\calM$ starts
its execution in the initial configuration $c_\ell \coloneq (q_\ii, \bow \omark{0}0^{\ell-1} \eow)$.
The run of $\calM$ from $c_\ell$ is then the deterministic and 
potentially infinite sequence of configurations defined as expected: see Appendix~\ref{apx:semantics-tape} for details.
The only difference with Turing machines is the presence of output
states: each time $\calM$ visits an output state, it \emph{produces} in unit
time the word $w \in \{0,1\}^\ell$ obtained from the current tape by
discarding $\bow$ and $\eow$. (Note that the head position has no impact on the produced word~$w$.) 

\begin{toappendix}
\label{apx:fig-ham1}
  \subsection{Formal Semantics of Tape Machines}
  \label{apx:semantics-tape}
In this section, we formally define the semantics of tape machines.
Let $\calM = (Q, q_\ii, q_\ff, Q_\OO, \delta)$ be a tape machine.
For $w \in \bool^+$ a nonempty word, 
we write $\ell \coloneq |w|$ the length
of~$w$, and we write the bits of $w$ as $w[1], \ldots, w[\ell]$. 
A \emph{configuration}
of~$\calM$ on~$w$ is a triple $c = (q,p,w)$ with $q \in Q$, with $0 \leq p \leq
  |w|+1$, and with $w \in \bool^+$.
We now formally define a computation step of~$\calM$ on configuration $c = (q,p,w)$ by
formally defining the \emph{next configuration} $c'$ of $\calM$.
Let $b \coloneq w[p]$ if $1 \leq p \leq \ell$, and otherwise let $b \coloneq \bow$ if
$p=0$ and $b\coloneq \eow$ if $p=|w|+1$.
Let $(q',b',t)
\coloneq \delta(q,b)$. The \emph{next configuration} is $c' = (q', p+t, w')$,
where $w'$ is obtained from~$w$ by replacing the $p$-th symbol $b = w[p]$
by~$b'$ if $1 \leq p \leq \ell$ and setting $w' \coloneq w$ otherwise. Recall that
from our definition of tape machines, if $p = 0$ then necessarily $t = 1$ and
  $p+t = 1$ (and the ignored value $b'$ must necessarily be $\bow$), and likewise if $p = |w|+1$ then necessarily $t = -1$ and $p+t =
  |w|$ (and $b'$ must necessarily be $\eow$).

We will now formally define the \emph{run} of~$\calM$ for a length $\ell$. 
The \emph{initial configuration} of $\calM$ on~$w$ is $c_\ell \coloneq
(q_\ii,p_0,w_0)$ where $w_0 \coloneq 0^\ell$ is the initial word and where $p_0 \coloneq 1$ is the initial position of the head.
We set the first configuration $c_0$ of the run to be $c_0 \coloneq c_\ell$.
Now, at any step $t \in \NN$ where we reach configuration $c_t = (q_t, p_t, w_t)$, if $q_t$ is an output state ($q_t \in Q_\OO$) then the machine produces (in unit time) the current word~$w_t$. Further, if $q_t$ is the halting state ($q_t
= q_\ff$) then the machine halts, otherwise we define 
$c_{t+1}$ as the next configuration of~$\calM$ on~$c_t$.
This defines a sequence
$c_0, c_1, \ldots$ of configurations, which is infinite if the machine does not
halt.
\end{toappendix}

We study tape machines satisfying some requirements. 
First, a tape machine~$\calM$ is \emph{Hamiltonian} if, for
every length $\ell \geq 1$, 
the run of $\calM$ from $c_\ell$ terminates and produces
each word of $\bool^\ell$
exactly once. 
Note that we can always assume that $\ell > \alpha$ for arbitrary constants $\alpha$, because codes for length $\leq \alpha$ can be hardcoded as a subroutine of~$\calM$. 
Second, a tape machine $\calM$ is \emph{constant-delay} if there is a constant $B \in \NN$ (depending only on~$\calM$ but not on the word length~$\ell$)
that bounds the time before the first output, the delay between two consecutive outputs, and the time between the last output
and the end of the run.

\begin{figure}
    \subfloat[Complete binary tree for words of length $\ell=4$. Its depth
    is~$3=(\ell-1)$. Blue numbers indicate the enumeration order
    ($0000$ is produced first).]{\label{fig:tape-easy-tree}\begin{tikzpicture}%
    \let\hdist\relax%
    \let\vdist\relax%
    \let\vsep\relax%
    \let\yIII\relax%
    \let\yII\relax%
    \let\yI\relax%
    \newcommand{\upsign}{\triangle}
    \newcommand{\downsign}{\triangledown}
    \newlength{\hdist}\setlength{\hdist}{18mm}%
    \newlength{\vdist}\setlength{\vdist}{8mm}%
    \newlength{\vsep}\setlength{\vsep}{0.5mm}%
    \newlength{\yIII}\setlength{\yIII}{0mm}%
    \newlength{\yII}\setlength{\yII}{\dimexpr\vdist/2\relax}%
    \newlength{\yI}\setlength{\yI}{\dimexpr\vdist*3/2\relax}%
    \newcommand{\localrule}{\rule[-.2em]{0pt}{1.1em}}
    
    \path(0,\vdist*3.5) node[tape] (root) {\localrule\omark{1}000};
    \path (root.south) ++(0pt,-1pt) node[order, anchor=north] {\nth{2}};

    \foreach \i in {0,1} {
        \foreach \ii in {0,1} {
            \foreach \iii in {0,1} {
                \path (3*\hdist,\yIII) node[tape] (\i\ii\iii) {\localrule{\color{cborange}\i\ii\iii}\omark{1}};
                \path (\i\ii\iii.east)
                ++(1pt,0pt) node[order,anchor=west] {\nth{\the\numexpr\i*7+\ii*3+\iii+4\relax}}
                ;
                \global\advance \yIII by \dimexpr\vdist\relax
            }
            \path (2*\hdist,\yII) node[tape] (\i\ii) {\localrule{\color{cborange}\i\ii}\omark{1}0};
            \path (\i\ii.south) ++(0pt,-1pt) node[order, anchor=north] {\nth{\the\numexpr\i*7+\ii*3+3\relax}}
            ;
            \path[edge] (\i\ii) to node[near start,sloped,below] {0} (\i\ii0);
            \path[edge] (\i\ii) to node[near start,sloped,above] {1}(\i\ii1);
            \global\advance \yII by \dimexpr\vdist*2\relax
        }
        \path (1*\hdist,\yI) node[tape] (\i) {\localrule{\color{cborange}\i}\omark{1}00};
        \path (\i.north)
            ++(0pt,1pt) node[order,anchor=south] {\nth{\the\numexpr\i*7+9\relax}}
            ;
        \path[edge] (\i) to node[near start,sloped,below] {0} (\i0);
        \path[edge] (\i) to node[near start,sloped,above] {1} (\i1);
        \global\advance \yI by \dimexpr\vdist*4\relax
    }
    \path[edge] (root) to node[near start,sloped,below] {0} (0);
    \path[edge] (root) to node[near start,sloped,above] {1} (1);

\end{tikzpicture}%
}
    \hfill
\subfloat[Definition of the tape machine~$\calT_0$. 
  Red marks indicate the position of the head,
    gray marks the previous position of the head, and $p$ stands for the parity of the depth of the node.]{\label{fig:tape-easy-table}
    \begin{minipage}[c]{.44\linewidth}
        \noindent\input{table/tape-easy}
    \end{minipage}}
    \caption{Illustrations for \cref{prp:t0-correct}}
    \label{fig:tape-easy}
\end{figure}

\subparagraph{Constructing Hamiltonian constant-delay machines.}
It is easy to see that Hamiltonian tape machines exist if we forgo the constant-delay requirement.
For instance, we can follow the usual construction of the BRGC: start from the word $0^\ell$, flip the rightmost bit on odd steps, and flip the bit that is to the left of the rightmost~$1$ on even steps.
This can be implemented by a tape machine whose state stores the parity of the step number. In odd steps, the machine moves the head to the rightmost symbol and flips it. In even steps, the machine moves the head to the rightmost symbol, scans leftwards until it finds the symbol to flip, and flips it.
However, this tape machine has a delay that is not constant, i.e., it is not
independent from the word length~$\ell$.
Our challenge in this section is to build a tape machine that obeys the constant-delay requirement: this is not obvious because it implies in particular that 
tape modifications must be entirely guided by the symbols
at constant distance from the head and based on the current state (which has constant size).
We next present a tape machine that meets these requirements.

\begin{proposition}
  \label{prp:t0-correct}%
  The tape machine $\calT_0$ of \cref{fig:tape-easy-table} is constant-delay and Hamiltonian.
\end{proposition}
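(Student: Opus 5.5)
The plan is to read $\calT_0$ as a machine that traverses the complete binary tree of \cref{fig:tape-easy-tree} with the even-odd trick, and to prove both properties from one invariant. \Cref{fig:tape-easy-tree} suggests the following encoding. A tree node $u \in \bool^{d}$, with $0 \le d \le \ell-1$, is represented by the configuration whose tape is $u\,1\,0^{\ell-d-1}$ and whose head is on position $d+1$, the position of the last~$1$. The state stores the parity $p$ of $d$ and a constant amount of control information.

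First I show that this encoding is a bijection between the $2^\ell-1$ tree nodes and $\bool^\ell\setminus\{0^\ell\}$: decode a nonzero word by locating its last $1$. The word $0^\ell$ is the initial tape, and it is produced once by the first transitions before the machine writes the $1$ of the root. It then suffices to prove the following claim: from the configuration encoding the root, the machine visits the configurations encoding the nodes in even-odd order, entering an output state exactly once per node, and then halts. The even-odd order lists nodes at even depth in preorder and nodes at odd depth in postorder; this matches the blue numbers, e.g.\ $1000$, $0010$, $0001$, $0011$, $0110$, \dots, and the last node produced is $1$, i.e.\ $0100\ldots$ is the penultimate one, since with $\ell=4$ node~$1$ (word $1100$) is the $16$th and last.

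Next I recall the classical fact that two consecutive nodes in the even-odd order are at distance at most $3$ in the tree. Each move between them is a composition of elementary moves on the tape, all within distance $1$ of the head.
\begin{itemize}
\item Going to child $ub$ (from $u10\cdots$ to $ub10\cdots$) writes $b$ at position $d+1$, moves right, and writes $1$.
\item Going to the parent of $u'b$ writes $0$ at position $d+1$, moves left, and writes $1$ at position $d$. The bit $b$ that was overwritten tells which child we came from, and this must be remembered in the state.
\end{itemize}
Leaves are detected by reading $\eow$ just right of the head, and the root by reading $\bow$ just left. With these local moves I can check each row of \cref{fig:tape-easy-table} against a case analysis of the even-odd successor, using $p$, the child we came from, and whether the node is a leaf. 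The cases to consider are:
\begin{itemize}
\item even-depth internal node: the successor is its $0$-child's first descendant in the order;
\item leaf at odd depth: go up and then to the sibling subtree;
\item leaf at even depth;
\item odd-depth internal node after its subtree: go to its next sibling or to its grandparent area.
\end{itemize}

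By induction on the number of outputs, each case sends the encoding of a node to the encoding of its successor in $O(1)$ steps, preserving the invariant (tape, head and parity). This gives constant delay with an explicit bound $B$, equal to the maximal number of transitions in a case (three tree moves plus their lookahead). It also gives Hamiltonicity via the bijection. Termination is the final case: after producing node $1$, the machine attempts to go up, reads $\bow$, and enters $\qh$.

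The main obstacle is the middle step, i.e.\ matching the concrete table against the abstract successor function without gaps. The delicate cases are the transitions that climb two levels: the last leaf of a subtree going to an odd-depth ancestor, and the crossing from the $0$-subtree of the root to its $1$-subtree. Here the state must correctly record whether we returned from a $0$- or a $1$-child, and small values of $\ell$ make the leaf and root checks overlap. Per the paper's remark, I would handle $\ell \le \alpha$ separately (hardcoded), and treat generic $\ell$ by the case analysis above.
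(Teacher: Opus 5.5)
Your plan is essentially the paper's proof: the same encoding of a nonzero word $u10^k$ as the node $u$ of $B_{\ell-1}$ with the head on the rightmost $1$, the same even-odd traversal, and a verification of the table by induction. The paper phrases that induction a bit more cleanly than your successor case analysis, as the claim that each node is visited exactly twice, first in $(\downarrow,p)$ and then in $(\uparrow,p)$ with $p$ the depth parity; the output states $(\downarrow,\Even)$ and $(\uparrow,\Odd)$ then give the even-odd order.

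One fix: the statement is about the fixed machine $\calT_0$, so you cannot hardcode small $\ell$. You do not need to, because the leaf test reads only the symbol right of the head and the root/$0$-child/$1$-child test reads only the symbol left of it, so the argument is uniform for all $\ell\ge 1$. Also, $0100\cdots$ is not the penultimate word: for $\ell=4$ it is the ninth of sixteen.
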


This is a first step towards proving \cref{thm:tape-intro}, which we present for pedagogical reasons: it will help understand the Hamming-1 construction presented later, and also introduces key ideas used throughout the paper.
For the sake of clarity, here and in other proofs, we describe the operation of a tape machine by
specifying transitions that read and write multiple symbols in one go: for
some constant $\rho \in \NN$ (independent from the word length), the machine reads and writes the symbols left
and right of the head in a window of size~$\rho$ centered on the head. This can be encoded in the
formal model presented above up to increasing the state space and the delay by
some function of~$\rho$: we can simply hardcode a constant-length sequence of
transitions that reads all $\rho$ symbols around the head and then overwrites them.
We will use $\rho=3$ in the proof below.

\begin{proof}[Proof of \protect\cref{prp:t0-correct}]

Let $\ell\geq 1$ be the word length. 
When the run begins, we produce the word $0^\ell$, and toggle the first
  bit to change the tape contents to $\bow \omark{1}0^{\ell-1} \eow$. We
  then explain how we enumerate the remaining words, called
  \emph{non-zero words} in what follows.

Let us construct a bijection between the non-zero words and the nodes of the complete ordered binary tree $B_{\ell-1}$ of depth
$\ell-1$; we will then explain our enumeration
in terms of the tree.
By distinguishing the rightmost~$1$, each non-zero word can be written as~$w10^k$ for some binary word~$w$ and some number~$k$ of $0$'s.
  Hence the non-zero words are in bijection with the nodes of $B_{\ell-1}$,
  that we illustrate in \cref{fig:tape-easy-tree} for $\ell=4$ (ignoring for now the blue numbers and red marks).
  The root of $B_{\ell-1}$ corresponds to the word $10^{\ell-1}$. Now, for $k\geq 1$, if
$w10^{k}$ is the word corresponding to an internal node $n$, then the 0-child of~$n$ corresponds to the word $w010^{k-1}$ and its 1-child corresponds to the word $w110^{k-1}$. 
  In summary, the word~$w 1
0^k$, where~$0 \leq k\leq \ell-1$ and $|w|+1+k=\ell$, encodes the node in 
  $B_{\ell-1}$
that is reachable from the root by the path~$w$ (reading $w$ from left to
right; $w$ is shown in orange in \cref{fig:tape-easy-tree}).

  We enumerate the nodes of $B_{\ell-1}$ in a way that ensures
  that consecutive nodes are at distance at most $3$ from each
other. We do this via 
the “even-odd trick”~\cite{karaganis1968cube,sekanina1960ordering} (see also~\cite{uno2003two}):
  we do a depth-first traversal of $B_{\ell-1}$
and produce the nodes in prefix or postfix order, depending on the parity of
  the depth (the root being at even depth). Formally, starting
with the root, we do the following on every node $n$ with children $n_0$
  and $n_1$:
  \begin{itemize}
    \item if the depth of $n$
is even then we produce $n$, then recurse into $n_0$, then recurse into~$n_1$;
      \item if the depth is odd then we first recurse into $n_0$, then
        recurse into~$n_1$, and then produce~$n$.
  \end{itemize}
  The order in which nodes are enumerated by this
procedure for $B_{4-1}$
is shown in blue in \cref{fig:tape-easy-tree} 
(starting at index $2$ because we produced the word $0^4$ at the beginning).

The correctness of the machine hinges on the following claim, which is easily proved by induction using the transition table of $\calT_0$ given in \cref{fig:tape-easy}.

\begin{claim}
\label{clm:t0}
    For every word~$w$,~$|w|<\ell$, $\calT_0$ visits exactly two configurations of
    the form $(q,\bow w\omark{1}0^k\eow$): the first time~$q=(\downarrow,p)$, and
    the second time~$q=(\uparrow,p)$, where $p = |w| \bmod 2$.
\end{claim}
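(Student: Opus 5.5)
We do not see the transition table of $\calT_0$, so the plan reconstructs the transitions the table must contain from the intended traversal and proves the claim relative to them. A configuration $(q,\bow w\omark{1}0^k\eow)$ is \emph{at node $w$}: the tape encodes the node of $B_{\ell-1}$ reached by path $w$, and the head sits on its distinguished (rightmost) $1$. States have the form $(d,p)$ with $d\in\{\downarrow,\uparrow\}$ and $p$ the depth parity. The claim then says that the run performs a depth-first traversal of $B_{\ell-1}$, entering each node once in a $\downarrow$ state and leaving it once in an $\uparrow$ state.

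We prove, by induction on the height $\ell-1-|w|$ of the subtree rooted at $w$, the following \emph{subtree lemma}. If $\calT_0$ is in configuration $((\downarrow,p),\bow w\omark{1}0^k\eow)$ with $p=|w|\bmod 2$, then it reaches $((\uparrow,p),\bow w\omark{1}0^k\eow)$ after a finite run. During this run:
\begin{itemize}
\item the head stays within the positions $\geq |w|+1$, and positions $\le |w|$ are never modified;
\item the configurations at nodes $wv10^{k'}$ are exactly one $\downarrow$ and one $\uparrow$ visit for each $v$ with $|wv|<\ell$, in the claimed parities;
\item no other configuration has the form $(q,\bow u\omark{1}0^{k'}\eow)$ with $q\in\{\downarrow,\uparrow\}$-states.
\end{itemize}

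The local moves to read off the table are as follows.
\begin{itemize}
\item \emph{Leaf} ($k=0$, head next to $\eow$): $(\downarrow,p)$ turns directly into $(\uparrow,p)$.
\item \emph{Descend to the 0-child}: from $w\omark{1}0^k$ with $k\ge1$, write $w0\omark{1}0^{k-1}$ and move to $(\downarrow,1-p)$.
\item \emph{Pass to the 1-child}: from $(\uparrow,1-p)$ at $w0\omark{1}0^{k-1}$, a constant-window edit with $\rho=3$ reads the $0$ on the left as meaning ``I am a 0-child''. It writes $w1\omark{1}0^{k-1}$ and moves to $(\downarrow,1-p)$.
\item \emph{Return to the parent}: from $(\uparrow,1-p)$ at $w1\omark{1}0^{k-1}$, the $1$ on the left means ``I am a 1-child''. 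The machine restores $w\omark{1}0^k$ and moves to $(\uparrow,p)$.
\end{itemize}
Output states are entered at the $\downarrow$ visit when $p$ is even and at the $\uparrow$ visit when $p$ is odd, which realizes the even--odd trick.

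The inductive step composes these moves: descend, apply the induction hypothesis to the 0-child, switch to the 1-child, apply the induction hypothesis again, then return. Disjointness of the two child subtrees, together with the fact that the prefix $w$ is never touched, gives the ``exactly two'' count.

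The main obstacle is the return step. The machine must recognize whether the current node is a 0-child or a 1-child, and whether it is the root, using only a constant window. The distinguishing bit is the one immediately left of the rightmost $1$, so it is local. At the root the left neighbour is $\bow$, and in that case the machine halts instead of ascending. Careful case analysis against the table's rows is needed here, including the rows involving $\bow$ and $\eow$.

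To finish, apply the subtree lemma to the root $w=\varepsilon$, which is entered right after the initial toggle. This gives the claim for all $w$ with $|w|<\ell$. Every transition used has bounded length, which also yields constant delay for \cref{prp:t0-correct}.
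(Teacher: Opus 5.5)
Your proposal is correct and follows the paper's approach. The paper proves this claim by a routine induction over $B_{\ell-1}$ using the transition table of $\calT_0$: its four middle transitions (descend at a non-leaf, turn around at a leaf, move from a $0$-child to its sibling, return from a $1$-child to its parent), together with the initial and halting transitions, are exactly the moves you reconstructed, and your subtree lemma is the natural strengthened induction hypothesis. One aside outside the claim overreaches: bounded-length transitions alone do not give constant delay for \cref{prp:t0-correct} (the BRGC tape machine also has them), so you additionally need that each non-output state $(\downarrow,\Odd)$ or $(\uparrow,\Even)$ is immediately followed by an output state or by halting.
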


 We now give a high level explanation of how~$\calT_0$ performs the traversal on $B_{\ell-1}$.
It maintains its head on the
rightmost $1$, and memorizes with its state the parity of (the depth of the node
corresponding to) the current word, and whether it is currently going up or
down.
Then, one may check that the following elementary tests/operations can be achieved by local reads/writes
in a window of size $\rho=3$ centered on the head.
\begin{enumerate}
  \item Check whether the current
tape encodes a leaf or not, by looking at the right symbol (which is always either $\eow$ or $0$).
\item Check whether the current
tape encodes the root (the symbol $b$ left of the head is $\bow$), a $0$-child ($b$ is~$0$) or a $1$-child ($b$ is~$1$). 
\item
Go from one node to its $0$-child: $w\omark{1}0^k\leadsto w\dimmedomark{0}\omark{1}0^{k-1}$, i.e., set to 0 the bit under
the head, move the head right (the old head is shown in gray), and set to 1 the bit under the head.
\item Go from a $0$-child to its sibling $1$-child:
$w0\omark{1}0^{k}\leadsto w1\omark{1}0^k$,
i.e., flip the bit left of the head.
\item Go from a $1$-child to its parent:  $w1\omark{1}0^k\leadsto w\omark{1}\dimmedomark{0} 0^k$. 
\end{enumerate}
Formally, the tape machine~$\calT_0$ described in 
\cref{fig:tape-easy-table}
starts by outputting~$0^{\ell}$ and reaching the configuration
$\bow \omark{1} 0^{\ell-1}\eow$ via the first transition in the table:
its tape corresponds to the root of~$B_{\ell-1}$ and the state is $(\downstate,\Even)$.
Then,~$\calT_0$ implements the above elementary tests/operations using the four middle transitions, in order to perform a full
traversal of~$B_{\ell-1}$.
This traversal ends when the configuration~$\bow \omark{1} 0^{\ell-1}\eow$ is reached again, this time with state~$(\upstate,\Even)$, and then we take the last transition and end the run.
\end{proof}

By analyzing the sequence of transitions between two words produced by $\calT_0$, one can observe that the maximal distance between flipped bits is two. Further, the
code changes at most~3 bits from one word produced to the next (between the
\nth{9} word and the \nth{10} for instance). 
Because $3 > 1$, this code is
technically not a Gray code but what some authors call a \emph{quasi-Gray
code}~\cite{bose2010improved}. 
We now show how to fix this and build a \emph{Hamming-1} machine.

\begin{toappendix}
\begin{figure}
    \begin{tikzpicture}
    \tikzset{
      edge/.style={-latex, black, draw, thick},
      tape/.style={rounded rectangle, draw=black,inner sep=2pt,minimum size=4pt,outer sep=2pt},
      order/.style={blue,font=\footnotesize, inner sep=0pt, outer sep=0pt},
    }

\node[tape] (01000) at (0, 7.75) {{\color{orange!75!black}}0\omark{1}000};
\node[order, below of = 01000, node distance = 0.45 cm] {\nth{2}} ;
\node[tape] (01100) at (4.0, 3.75) {{\color{orange!75!black}0}1\omark{1}00};
\node[order, below of = 01100, node distance = 0.45 cm] {\nth{3}} ;
\path[edge, shorten >= 0.8 cm] (01000.east) to node[near start, sloped, above] {0} ($(01100.north) + (0,-0.1)$);
\node[tape] (01110) at (8.0, 1.75) {{\color{orange!75!black}01}1\omark{1}0};
\node[order, below of = 01110, node distance = 0.45 cm] {\nth{4}} ;
\path[edge, shorten >= 0.8 cm] (01100.east) to node[near start, sloped, above] {1} ($(01110.north) + (0,-0.1)$);
\node[tape] (01111) at (12.0, 0.75) {{\color{orange!75!black}011}1\omark{1}};
\node[order, below of = 01111, node distance = 0.45 cm] {\nth{5}} ;
\path[edge, shorten >= 0.8 cm] (01110.east) to node[near start, sloped, above] {1} ($(01111.north) + (0,-0.1)$);
\node[tape] (01101) at (12.0, 1.25) {{\color{orange!75!black}011}0\omark{1}};
\node[order, above of = 01101, node distance = 0.45 cm] {\nth{6}} ;
\node[tape] (01001) at (12.0, 2.75) {{\color{orange!75!black}010}0\omark{1}};
\node[order, below of = 01001, node distance = 0.45 cm] {\nth{7}} ;
\node[tape] (01011) at (12.0, 3.25) {{\color{orange!75!black}010}1\omark{1}};
\node[order, above of = 01011, node distance = 0.45 cm] {\nth{8}} ;
\node[tape] (01010) at (8.0, 2.25) {{\color{orange!75!black}01}0\omark{1}0};
\node[order, above of = 01010, node distance = 0.45 cm] {\nth{9}} ;
\path[edge, shorten > = 0.8cm] (01010.east) to node[near start, sloped, above] {0} ($(01011.south) + (0,0.1)$);
\node[tape] (00010) at (8.0, 5.75) {{\color{orange!75!black}00}0\omark{1}0};
\node[order, below of = 00010, node distance = 0.45 cm] {\nth{10}} ;
\node[tape] (00011) at (12.0, 4.75) {{\color{orange!75!black}000}1\omark{1}};
\node[order, below of = 00011, node distance = 0.45 cm] {\nth{11}} ;
\path[edge, shorten >= 0.8 cm] (00010.east) to node[near start, sloped, above] {0} ($(00011.north) + (0,-0.1)$);
\node[tape] (00001) at (12.0, 5.25) {{\color{orange!75!black}000}0\omark{1}};
\node[order, above of = 00001, node distance = 0.45 cm] {\nth{12}} ;
\node[tape] (00101) at (12.0, 6.75) {{\color{orange!75!black}001}0\omark{1}};
\node[order, below of = 00101, node distance = 0.45 cm] {\nth{13}} ;
\node[tape] (00111) at (12.0, 7.25) {{\color{orange!75!black}001}1\omark{1}};
\node[order, above of = 00111, node distance = 0.45 cm] {\nth{14}} ;
\node[tape] (00110) at (8.0, 6.25) {{\color{orange!75!black}00}1\omark{1}0};
\node[order, above of = 00110, node distance = 0.45 cm] {\nth{15}} ;
\path[edge, shorten > = 0.8cm] (00110.east) to node[near start, sloped, above] {1} ($(00111.south) + (0,0.1)$);
\node[tape] (00100) at (4.0, 4.25) {{\color{orange!75!black}0}0\omark{1}00};
\node[order, above of = 00100, node distance = 0.45 cm] {\nth{16}} ;
\path[edge, shorten > = 0.8cm] (00100.east) to node[near start, sloped, above] {0} ($(00110.south) + (0,0.1)$);
\node[tape] (10100) at (4.0, 11.75) {{\color{orange!75!black}1}0\omark{1}00};
\node[order, below of = 10100, node distance = 0.45 cm] {\nth{17}} ;
\node[tape] (10110) at (8.0, 9.75) {{\color{orange!75!black}10}1\omark{1}0};
\node[order, below of = 10110, node distance = 0.45 cm] {\nth{18}} ;
\path[edge, shorten >= 0.8 cm] (10100.east) to node[near start, sloped, above] {0} ($(10110.north) + (0,-0.1)$);
\node[tape] (10111) at (12.0, 8.75) {{\color{orange!75!black}101}1\omark{1}};
\node[order, below of = 10111, node distance = 0.45 cm] {\nth{19}} ;
\path[edge, shorten >= 0.8 cm] (10110.east) to node[near start, sloped, above] {1} ($(10111.north) + (0,-0.1)$);
\node[tape] (10101) at (12.0, 9.25) {{\color{orange!75!black}101}0\omark{1}};
\node[order, above of = 10101, node distance = 0.45 cm] {\nth{20}} ;
\node[tape] (10001) at (12.0, 10.75) {{\color{orange!75!black}100}0\omark{1}};
\node[order, below of = 10001, node distance = 0.45 cm] {\nth{21}} ;
\node[tape] (10011) at (12.0, 11.25) {{\color{orange!75!black}100}1\omark{1}};
\node[order, above of = 10011, node distance = 0.45 cm] {\nth{22}} ;
\node[tape] (10010) at (8.0, 10.25) {{\color{orange!75!black}10}0\omark{1}0};
\node[order, above of = 10010, node distance = 0.45 cm] {\nth{23}} ;
\path[edge, shorten > = 0.8cm] (10010.east) to node[near start, sloped, above] {0} ($(10011.south) + (0,0.1)$);
\node[tape] (11010) at (8.0, 13.75) {{\color{orange!75!black}11}0\omark{1}0};
\node[order, below of = 11010, node distance = 0.45 cm] {\nth{24}} ;
\node[tape] (11011) at (12.0, 12.75) {{\color{orange!75!black}110}1\omark{1}};
\node[order, below of = 11011, node distance = 0.45 cm] {\nth{25}} ;
\path[edge, shorten >= 0.8 cm] (11010.east) to node[near start, sloped, above] {0} ($(11011.north) + (0,-0.1)$);
\node[tape] (11001) at (12.0, 13.25) {{\color{orange!75!black}110}0\omark{1}};
\node[order, above of = 11001, node distance = 0.45 cm] {\nth{26}} ;
\node[tape] (11101) at (12.0, 14.75) {{\color{orange!75!black}111}0\omark{1}};
\node[order, below of = 11101, node distance = 0.45 cm] {\nth{27}} ;
\node[tape] (11111) at (12.0, 15.25) {{\color{orange!75!black}111}1\omark{1}};
\node[order, above of = 11111, node distance = 0.45 cm] {\nth{28}} ;
\node[tape] (11110) at (8.0, 14.25) {{\color{orange!75!black}11}1\omark{1}0};
\node[order, above of = 11110, node distance = 0.45 cm] {\nth{29}} ;
\path[edge, shorten > = 0.8cm] (11110.east) to node[near start, sloped, above] {1} ($(11111.south) + (0,0.1)$);
\node[tape] (11100) at (4.0, 12.25) {{\color{orange!75!black}1}1\omark{1}00};
\node[order, above of = 11100, node distance = 0.45 cm] {\nth{30}} ;
\path[edge, shorten > = 0.8cm] (11100.east) to node[near start, sloped, above] {1} ($(11110.south) + (0,0.1)$);
\node[tape] (11000) at (0.0, 8.25) {{\color{orange!75!black}}1\omark{1}000};
\node[order, above of = 11000, node distance = 0.45 cm] {\nth{31}} ;
\path[edge, shorten > = 0.8cm] (11000.east) to node[near start, sloped, above] {1} ($(11100.south) + (0,0.1)$);
\end{tikzpicture}
    \caption{Example run of $\calT_1$ on words of length 5, where the bottom word and top word of each node is written at the top and bottom of the node. Pay attention to the fact that the 0-children and 1-children of nodes are not always ordered in the same way, to match the order in which $\calT_1$ enumerates the words}
    \label{fig:tape-ham1-tree}
\end{figure}
\end{toappendix}

\begin{figure}[t]
    \newlength{\leftfigsize}\setlength{\leftfigsize}{.39\linewidth}
    \newlength{\rightfigsize}\setlength{\rightfigsize}{\dimexpr\linewidth-\leftfigsize-\columnsep\relax}
  \subfloat[Definition of the tape machine~$\calT_1$]{\label{fig:tape-ham1-table}%
        \input{table/tape-ham1}%
    }%
    \hfill%
    \subfloat[Visual help for inductions]{\label{fig:tape-ham1-induction}%
    \adjustbox{width=\rightfigsize}{\begin{tikzpicture}%
    \input{figure/style}%
    \let\hdist\relax\let\vdist\relax%
    \newlength{\hdist}\setlength{\hdist}{30mm}%
    \newlength{\vdist}\setlength{\vdist}{30mm}%
    \newcommand{\localrule}{\rule[-.3em]{0pt}{1em}}%
    \coordinate (root) 
        node[tape,anchor=north] (root-south) {\localrule${\color{darkorange}u}x\omark{1}{\color{black}00^{k}}$}
        node[tape,anchor=south] (root-north) {\localrule${\color{darkorange}u}\overline{x}\omark{1}{\color{black}\dimmedomark{0}0^{k}}$};

    \path[edge,dashed] (root-south.west) ++(-8mm,0) to (root-south.west);
    \path[edge,dashed,shorten >=0] (root-north.west) to ++(-8mm,0);
    
    \path (root) ++ (\hdist,-.5\vdist) coordinate (child1) 
        node[tape,anchor=north] (child1-south) {\localrule${\color{darkorange}u x} \dimmedomark{1} \omark{1} {\color{black}0^k}$}
        node[tape,anchor=south] (child1-north) {\localrule${\color{darkorange} u x}0 \omark{1}{\color{black}0^k}$};
        
    \path (root) ++ (\hdist,.5\vdist) coordinate (child2) 
        node[tape,anchor=north] (child2-south) {\localrule${\color{darkorange}u \overline{x}}0\omark{1}{\color{black}0^k}$}
        node[tape,anchor=south] (child2-north) {\localrule${\color{darkorange}u\overline{x}}1\omark{1}{\color{black}0^k}$};

    \path[draw] ($0.5*(child1-south.east)+0.5*(child1-north.east)$) -- ++(.75*\hdist,10mm) coordinate (tree1-ne) -- ++(0,-20mm) coordinate (tree1-se) --cycle;
    \path[draw] ($0.5*(child2-south.east)+0.5*(child2-north.east)$) -- ++(.75*\hdist,10mm) coordinate (tree2-ne) -- ++(0,-20mm) coordinate (tree2-se) --cycle;
    
    \path[edge] (root-south) to[out=0, in=180,looseness=1.7] (child1-south);
    \path[edge,dashed,rounded corners=10pt] (child1-south.east)
        to 
        ($(tree1-se)+(6.6pt,-10pt)$)
        to
            node[midway, rotate=90,anchor=north,outer sep=5pt,inner sep=0pt] {\begin{tabular}{c}Use induction\\[-.3ex]hypothesis\end{tabular}} 
        ($(tree1-ne)+(6.6pt,10pt)$)
        to (child1-north.east);
    \path[edge] (child1-north.north west) to[bend left] (child2-south.south west);
    \path[edge,dashed,rounded corners=10pt] (child2-south.east) 
        to ($(tree2-se)+(6.6pt,-10pt)$)
        to
            node[midway, rotate=90,anchor=north,outer sep=5pt,inner sep=0pt] {\begin{tabular}{c}Use induction\\[-.3ex]hypothesis\end{tabular}} 
        ($(tree2-ne)+(6.6pt,10pt)$)
        to (child2-north.east);
    \path[edge] (child2-north) to[out=180, in=0,looseness=1.7] (root-north);
\end{tikzpicture}%
}%
    }
    \caption{Illustrations for \cref{prp:t1-correct}}
    \label{fig:tape}
\end{figure}

\subparagraph{Hamming-1 machines.}
A tape machine $\calM$ is \emph{Hamming-1} if,
for any $\ell \geq 1$, 
for any two consecutive words $w$ and $w'$ that are
produced
in the run of~$\calM$ on $0^\ell$,
then the Hamming distance between $w$ and $w'$ is~$1$ (i.e., there is a single bit that changes).

Whenever a Gray code is implemented by a constant-delay tape machine,
then the maximal distance between flipped bits is bounded by
some constant $k$, a property called \emph{$k$-skew-tolerance} \cite{sachimelfarb2025improved}.
Most Gray codes, such as the BRGC, are not skew-tolerant, 
and the existence of 1-skew-tolerant Gray code is an open question %
\cite{Mutze23-gray}.
However, \cite{sachimelfarb2025improved} give examples of $k$-skew-tolerant Gray codes for~$k=3$ and~$k=2$; the code for~$k=3$, called \emph{construction A}, is of particular interest to us.
Indeed, we now show how to construct a machine that is Hamming-1, Hamiltonian and constant-delay, which turns out to implement a Gray code that is closely related to construction A; see Appendix~\ref{app:hs24} for details.

\begin{proposition}
  \label{prp:t1-correct}
  The tape machine~$\calT_1$, defined in \cref{fig:tape-ham1-table},
  is constant-delay, Hamiltonian and Hamming-1.
\end{proposition}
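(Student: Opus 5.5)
We prove \cref{prp:t1-correct} the same way as \cref{prp:t0-correct}: describe a traversal of the complete binary tree in terms of tape configurations, and prove it by induction on subtree height. The key change is that a node no longer corresponds to a single word. It corresponds to a pair of words, a \emph{bottom word} and a \emph{top word}, which differ in exactly one bit. In \cref{fig:tape-ham1-induction}, the node with path $u$ and last bit $x$ has bottom word $ux10^k$ and top word $u\overline{x}10^k$. The machine produces the bottom word when it enters the subtree of the node and the top word when it leaves it. As \cref{fig:tape-ham1-tree} shows, the order of the $0$- and $1$-child may be swapped depending on the node.

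The core of the proof is an invariant in the style of \cref{clm:t0}, proved by induction on the height $k$ of the subtree (the number of trailing zeros). It reads roughly as follows. If $\calT_1$ is in configuration $(q_{\downarrow,p},\bow u x\omark{1}0^k\eow)$, where $q_{\downarrow,p}$ is the descending state for parity $p=|ux|\bmod 2$, then it performs the following in constant delay between outputs:
\begin{itemize}
\item it produces, each exactly once, all words of the form $u\,y\,z$ with $y\in\{x,\overline{x}\}$ and $z$ ranging over the $2^{k+1}$ words of length $k+1$;
\item consecutive outputs are at Hamming distance~$1$;
\item it finally reaches $(q_{\uparrow,p},\bow u\overline{x}\omark{1}0^k\eow)$, and the last word produced is $u\overline{x}10^k$ (or the first produced word is $ux10^k$, depending on parity).
\end{itemize}
The base case $k=0$ is a leaf: the head is on the last cell, and the machine outputs $ux1$, flips $x$, and outputs $u\overline{x}1$, which we check directly in the table. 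For the inductive step we follow the figure. We go from $ux\omark{1}0^{k}$ to the first child, $ux1\omark{1}0^{k-1}$, with a single flip. We apply the induction hypothesis, which returns with the child's last bit complemented, giving $ux0\omark{1}0^{k-1}$. We cross to the sibling with a single flip of $x$, apply the induction hypothesis again, and return to the parent by clearing the bit under the head. Each connecting move changes one bit and touches a window of size~$3$.

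The subtle part is deciding where the outputs go, so that exactly one bit changes between any two outputs, not merely per connecting step. The parity bit of the state, exactly as in the even-odd trick, decides whether a node's bottom or top word is produced on entry or on exit. I expect this to be the main obstacle: the invariant must be stated with the right parity-dependent choice of which word is output first and last, so that the concatenation parent--child$_1$--child$_2$--parent never contains two unoutput flips in a row and never outputs a word twice. I would first guess the invariant from \cref{fig:tape-ham1-tree}, then verify each transition of \cref{fig:tape-ham1-table} against the four cases given by parity and by whether the node is a $0$- or $1$-child.

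Hamiltonicity then follows by applying the invariant at the root, after handling the initial output of $0^\ell$ and the final halting transition as in \cref{prp:t0-correct}. Together with $0^\ell$, the outputs cover all of $\bool^\ell$ exactly once. Constant delay holds because each connecting move is a bounded number of transitions, and every step between outputs is one of these moves.
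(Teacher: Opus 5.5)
Your skeleton matches the paper's proof: each node owns a bottom word $ux10^k$ and a top word $u\overline{x}10^k$, you induct on the height $k$, and each of the three connecting moves ($ux10^k \to ux110^{k-1}$, then $ux010^{k-1}\to u\overline{x}010^{k-1}$, then $u\overline{x}110^{k-1}\to u\overline{x}10^k$) flips one bit. There are two genuine problems, though, both in the invariant, which you yourself identify as the crux.

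First, the invariant is false as written. The subtree entered at $(\downarrow,\ \bow ux10^k\eow)$ has $2^{k+1}-1$ nodes, so it produces $2^{k+2}-2$ words, namely the words $uyz$ with $y\in\bool$ and $z\in\bool^{k+1}\setminus\{0^{k+1}\}$. Your set also contains $uy0^{k+1}$, which your own base case already contradicts, since a leaf outputs only $ux1$ and $u\overline{x}1$. This error propagates to the final count. At the root ($u=\varepsilon$, $x=0$, $k=\ell-2$) the traversal yields every word except $0^\ell$ and $10^{\ell-1}$. So besides the initial $0^\ell$ you must account for $10^{\ell-1}$, which $\calT_1$ produces after reaching $(\uparrow,\ \bow 110^{\ell-2}\eow)$, just before halting. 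With your count, $0^\ell$ would be produced twice.

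Second, the step you call the main obstacle rests on a misreading of $\calT_1$. The machine does not use the even-odd trick, and its traversal states are just $\downarrow$ and $\uparrow$, with no parity component; the depth parity $|ux|\bmod 2$ plays no role. The correct invariant has no case distinction. For every $u$ and $x$, starting from $(\downarrow,\ \bow ux10^k\eow)$, the first word produced is $ux10^k$, the last is $u\overline{x}10^k$, and the machine ends in $(\uparrow,\ \bow u\overline{x}10^k\eow)$. As your first paragraph already says, the bottom word is always output on entry and the top word on exit, because those are the tape contents at those moments.

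What $\calT_0$ kept in its state now lives on the tape. The bit $x$ fixes which child is visited first: the $x$-child, whose own bit is $1$. In state $\uparrow$, the bit left of the head tells the machine whether it is returning from the first child (the bit is $0$) or from the second (the bit is $1$). Each move flips one bit and lands on a word not yet visited, so every configuration of the traversal is in an output state. Hamming-1 and constant delay are then immediate, and there are no non-output flips to control. With both corrections, your induction over arbitrary $u$ and $x$ is exactly the paper's argument.
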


\begin{proof}

  Let $\ell \geq 2$ be the word length.
  We ignore for now the two words $0^\ell$ and $10^{\ell-1}$ that are enumerated first and last, respectively. All other words are of the form
  $\hlpath{u}\hlpari{x}10^{k}$ for $x\in \{0,1\}$ and $u\in \{0,1\}^{\ell-k-2}$, by distinguishing the rightmost
  $1$ and the bit $x$ that precedes it.
  As it was the case for~$\calT_0$, the machine~$\calT_1$ maintains its head on the rightmost~$1$, as can be seen by inspecting its transition table (\cref{fig:tape-ham1-table}).
  We use the notation $\bar{0} = 1$ and $\bar{1} = 0$.   %

  Consider the complete ordered binary tree $B_{\ell-2}$ of depth $\ell-2$.
  To every node
  $n$ of depth~$d$, we associate two words of length $\ell$ : the \emph{bottom word} $w_n^\bot$ and the \emph{top
  word} $w_n^\top$. 
  They have the form $w_n^\bot = \hlpath{u_n} \hlpari{x_n} \omark{1}0^{k}$ and $w_n^\top = \hlpath{u_n}\hlpari{\overline{x_n}}\omark{1}0^{k}$ where~$d+2+k=\ell$, and where $x_n\in \{0,1\},u_n \in \{0,1\}^d$ are defined as follows:
  $\hlpath{u_n}$ is the path from the root to~$n$ and~$\hlpari{x_n}$ is the parity of the number of~$0$'s in~$u_n$.
  For instance, the bottom word associated with the root $r$ is $w_r^\bot =
  \hlpari{0}\omark{1}0^{\ell-2}$ (i.e., $u_r$ is the empty word and~$x_r=0$);
  and among~$\{\hlpath{011}\hlpari{0}\omark{1},\hlpath{011}\hlpari{1}\omark{1}\}$, that are associated with the same node~$n$, $\hlpath{011}\hlpari{1}\omark{1}$ is the bottom word (i.e., $x_n=1$ and $u_n=011$).
   We show $B_{\ell-2}$ in Figure~\ref{fig:tape-ham1-tree} in Appendix~\ref{apx:fig-ham1}, with each tree node indicating the top word and bottom word one above the other.
  
  Let~$n$ be any internal node having bottom word $w_n^\bot = \hlpath{u_n} \hlpari{x_n} \omark{1} 0^{k+1}$ and top word $w_n^\top = \hlpath{u_n} \hlpari{\overline{x_n}} \omark{1} 0^{k+1}$.
  We let~$c$ denote the $x_n$-child of~$n$ and~$\bar{c}$ its $\overline{x_n}$-child.
  This implies that we have $u_c=u_n x_n$, we have $u_{\bar{c}}=u_n \overline{x_n}$, and we always have $x_c = 1$ and~$x_{\bar c} = 0$.

  Then, it may be observed that the following word pairs are at Hamming distance~$1$:
  \begin{enumerate}
      \item $w_c^\bot$ and $w_n^\bot$, that is the \textbf{bottom} word of~$c$ and the \textbf{bottom} word of its parent;
      \item $w_c^\top$ and $w_{\bar{c}}^{\bot}$; that is the \textbf{top} word of~$c$ and the \textbf{bottom} word of its sibling;
      \item $w_{\bar{c}}^\top$ and $w_n^\top$, that is the \textbf{top} word of~$\bar{c}$ and the \textbf{top} word of its parent;
      \item $w_{\lambda}^\bot$ and $w_{\lambda}^\top$,  for every \textbf{leaf} $\lambda$ (remark that this is true for every node more generally).
  \end{enumerate}
  These claims are the basis of a traversal of $B_{\ell-2}$, inductively defined as follows.
  Starting from the root, and letting $n$ be the
  current node, we first produce the word $w_n^\bot$, then recurse into the $x_n$-child of $n$, then recurse into the $\overline{x_n}$-child of $n$, then produce the word
  $w_n^\top$. 
  Using a simple induction (item 4 for the base case and 1-3 for the induction;  see also \cref{fig:tape-ham1-induction}),
  one can show that this traversal induces a Gray code, when the missing words $0^\ell$ and $10^{\ell-1}$ are produced first and last, as we do. 

  The tape machine $\calT_1$ then produces exactly this code, as follows from:

  \begin{claim*}
      For every node $n$ of $B_{\ell-2}$ with bottom word $\hlpath{u_n}\hlpari{x_n}\omark{1}0^k$, if we start
  the machine on configuration $(\downarrow,~\bow \hlpath{u_n}\hlpari{x_n} \omark{1} 0^k \eow)$ then it will produce the words
  of the above traversal on the subtree rooted at $n$, and it will reach configuration 
  $(\uparrow,~\bow \hlpath{u_n} \hlpari{\overline{x_n}} \omark{1} 0^k \eow)$.
  \end{claim*}
  It is routine to prove
  this claim by induction on $B_{\ell-2}$ (see
  \cref{fig:tape-ham1-induction}), inspecting the transitions of $\calT_1$.
  All that remains is to check the behavior on
  the root $r$ of $B_{\ell-2}$:
  \begin{itemize}
      \item When $\calT_1$ starts, it first produces the word $0^\ell$ and then it reaches the configuration $(\downstate,~\bow \hlpari{0}\omark{1}0^{\ell-2} \eow)$, that is with tape~$w_{r}^\bot$ with its head on the rightmost~$1$. %
      \item From the configuration $(\upstate,~\bow \hlpari{1}\omark{1}0^{\ell-2} \eow)$,
      that is with tape~$w_r^\top$ with its head on the rightmost~$1$, the machine $\calT_1$ produces $10^{\ell-1}$ and then halts.\qedhere
  \end{itemize}
\end{proof}

\begin{toappendix}

\subsection{Link with Construction A of Sac Himelfarb and Schwartz \texorpdfstring{\cite{sachimelfarb2025improved}}{(2025)}}
\label{app:hs24}

As explained in the main text, the code implemented by tape machine~$\calT_1$ (\cref{fig:tape-ham1-table}) is related with the skew-resistant code with~$k=3$ given in \cite{sachimelfarb2025improved}.
To that end, let us give inductive definitions of that code ($A_\ell$, below) 
and an inductive definition of the code implemented by~$\calT_1$ ($B_\ell$, below). Let us start with a few definitions.
For a word $w=w[1]\cdots w[m] \in \{0,1\}^m$ let us write
$\mirror(w)$ the word $w[m] \cdots w[1]$.
Let $X = \begin{bmatrix} w_1 & \cdots & w_p\end{bmatrix}^\intercal$ be a column vector of words.
We write $\mirror(X)=\begin{bmatrix} \mirror(w_1) & \cdots &
  \mirror(w_p)\end{bmatrix}^\intercal$ the vector where each word has been mirrored,
and we write $\overleftarrow{X} = \begin{bmatrix} w_p & \cdots &
w_1\end{bmatrix}^\intercal$ the \emph{reversed} vector.
We also let $X[1{:}] = \begin{bmatrix} w_2 & \cdots &
w_p\end{bmatrix}^\intercal$ be the vector obtained by removing the first word from~$X$.
Let~$A_{\ell}$ and~$B_{\ell}$ be the families of vectors inductively defined as follows.
\begin{equation*}
    A_1 = \sparenv*{\begin{array}{cc} 0\\1 \end{array}}
      \quad\text{and}\quad
    A_{\ell+1} =
    \sparenv*{
  \begin{array}{c|c}
    0\cdots 0 & 0 \\
    \hline
     & 1 \\
    A_{\ell} & \svdots \\
     & 1 \\
     \hline
     & 0 \\
    \smash{\overleftarrow{A_{\ell}[1{:}]}} & \svdots \\
     & 0 
    \end{array}
    }
    \qquad\quad
    B_1 = \sparenv*{
      \begin{array}{c}
        0\\
        1
      \end{array}}
      \quad\text{and}\quad
    B_{\ell+1} =
    \sparenv*{
    \begin{array}{c|c}
      0 & 0 \cdots 0 \\
      \hline
      0 & \\
      \svdots & \smash{\overleftarrow{B_\ell[1{:}]}} \\
      0 & \\
      \hline
      1 & \\
      \svdots & \smash{B_\ell[1{:}]} \\
      1 & \\
      \hline
      1 & 0\cdots 0
    \end{array}
    }
\end{equation*}

As can be seen above, the recursive definitions of~$A_\ell$ and~$B_\ell$ are very similar, and one may then show the following by induction.
\begin{lemma}
  \label{lem:SH24-equiv}
  For every~$\ell\geq 1$, we have $B_{\ell} =
    \sparenv*{
    \begin{array}{c}
      0^\ell \\[3pt]
      \mirror\left(\overleftarrow{A_{\ell}[1{:}]}\right)
    \end{array}
    }$.
\end{lemma}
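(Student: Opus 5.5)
The plan is a straightforward induction on~$\ell$, using only a few algebraic identities on the operators $\mirror$, $\overleftarrow{\cdot}$ and $[1{:}]$. First I would record these identities, each immediate from the definitions:
\begin{itemize}
  \item $\mirror$ and reversal commute: $\mirror(\overleftarrow{X})=\overleftarrow{\mirror(X)}$.
  \item Both operators are involutions.
  \item Both act blockwise on vertically stacked vectors; reversal also swaps the order of the blocks.
  \item Appending a column behaves as $\mirror(X\cdot b) = b\cdot \mirror(X)$, where $X\cdot b$ denotes appending the bit $b$ to every word of $X$.
\end{itemize}
I would also note that $A_\ell$ has $2^\ell$ rows and that its first row is $0^\ell$, which follows directly from the inductive definition.

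For the base case $\ell=1$, we have $A_1[1{:}]=[1]$, so the right-hand side is $[0;1]=B_1$.

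For the inductive step, assume the statement for~$\ell$. Since the first row of $A_{\ell+1}$ is $0^{\ell+1}$, the definition gives $A_{\ell+1}[1{:}] = \bigl[A_\ell\cdot 1\,;\ \overleftarrow{A_\ell[1{:}]}\cdot 0\bigr]$. Reversing swaps the two blocks and reverses each, giving $\bigl[A_\ell[1{:}]\cdot 0\,;\ \overleftarrow{A_\ell}\cdot 1\bigr]$. Mirroring then moves the appended bit to the front: $\bigl[0\cdot\mirror(A_\ell[1{:}])\,;\ 1\cdot\mirror(\overleftarrow{A_\ell})\bigr]$.

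Next I would use $\overleftarrow{A_\ell}=\bigl[\overleftarrow{A_\ell[1{:}]}\,;\ 0^\ell\bigr]$, so that $\mirror(\overleftarrow{A_\ell}) = \bigl[\mirror(\overleftarrow{A_\ell[1{:}]})\,;\ 0^\ell\bigr]$. The induction hypothesis says exactly that $\mirror(\overleftarrow{A_\ell[1{:}]}) = B_\ell[1{:}]$. Reversing both sides and commuting $\mirror$ with reversal also gives $\mirror(A_\ell[1{:}])=\overleftarrow{B_\ell[1{:}]}$.

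Substituting these, the right-hand side for $\ell+1$ becomes
\[
\bigl[\,0^{\ell+1}\,;\ 0\cdot\overleftarrow{B_\ell[1{:}]}\,;\ 1\cdot B_\ell[1{:}]\,;\ 1\,0^{\ell}\,\bigr],
\]
which is precisely the inductive definition of $B_{\ell+1}$.

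The only real obstacle is bookkeeping. The first row $0^\ell$ must be removed before reversing, and reversal must be applied both to the order of the blocks and within each block. The appended column must end up in front after mirroring, which is why the new bit is a prefix in $B$ but a suffix in $A$. I would write the blockwise identities explicitly once at the start so the inductive step reduces to a short chain of equalities.
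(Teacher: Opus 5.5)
Your proof is correct and takes the same route as the paper, which only states that the lemma follows by induction on~$\ell$. Your inductive step (removing the first row $0^{\ell+1}$, reversing blockwise, applying $\mirror$ to move the appended bit to the front, and using the induction hypothesis together with the commutation of $\mirror$ and reversal) carries out the computation the paper leaves implicit.
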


Now, let us show that the code~$B_\ell$ is precisely the one implemented by our tape machine~$\calT_1$.

\begin{proposition}
  \label{prp:tape-inductive}
  For $\ell\geq 2$, the machine~$\calT_1$ produces the code $B_{\ell}$.
\end{proposition}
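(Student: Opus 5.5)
The plan is to reduce the statement to the tree traversal already described in the proof of \cref{prp:t1-correct}, and then to check by induction on~$\ell$ that this traversal satisfies the recursive definition of~$B_\ell$. By that proof (and the claim inside it), for $\ell\geq 2$ the machine~$\calT_1$ produces $0^\ell$, then the sequence $S_\ell(r)$, then $10^{\ell-1}$. Here, for a node~$n$ of the tree for length~$\ell$, $S_\ell(n)$ is defined recursively as: the word $w_n^\bot$, then $S_\ell$ of the $x_n$-child, then $S_\ell$ of the $\overline{x_n}$-child, then the word $w_n^\top$. So it suffices to show that $B_\ell = [\,0^\ell;\ S_\ell(r);\ 10^{\ell-1}\,]$ for all $\ell\geq 2$. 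The base case $\ell=2$ is a direct comparison: $B_2 = [00;01;11;10]$, while $w_r^\bot=01$ and $w_r^\top=11$.

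The key tool is a \emph{generalised traversal} $S(v,x)$, indexed by a path $v$ and a bit $x$. It is defined by the same recursion as~$S_\ell$, but with $w^\bot = v x 1 0^k$, $w^\top = v\overline{x}10^k$, the $x$-child (path $vx$, bit $x'$) visited first and the $\overline{x}$-child visited second. The child bits are computed by the usual parity rule relative to the start of~$v$. I will prove two auxiliary facts by induction on the height of the subtree.

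First, \emph{prefixing}: for a node of the tree for length~$\ell+1$ whose path is $bv$, its bit is the parity of the zeros in~$bv$. For $b=1$ this equals the bit of node~$v$ in the length-$\ell$ tree, so $S_{\ell+1}(\text{1-child of } r) = 1\cdot S_\ell(r)$, with the letter prepended to every word. For $b=0$ every bit is complemented.

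Second, \emph{complementing all bits reverses the traversal}: if every $x$ in the subtree is replaced by~$\overline{x}$, then bottom and top words swap and the two children are visited in the opposite order. An induction on height therefore shows that the resulting sequence is exactly the reverse of the original. Hence $S_{\ell+1}(\text{0-child of } r) = 0\cdot\overleftarrow{S_\ell(r)}$.

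For the inductive step, the root~$r$ of the length-$(\ell+1)$ tree has $x_r=0$, $w_r^\bot = 010^{\ell-1} = 0\cdot 10^{\ell-1}$ and $w_r^\top = 110^{\ell-1} = 1\cdot 10^{\ell-1}$. The output of~$\calT_1$ is therefore
\[
0^{\ell+1},\quad 0\cdot 10^{\ell-1},\quad 0\cdot\overleftarrow{S_\ell(r)},\quad 1\cdot S_\ell(r),\quad 1\cdot 10^{\ell-1},\quad 10^{\ell}.
\]
By the induction hypothesis, $B_\ell[1{:}] = [\,S_\ell(r);\ 10^{\ell-1}\,]$. So the second and third blocks together form $0\cdot\overleftarrow{B_\ell[1{:}]}$, and the fourth and fifth blocks together form $1\cdot B_\ell[1{:}]$. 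This is exactly the recursive definition of~$B_{\ell+1}$.

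The main obstacle is stating the complement-reverses lemma cleanly. One must make precise that complementing the root bit propagates to every descendant: the parity of zeros shifts uniformly because each path starts with the extra~$0$. One must also check that the leaf case, where $w^\bot$ and $w^\top$ simply swap, and the internal case align exactly with the child order in the definition of~$S$. Everything else is bookkeeping.
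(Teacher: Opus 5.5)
Your proof is correct, but it is organised differently from the paper's. The paper does not pass through the abstract traversal of \cref{prp:t1-correct}. Instead it states its inductive claim directly about runs of $\calT_1$. From $(\downarrow,\bow ux\omark{1}0^k\eow)$, for an \emph{arbitrary} prefix $u$ and an \emph{arbitrary} bit $x$, the machine reaches $(\uparrow,\bow u\bar{x}\omark{1}0^k\eow)$. In between it produces $u\,B_{k+2}[1{:}{-1}]$ if $x=0$ and $u\,\overleftarrow{B_{k+2}[1{:}{-1}]}$ if $x=1$. (The paper prints $B_{k+1}$, but the word lengths force $B_{k+2}$.) This is proved by induction on $k$ from the transitions, and then specialised to $u=\varepsilon$, $x=0$, $k=\ell-2$.

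So the paper inducts on the height of an arbitrary subtree and gets reversal for free as the $x=1$ case. You instead induct on $\ell$ at the root and isolate reversal as a separate lemma (complementing every bit reverses a traversal). You combine it with the prefixing observation about the $0$- and $1$-subtrees of the length-$(\ell+1)$ tree.

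Your route is modular. It reuses the machine-to-traversal claim already established, so all new work is purely combinatorial, and it matches the first-letter recursion defining $B_{\ell+1}$ exactly. The paper's route is self-contained at the machine level and avoids the parity bookkeeping you flag as the main obstacle. Since $\calT_1$ only applies the local rule ``the $x$-child gets bit $1$, the $\overline{x}$-child gets bit $0$'', quantifying over every $x$ means the parity interpretation of $x$ is never used.

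Finally, your generalised traversal $S(v,x)$ is described loosely (``parity rule relative to the start of $v$''), but you do not need it. State the complement lemma for an arbitrary assignment of bits to tree nodes, which makes its induction on height immediate. The only parity fact you then need is that the extra leading $0$ complements every bit in the $0$-subtree.
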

\begin{proof}
For $X = \begin{bmatrix} w_1 & \cdots & w_p\end{bmatrix}^\intercal$ a column
vector of binary words with $p\geq 3$, we write $X[1:-1] = \begin{bmatrix}
w_2 & \cdots & w_{p-1}\end{bmatrix}^\intercal$ the same vector where we have
removed the first and last words. Moreover, for $u$ a word we write $uX = \begin{bmatrix} u w_1 & \cdots & u w_p\end{bmatrix}$.
We claim the following:
  for any word $u$ and $x\in \{0,1\}$ and $k\geq 0$,  if we start executing
  the machine $\calT_1$ in configuration $(\downarrow,~\bow u x \omark{1} 0^k \eow)$, the machine will reach the configuration $(\uparrow,~\bow u \bar{x}
  \omark{1} 0^k \eow)$ and between these two configurations it will produce either the sequence $u B_{k+1}[1:-1]$
  if~$x=0$, or the sequence $ u \overleftarrow{B_{k+1}[1:-1]}$ if~$x=1$. This
  property can tediously be shown by induction on $k$, referring to
the tree from \cref{fig:tape-ham1-tree,fig:tape-ham1-induction}.
  But then, applying it to
  $k=\ell-2$ gives the desired result: indeed, when the machine starts in the initial state on tape content $\bow
  \omark{0} 0^{\ell-1} \eow$, it will first produce $0^\ell$ (which is the first entry of $B_\ell$) and move to $\bow 0\omark{1}0^{\ell-2} \eow$ on state
  $\downarrow$, then by this property it will generate $B_\ell[1:-1]$ and reach
  $\bow 1 \omark{1} 0^{\ell-2} \eow$, after which it will produce $10^{\ell-1}$ (which is the
  last entry of $B_\ell$) and halt.
\end{proof}

\begin{figure}[h]
    \centering
\begin{tabular}{c@{\hskip\dimexpr\tabcolsep/2\relax}c@{}c@{}ccc@{\hskip\dimexpr\tabcolsep/2\relax}c@{}c@{}cl}
    \cline{1-10}
    \multicolumn{10}{l}{States: $\{\qi,\upstate,\downstate,\qh\}$}
    \\
    \multicolumn{10}{l}{$\text{Initial state~} \qi, \text{~halting state~} \qh$} \\ %
    \multicolumn{10}{l}{Output states: $\{\qi,\upstate,\downstate\}$} \\
    \cline{1-10}
    \multicolumn{10}{l}{Transition table:}
    \\
    $\qi$ &$\thincdots$ &$0\omark{0}\eow\ $& & 
        $\leadsto$ & 
        $\downstate$&$\thincdots$&$0\omark{\color{cborange}1}\eow$\ &
    \\
    $\downstate$ &$\thincdots$ &$0\omark{1}\eow\ $& & 
        $\leadsto$ & 
        $\downstate$&$\thincdots$&$\ \omark{\color{cborange}1}\dimmedomark{1}\eow$&
    \\[5pt]
    $\downstate$ & $\thincdots$&$0\omark{1}xz$&$\thincdots$ & 
        $\leadsto$ & 
        $\downstate$&  $\thincdots$&$\omark{\color{cborange}1}\dimmedomark{1}xz$&$\thincdots$  
    \\ 
      $\downstate$ &&$\bow\omark{1}xz $&$\thincdots$& 
      $\leadsto$ &$\upstate$ && $\bow\omark{1}{\color{cborange}\bar{x}}z$&$\thincdots$  
      \\ 
      $\upstate$ & $\thincdots$&$y\omark{1}0x$&$\thincdots$ & $\leadsto$ & $\downstate$&  $\thincdots$&$y\omark{1}0{\color{cborange}\bar{x}}$&$\thincdots$  
      \\ 
      $\upstate$ & $\thincdots$&$y\omark{1}1x$&$\thincdots$ & $\leadsto$ & $\upstate$ &  $\thincdots$&$ y\dimmedomark{\color{cborange}0}\omark{1}x$&$\thincdots$  
      \\[5pt]
      $\upstate$ &$\thincdots$&$0\omark{1}0\eow$ && $\leadsto$ & $\qh$ &$\thincdots$&$0{\color{cborange}0}\omark{0}\eow$ &
      \\
      \cline{1-10}
      \multicolumn{10}{r@{}}{\footnotesize where $x\in\{0,1\}, y\in\{0,\bow\}, z\in\{0,1,\eow\}$}
    \end{tabular}

    \caption{Definition of the tape machine~$\calT_2$, which enumerates the code~$A$ from \cite{sachimelfarb2025improved}}\label{fig:tape-hs24-table}%
\end{figure}

Unsurprisingly, the code $A$ can also be
implemented in an Hamiltonian constant-delay tape-machine, as stated below.

\begin{proposition}
The tape machine~$\calT_2$, defined in \cref{fig:tape-hs24-table},
  is constant-delay, Hamiltonian, Hamming-1 and implements the code~$A$.
\end{proposition}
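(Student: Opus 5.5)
The plan is to prove the three easy properties directly from the transition table of \cref{fig:tape-hs24-table}, and to get the main one, that $\calT_2$ produces exactly $A_\ell$, by induction on the recursive definition of $A_\ell$.

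\emph{Constant delay.} Every non-halting state ($\qi$, $\upstate$, $\downstate$) is an output state, and each row of the table is a macro-transition reading and writing a window of size $\rho\le 4$. So at most a constant number of elementary steps separate consecutive outputs.

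\emph{Hamming-1.} Each row other than the halting one changes exactly one tape bit, shown in orange in the table:
\begin{itemize}
  \item $\downstate$ on $0\omark{1}$ writes a single $1$ and moves left;
  \item $\downstate$ on $\bow\omark{1}xz$ flips $x$;
  \item $\upstate$ on $y\omark{1}0x$ flips $x$;
  \item $\upstate$ on $y\omark{1}1x$ writes a $0$.
\end{itemize}
The first row goes from $0^\ell$ to $0^{\ell-1}1$. Hence consecutive outputs differ in exactly one bit.

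\emph{Hamiltonicity.} This follows once we show that $\calT_2$ outputs the list $A_\ell$. By \cref{lem:SH24-equiv}, $A_\ell$ and $B_\ell$ contain the same words, namely $0^\ell$ together with the mirror images of the words of $A_\ell[1{:}]$. By \cref{prp:tape-inductive} and \cref{prp:t1-correct}, $B_\ell$ lists every word of $\bool^\ell$ exactly once. So $A_\ell$ is duplicate-free and exhaustive. It then remains to check that the run halts right after the last word of $A_\ell$, via the $\qh$ row.

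\emph{Main step: $\calT_2$ produces $A_\ell$.} Unfolding the recursion, every word of $A_\ell$ other than $0^\ell$ has the form $v\,1\,0^{j}$. In $A_{m+1}$, the bit added at each level is the last column. The machine keeps its head on the leftmost $1$ of the "active suffix", and its state records whether the suffix is being traversed in forward order (the $A_m\cdot 1$ block, state $\downstate$) or in reversed order (the $\overleftarrow{A_m[1{:}]}\cdot 0$ block, state $\upstate$).

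I would state and prove by induction on $m$ a claim of the following shape, patterned on the claim in the proof of \cref{prp:t1-correct}. Started in state $\downstate$ with tape $\bow 0^{\ell-m-1}\omark{1}s\eow$, where $1s$ is the first nonzero word of the relevant block, the machine outputs $0^{\ell-m-1}$ concatenated with the block $A_{m+1}[1{:}]$, in order. It ends in state $\upstate$ with its head on the same cell and the tape holding the block's last word. Symmetrically, started in state $\upstate$, it outputs the reversed block and returns to state $\downstate$.

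The inductive step should match the transitions as follows:
\begin{itemize}
  \item $\downstate\ 0\omark{1}\cdots \leadsto \omark{1}1\cdots$ is the descent that prepends a new leading $1$;
  \item the $\bow$ rows and the $\upstate$ rows that flip $x$ two cells to the right are the "turn" between the $A_m$ block and its reversed copy;
  \item $\upstate\ y\omark{1}1x\leadsto y0\omark{1}x$ is the ascent.
\end{itemize}
The base case and the final step use the first row and the halting row $0\omark{1}0\eow \leadsto \qh$.

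\emph{Expected obstacle.} The hard part is pinning down the exact invariant: which suffix the head delimits, and how the $\upstate$/$\downstate$ roles swap between the forward and reversed copies. Once the right statement is found, verifying each case against the table is routine.

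If a direct invariant proves elusive, the fallback is to compare with $\calT_1$. By \cref{lem:SH24-equiv}, $A_\ell$ is, up to its first word, the mirror of the reverse of $B_\ell$. So I would exhibit a step-by-step correspondence between the run of $\calT_2$ and the reversed, mirrored run of $\calT_1$ from \cref{prp:tape-inductive}, checking the rows of the two tables against each other.
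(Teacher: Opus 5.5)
The Hamming-1 check is fine. So is reducing Hamiltonicity to ``$\calT_2$ outputs $A_\ell$''; a direct induction on the definition shows that $A_\ell$ lists $\{0,1\}^\ell$ once each, starting with $0^\ell$, without going through $B_\ell$. The gap is in the main step, which carries all the content: the invariant you propose is false. Number the rows of the table $1$ to $7$ from top to bottom, and (see the last paragraph) start the head on the last cell.

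\emph{The state does not track forward versus reversed blocks.} For $\ell=3$ the run outputs $000,001,011,111,101,100,110,010$ in states $\qi,\downarrow,\downarrow,\downarrow,\uparrow,\downarrow,\uparrow,\uparrow$. So $101\in A_2\cdot 1$ is output in state $\uparrow$, and $100\in\overleftarrow{A_2[1{:}]}\cdot 0$ in state $\downarrow$.

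\emph{A sub-run does not keep a prefix $0^{\ell-m-1}$ fixed.} In state $\downarrow$, rows 2--3 move the head \emph{left} and overwrite leading zeros: from $\bow 0\omark{1}1\eow$ the run continues with $111,101,\dots$. Your starting tape is not even well defined, since the first word of $A_{m+1}[1{:}]$ is $0^m1$, which is not of the form $1s$. What a sub-run keeps fixed is a \emph{suffix}. Whether it runs forward or backward is encoded on the tape, by the bit right of the head, not in the state.

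An invariant that works, by induction on $m\ge 2$. Let $p\in\{0,1\}^{\ell-m}$, let $x$ be the parity of the number of $0$'s in $p$, and let $A_m[2{:}]$ be $A_m$ without its first two words $0^m$ and $0^{m-1}1$. Started in state $\downarrow$ on $\bow 0^{m-2}\omark{1}\,\overline{x}\,p\eow$, the machine outputs $A_m[2{:}]\cdot p$ if $x=0$, and $\overleftarrow{A_m[2{:}]}\cdot p$ if $x=1$. It then reaches state $\uparrow$ on $\bow 0^{m-2}\omark{1}\,x\,p\eow$.
\begin{itemize}
\item For $m=2$ this is row 4.
\item For $m\ge 3$, row 3 enters the call with suffix $\overline{x}p$, which always has parity $0$. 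On its return, row 5 turns $\overline{x}$ into $x$ and enters the call with suffix $xp$, which always has parity $1$. Row 6 then returns.
\item The identity $A_m[2{:}]=[0^{m-2}11;\ A_{m-1}[2{:}]\cdot 1;\ \overleftarrow{A_{m-1}[2{:}]}\cdot 0;\ 0^{m-2}10]$, read off the recursive definition, closes both cases.
\end{itemize}
Rows 1--2 bring the run to the case $m=\ell$ with $p$ empty, which gives exactly $A_\ell$; row 7 then halts. The same induction shows that some row always applies, which your delay argument silently needs. These start and end words are the mirrors of the top and bottom words of $\calT_1$'s tree, traversed in reverse, so your fallback points the right way. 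As written, though, it is only a pointer, not an argument.

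Two smaller points. First, row 1 needs the head on the last cell, whereas the model starts the head on cell $1$. Since the second word of $A_\ell$ is $0^{\ell-1}1$, no machine started on cell $1$ can output $A_\ell$ with delay independent of $\ell$. So your delay claim, and the statement itself, require the convention that $\calT_2$ starts with its head on the last cell, and you should state it. Second, for $\ell\le 2$ the table gets stuck: for $\ell=2$ the run ends in $\uparrow$ on $\bow\omark{1}0\eow$, and row 7 needs a $0$ left of the head. These lengths must be hardcoded.
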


\end{toappendix}
\subparagraph{Connections to the DAT model.}
Our results relate to known algorithms
and lower bounds on the efficient implementation of quasi-Gray codes in the
\emph{Decision Assignment Tree} (DAT) model or bit-probe model
\cite{fredman1978observations,bose2010improved,brodal2014integer,rahman2010integer,raskin2017linear}.
In this model, the read and write operations performed on the memory are specified
following a decision tree. However, the DAT model is a black-box model which
does not restrict the computation that the machine performs, or the position in the word of the bits that are read and written. By contrast, in our model of tape machines, the next reads and writes must be decided by following a fixed transition table
(independent from the current word length), and they must happen close to where the head is located. For this reason, the known algorithms for
(quasi-)Gray codes in the DAT model do not imply the existence of tape machines that satisfy our requirements.
Conversely, translating tape machines to the DAT model does not improve on the state of the art in that setting, because our tape machines implicitly store the position of their head: in the DAT model this
amounts to $\lceil \log \ell \rceil$ extra bits that can be arbitrarily read and written at each step. 
These extra bits are why tape machines do not contradict the known
$\Omega(\log \ell)$ lower bound on the number of bits read at each step in the DAT model (shown in Fredman~\cite{fredman1978observations} for Gray codes and claimed by Raskin~\cite{raskin2017linear} for quasi-Gray codes). 

\subparagraph{Decrements and connection to counters in the RAM model.}
Our results also relate to a line of work that has studied the efficient implementation of counters in the RAM model,
where algorithms are run on a machine featuring memory
words of logarithmic size. In this model, \cite{elmasry2013place} presents a counter
that can perform increments and decrements in $O(1)$ time and using $O(1)$
additional memory words; this answered an open problem stated by
Demaine~\cite{demaine}. The tape machines that we presented 
can easily be implemented 
as a RAM algorithm, where we store the current word in a table of $\ell$ bits and keep one auxiliary
word of $\lceil \log \ell \rceil$ bits for the head position and a constant number of extra bits for the state. The increment operations are specified by the machines $\calT_0$ and $\calT_1$ and they can be performed in constant time. Further, $\calT_0$'s and
$\calT_1$'s transition tables can be checked to be injective, so that we can also perform decrements in constant time simply by reversing the transitions. 
Hence, our constructions provide an alternative solution to Demaine's problem, using a single
auxiliary $\lceil \log \ell \rceil$-bit word (instead of two in \cite{elmasry2013place}), and 
additionally ensuring that the successive counter values form a Gray code. Another advantage is that our counter support integers of any size, since we can go from size $l$ bits to $l+1$ in constant time with the same machine. Hence, we could use it to replace an ad-hoc technical construction in~\cite[Section 6]{capelli2023geometric} used to detect large values of increasing size with constant overhead. 
We further discuss in Appendix~\ref{apx:rankunrank} how to implement the rank and unrank operations.

\begin{toappendix}
\subsection{Rank and Unrank Functions}
\label{apx:rankunrank}
In this appendix, 
we discuss the question of connecting the enumeration order on $\{0,1\}^\ell$ defined by our machines to the corresponding integers in $\{0,\ldots, 2^\ell-1\}$. 
We do so with the \textsf{rank} function (compute a given word's position in the
enumeration), and the inverse \textsf{unrank} function (compute the word
produced at a given position).
We focus on the case of $\calT_1$, for which the rank and unrank functions are easily to define, similarly to
\cite[Construction~A]{sachimelfarb2025improved}.
We sketch them below and provide implementations for them in~\cite{OurImplementation} (see also Appendix~\ref{app:implementation}).

We start with \textsf{rank}, where we compute the index of a given word $w \in \{0,1\}^\ell$ in the enumeration.
The special cases~$w=0^{\ell}$
and~$w=10^{\ell-1}$ are treated independently.
Otherwise, we write~$w$ as~$w=u10^k$ for some~$u=a_{1}\cdots a_m\in\{0,1\}^*$ and~$k=\ell-m-1$.
Notice that~$u$ ends with~$a_m$, which is the parity bit of~$w$.
Intuitively, the rank of $w$ is deduced from $u$, which provides the path from the root to the corresponding node in $B_{\ell-2}$. The main technical hurdle comes from the fact that the 
child visited first during the traversal of~$B_{\ell-2}$ may either be the $0$-child or the $1$-child, depending on the parity of the number of $0$'s (see the proof of \cref{prp:t1-correct} and \cref{fig:tape-ham1-tree}). To overcome this, we define $v=b_{1}\cdots b_{m}$, the \emph{corrected} word, which satisfies that~$b_i=0$ when~$a_i$ corresponds to a child taken first during the traversal of~$B_{\ell-2}$. This is formally defined as follows:

For every~$0 \leq i < m$, we denote by~$p_i$ the parity of the number of 0 in~$a_{1}\cdots a_{i}$, i.e., in the prefix of length~$i$ of~$u$.
Then~$v=b_{1}\cdots b_{m}$ is defined as follows:
for every~$0 \leq i< m$, we set $b_i\coloneq a_i$ if $p_{i-1}$ is $\Even$, or $b_i\coloneq \overline{a_i}$ otherwise.
Then, it may be verified that:
\[\textsf{rank}(w) = m+\left(\sum_{i=1}^{m-1} b_i\, (2^{\ell-i}-2)\right)+b_m\,(2^{\ell+1-m}-3)\]

The \textsf{unrank} function computes a word of $\{0,1\}^\ell$ from a given index $n \in {\{0,\ldots,2^\ell-1\}}$, and works by reversing the process described above.
Let $\ell\geq2$ be an input word length. We assume $n\in\{1,\ldots,2^{\ell}-2\}$, with the cases~$n=0$ and $n=2^{\ell}-1$ being treated independently.
Then, observe that, in the paragraph above, the computation of the corrected word from the input word is easy to reverse.
Thus, it suffices to compute the corrected word, which is done by calling $f(n,\ell)$ using the function $f$ defined below (with $\cdot$ denoting word concatenation):
\begin{equation*}
    \forall \ell\geq2, \forall n\in\{1,\ldots,2^{\ell}-2\}\quad f(n,\ell)=
    \begin{cases}
        0 & \text{if }n=1 \\
        0 \cdot f\big(n-1,\ell-1\big) & \text{if }2\leq n\leq 2^{\ell-1}-1 \\
        1 \cdot f\big(n-(2^{\ell-1}-1),\ell-1\big) & \text{if }2^{\ell-1} \leq n\leq 2^{\ell}-3\\
        1 & \text{if }n=2^{\ell}-2
    \end{cases}
\end{equation*}
\end{toappendix}

\section{Deque Machines}
\label{sec:deque}
In the previous section, we have implemented Gray codes with a constant-delay tape machine. These machines only have constantly many states, but they arguably use more than constant memory: their configuration also stores the position of the head, which would require a 
logarithmic number of additional bits.
In this section, we investigate how to produce Gray codes in a different
computation model called \emph{deque machines}, where the additional memory used is really constant.
Intuitively, the current word is stored on a \emph{deque},
i.e., a doubly-linked list allowing us to perform accesses, insertions, and deletions at the two endpoints of the word. In this model we are not concerned with how the doubly-linked list would be implemented on hardware, and we simply assume that these endpoint operations take unit time.
The machine has a constant-sized internal state, but it does not feature a
head: it can only access the symbols that are at the endpoints of the deque.

Like in the previous section, our goal is to construct a deque machine which is Hamiltonian and constant-delay. A necessary condition for the existence of such machines is the existence of so-called \emph{push-pop quasi-Gray codes}, i.e., codes that visit each word of $\{0,1\}^\ell$ exactly once by producing each word from the previous one via push and pop operations. Such codes are indeed known to exist, e.g., from universal words~\cite{chung1992universal} or from Hamiltonian paths in the Shuffle Exchange network~\cite{feldmann1996shuffle}: see Appendix~\ref{sec:push-pop} for details. Hence, we will investigate in this section whether push-pop quasi-Gray codes can be implemented by machines with constant auxiliary memory, with the goal of proving \cref{thm:deque-intro} from the introduction.

\begin{toappendix}
\subsection{Push-Pop Quasi-Gray Code}
\label{sec:push-pop}

In this appendix, we give a self-contained argument for the existence
of \emph{push-pop quasi-Gray code}. This should be contrasted with the deque
machines given in the main text, which give a constructive algorithm to produce
such codes with constant delay and constant additional memory.

Formally, a \emph{push-pop quasi-Gray code} for length $\ell \in \NN$ is a sequence $S_\ell$ in bijection with $\{0,1\}^\ell$ such that each word can be obtained by one push and one pop from the previous word, i.e., for each word $u$ of $S_\ell$ except the first, letting $v$ be the previous word, we can write $v \coloneq w a$ and $u = b w$ for some $a, b \in \{0,1\}$.

We can then show the existence of such codes by a simple argument using
universal words:

\begin{proposition}
    \label{prp:dequeexists}
  For any $\ell \in \NN$, there exists a push-pop quasi-Gray code for length $\ell$.
\end{proposition}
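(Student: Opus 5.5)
We obtain the code from a universal (de Bruijn) cycle. The case $\ell = 0$ is trivial: the single empty word forms the sequence. So assume $\ell \geq 1$ and use the classical fact that there is a cyclic binary word $D = d_0 d_1 \cdots d_{N-1}$ with $N = 2^\ell$ such that every word of $\{0,1\}^\ell$ occurs exactly once as a cyclic factor $d_i d_{i+1} \cdots d_{i+\ell-1}$ (indices modulo $N$). Existence follows from the standard argument: the de Bruijn graph on $\{0,1\}^{\ell-1}$ has an edge $au \to ub$ for each word $aub \in \{0,1\}^{\ell}$. It is strongly connected and every vertex has in- and out-degree $2$, so it has an Eulerian circuit. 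Reading the labels of this circuit gives $D$, whose cyclic factors of length $\ell$ are exactly the edges, each visited once. For $\ell = 1$ we simply take $D = 01$.

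Define $f_i \coloneq d_i d_{i+1} \cdots d_{i+\ell-1}$ for $0 \leq i < N$. The consecutive factors $f_i$ and $f_{i+1}$ overlap on $d_{i+1}\cdots d_{i+\ell-1}$: we have $f_i = a\, w$ and $f_{i+1} = w\, b$ with $a = d_i$ and $b = d_{i+\ell}$. The definition in the paper, however, asks for $v = w a$ and $u = b w$, that is, pop on the right and push on the left. We therefore take the mirror images. Set $u_i \coloneq \mirror(f_{N-1-i})$, or equivalently read $D$ backwards. Then $u_{i+1}$ is obtained from $u_i$ by deleting its last letter and prepending a new letter. Mirroring is a bijection of $\{0,1\}^\ell$, and the map $i \mapsto f_i$ is a bijection onto $\{0,1\}^\ell$ by universality. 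Hence $S_\ell = (u_0, \ldots, u_{N-1})$ lists every word exactly once, as required. The sequence is linear, not cyclic, so the wrap-around transition is never needed.

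The only real content is the existence of the de Bruijn cycle, specifically the strong connectivity of the de Bruijn graph. To prove connectivity: from a vertex $u$ we reach any vertex $v$ in $\ell - 1$ steps by shifting in the letters of $v$ one at a time. Balanced degrees plus connectivity then give an Eulerian circuit by Euler's theorem. After that, checking the orientation of push versus pop is routine.
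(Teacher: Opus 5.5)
Your approach is the paper's: slide a length-$\ell$ window along a universal (de Bruijn) word. The paper cites Chung--Diaconis--Graham for the universal word; you prove the de Bruijn cycle exists via an Eulerian circuit, and that part is fine. The gap is the orientation step you call routine: as written, it is false.

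Mirroring each word flips the direction of the shift, and so does reversing the order of the sequence. Doing both restores the original direction. Concretely, with $j = N-2-i$ you get
$u_i = \mirror(f_{j+1}) = d_{j+\ell}\,d_{j+\ell-1}\cdots d_{j+1}$ and $u_{i+1} = \mirror(f_j) = d_{j+\ell-1}\cdots d_{j+1}\,d_j$.
So $u_{i+1}$ is obtained from $u_i$ by deleting its \emph{first} letter and appending a letter on the \emph{right}. That is pop-left/push-right, not the required form $v = wa$, $u = bw$.

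For example, take $\ell = 2$ and $D = 0011$. Your sequence is $01, 11, 10, 00$, and the step $01 \to 11$ is not of the form $wa \to bw$: here $w = 0$, and $11$ does not end in $0$. Your phrase ``read $D$ backwards'' has the same problem if it means taking the windows of $\mirror(D)$ from left to right.

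The repair is to apply only one of the two operations. Either set $u_i = f_{N-1-i}$, which slides the window from right to left with no mirroring. With $w = d_{j+1}\cdots d_{j+\ell-1}$ this gives $u_i = w\,d_{j+\ell}$ and $u_{i+1} = d_j\,w$, as required; for $D = 0011$ it yields $10, 11, 01, 00$. Or set $u_i = \mirror(f_i)$.

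With either change, the rest of your argument goes through: bijectivity from universality and injectivity of $\mirror$, no wrap-around needed, and the Euler-circuit existence proof. The paper's one-line proof slides the window left to right over the universal word without comment, so it also produces the pop-left/push-right orientation. You were right to flag the mismatch with the definition; your fix just cancels itself.
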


\begin{proof}
  This follows from the \emph{universal words} of all sizes~\cite{chung1992universal}, that is, a word which contains each word of $\{0,1\}^\ell$ as a factor exactly once. Letting $u = u_1 \cdots u_n$ be such a word, we form the sequence $u_1 \cdots u_{k}, u_2 \cdots u_{k+1}, \ldots, u_{n-k+1} \cdots u_n$. %
\end{proof}

A related result is the existence of a Hamiltonian path in the so-called \emph{shuffle
exchange network}~\cite{feldmann1996shuffle}. This implies the existence of a variant of push-pop quasi-Gray codes, namely, for all $\ell$ we can build a sequence in bijection with $\{0,1\}^\ell$ where each word can be obtained from the previous one by either:
\begin{itemize}
    \item  a circular shift, i.e., one pop on one endpoint followed by a push of the same symbol on the other endpoint; or 
    \item a toggle of the last bit, i.e., one pop on the right endpoint followed by a push of the other symbol on the same endpoint.
\end{itemize}
\end{toappendix}

The section is structured as follows. We first formally define deque machines.
Then, we build a first deque machine~$\calD_0$ which is constant-delay but only \emph{prefix-Hamiltonian}: it produces all words but then loops forever instead of halting. We then investigate how $\calD_0$ can be improved to be Hamiltonian, i.e., to correctly halt when enumeration has finished.
We discuss in Appendix~\ref{apx:dequecounter} how to implement a counter with deque machines by supporting the decrement operation.

\subparagraph{Machine model.}
A \emph{deque machine} $\calM = (Q, \qi, \qh, Q_\OO, \delta)$ consists of a
set $Q$ of \emph{states},
an \emph{initial state} $\qi \in Q$, 
a \emph{halting state} $\qh \in Q$,
a set $Q_\OO \subseteq Q$ of \emph{output states},
and a \emph{transition function}
$\delta \colon (Q \setminus \{\qh\}) \times \bool \times \bool \to Q \times \{\LL, \RR\} \times \{\LL, \RR\} \times \bool$. 
The arguments to the transition function describe the current state and the
symbols at the left and right endpoints of the word, and the return values describe the new
state, the endpoint at which the machine pops a symbol, the endpoint at which
the machine pushes a symbol, and which symbol it pushes. (If both endpoints
are the same, then the machine simply overwrites the symbol at that endpoint.)
Thus, a \emph{configuration} of such a machine consists simply of a state and of a
non-empty word. The semantics of deque machines is again defined as expected: see Appendix~\ref{apx:deque-semantics} for details. 
Like in
\cref{sec:tape}, for a word length $\ell \geq 1$, we start the machine with the
word $0^\ell$ in state $q_\ii$ and the machine produces the current word in unit time whenever it visits an output state. We will investigate machines that are Hamiltonian and constant-delay
in the same sense as in \cref{sec:tape}.

One first idea to design deque machines would be to translate tape machines 
in a black-box fashion by taking the conjugate of the tape at the position of the head. 
Unfortunately, this naive approach does not work as-is: we discuss this in Appendix~\ref{apx:tape2deque}. Thus, the constructions presented in this section require new ideas on top of techniques from the previous section.

\begin{toappendix}
  \subsection{Formal Semantics of Deque-Machines}
  \label{apx:deque-semantics}
In this section, we formally define the semantics of deque machines.
Let $\calM = (Q, \qi, \qh, Q_\OO, \delta)$ be a deque machine.
A \emph{configuration} is simply a pair $c = (q, w)$ of a state $q \in Q$ and a word $w \in \bool^+$. Given $c$, the \emph{next configuration} $c'$ of $\calM$ is
formally defined as follows. Let $a$ and $b$ be the first and last bits of~$w$, respectively, and write $(q', d_1, d_2, b') \coloneq \delta(q, a,
b)$. Let $w'$ be obtained from~$w$ by removing the bit at the endpoint indicated
by~$d_1$, formally $w' \coloneq w[2] \cdots w[n]$ if $d_1 = \LL$
and $w' \coloneq w[1] \cdots w[n-1]$ if $d_1 = \RR$.
Then, let $w''$ be obtained from~$w'$ by
adding the bit $b'$ to the left or right depending on~$d_2$, formally $w''
\coloneq b' w'$ if $d_2 = \LL$ and $w'' \coloneq w' b'$ if $d_2 = \RR$.
Then the \emph{next configuration} is $c = (q', w'')$.

  The \emph{run} of~$\calM$ on~$w$ is then defined like in
  Appendix~\ref{apx:semantics-tape}.
The \emph{initial configuration} of $\calM$ on~$w$ is $c_0 \coloneq
(q_0,w_0)$ with $q_0 = q_\ii$ and $w_0 \coloneq 0^\ell$.
At any step $t \in \NN$ where we reach configuration $c_t = (q_t, w_t)$, if
$q_t$ is an output state ($q_t \in Q_\OO$) then the machine produces (in
unit time) the current word~$w_t$. Further, if $q_t$ is the halting state ($q_t =q_\ff$)
 then the machine halts, otherwise we define 
$c_{t+1}$ as the next configuration of~$\calM$ on~$c_t$.
This defines a sequence
$c_0, c_1, \ldots$ of configurations, which is infinite if the machine does not
halt.
\end{toappendix}

\begin{toappendix}
  \subsection{From Tape Machines to Deque Machines}
  \label{apx:tape2deque}

In this appendix, we remark that it is possible
to translate tape machines 
to deque machines in a systematic way sketched as follows.
When a tape machine $\calT$ stores $\bow u
\omark{x} v \eow$, the corresponding deque machine $\calD$ stores the
conjugate word $x v u$. Then, reads and writes by $\calT$ can be performed by
$\calD$ on the endpoints, and head movements of $\calT$ can be performed by circular shifts in~$\calD$. 

However, the translation of a Hamiltonian tape machine is generally not a Hamiltonian deque machine, for two reasons.
First, the tape machine $\calT$ can detect the markers $\bow$ and $\eow$ while the
deque machine $\calD$ cannot. Second, when $\calT$ produces a word $u \omark{x} v$ then
$\calD$ produces the conjugate word $x v u$.
This conjugation operation is \textbf{not} injective on words due to the position of the head, so $\calD$ might miss some words and have duplicates.

For instance, if we consider the Hamiltonian constant-delay machine $\calT_0$ presented in \cref{prp:t0-correct} in \cref{sec:tape}, then
the
second and third produced words (blue numbers) of Figure~\ref{fig:tape-easy-tree}, namely
$\omark{1}000$ and $00\omark{1}0$, correspond to the same conjugate
word $1000$. Thus, the image of $\calT_0$ by the translation sketched here would not be a prefix-Hamiltonian deque machine.
\end{toappendix}

\begin{figure}[t]
    \centering
    \subfloat[All binary words of length 4 (except 0000), and how they encode a complete binary tree of depth~3.
    Double lines indicate nodes~$\lambda_0$ (1000) and~$\lambda_1$ (1100).
    ]{\label{fig:deque-easy-tree}\begin{tikzpicture}%
    \let\hdist\relax%
    \let\vdist\relax%
    \let\vsep\relax%
    \let\yIII\relax%
    \let\yII\relax%
    \let\yI\relax%
    \newcommand{\upsign}{\triangle}
    \newcommand{\downsign}{\triangledown}
    \newlength{\hdist}\setlength{\hdist}{18mm}%
    \newlength{\vdist}\setlength{\vdist}{7mm}%
    \newlength{\vsep}\setlength{\vsep}{0.5mm}%
    \newlength{\yIII}\setlength{\yIII}{0mm}%
    \newlength{\yII}\setlength{\yII}{\dimexpr\vdist/2\relax}%
    \newlength{\yI}\setlength{\yI}{\dimexpr\vdist*3/2\relax}%
    \newcommand{\localrule}{\rule[-.2em]{0pt}{1.1em}}
    
    \path(0,\vdist*3.5) node[tape] (root) {\localrule\color{black}0001};
    \path (root.south) ++(0pt,-1pt) node[order, anchor=north] {\nth{2}};

    \newif\iffirstleaf
    \foreach \i in {0,1} {
        \firstleaftrue
        \foreach \ii in {0,1} {
            \foreach \iii in {0,1} {
                \path (3*\hdist,\yIII) node[\iffirstleaf double \fi tape] (\i\ii\iii) {\localrule\color{black}1\color{cbdarkorange}\i\ii\iii};            
                \path (\i\ii\iii.east)
                ++(1pt,0pt) node[order,anchor=west] {\nth{\the\numexpr\i*7+\ii*3+\iii+4\relax}}
                ;
                \global\advance \yIII by \dimexpr\vdist\relax
                \global\firstleaffalse
            }
            \path (2*\hdist,\yII) node[tape] (\i\ii) {\localrule\color{black}01\color{cbdarkorange}\i\ii};            
            \path (\i\ii.south) ++(0pt,-1pt) node[order, anchor=north] {\nth{\the\numexpr\i*7+\ii*3+3\relax}}
            ;
            \path[edge] (\i\ii) to node[near start,sloped,below] {0} (\i\ii0);
            \path[edge] (\i\ii) to node[near start,sloped,above] {1}(\i\ii1);
            \global\advance \yII by \dimexpr\vdist*2\relax
        }
        \path (1*\hdist,\yI) node[tape] (\i) {\localrule\color{black}001\color{cbdarkorange}\i};
        \path (\i.north)
            ++(0pt,1pt) node[order,anchor=south] {\nth{\the\numexpr\i*7+9\relax}}
            ;
        \path[edge] (\i) to node[near start,sloped,below] {0} (\i0);
        \path[edge] (\i) to node[near start,sloped,above] {1} (\i1);
        \global\advance \yI by \dimexpr\vdist*4\relax
    }
    \path[edge] (root) to node[near start,sloped,below] {0} (0);
    \path[edge] (root) to node[near start,sloped,above] {1} (1);

\end{tikzpicture}%
}%
    \hfill%
    \subfloat[Definition of the deque machine~$\mathcal{D}_0$, with the transition labels $(\dagger)$ and $(\ast)$ being used in \cref{obs:parity}
    ]{\label{fig:deque-easy-table}
    \begin{minipage}[c]{.45\linewidth}%
    \noindent\input{table/deque-easy}
    \end{minipage}}
  \caption{Illustrations for Proposition~\ref{prp:d0-correct}}
    \label{fig:deque-easy}
\end{figure}

\subparagraph{Constructing prefix-Hamiltonian machines.}
We will start by designing a deque machine which is almost Hamiltonian but fails to halt. Formally, a deque machine is \emph{prefix-Hamiltonian} if it satisfies the following: for
every length $\ell \geq 1$, the $2^\ell$ first words that it produces are all
the words of $\{0,1\}^\ell$ (without duplicates). After this, the machine can continue to run indefinitely, and in
particular it is allowed to produce words that it has already produced.
A Hamiltonian machine is of course prefix-Hamiltonian, but not vice-versa.
The constant-delay requirement is unchanged: in particular it is imposed on the entire
run. 
We now show:

\begin{toappendix}
  \subsection{Proof of \texorpdfstring{\cref{prp:d0-correct}:}{the} Existence of Constant-Delay Prefix-Hamiltonian Deque Machines}
  \label{apx:dequealmost}
\end{toappendix}

\begin{propositionrep}
  \label{prp:d0-correct}
  The deque machine~$\calD_0$, defined in \cref{fig:deque-easy-table}, is constant-delay and prefix-Hamiltonian.
\end{propositionrep}
\begin{proof}
  Let $\ell \geq 1$ be the word length.
  We again handle the zero word $0^\ell$ in a specific way. The other words, called \emph{non-zero} words, can be written
  by distinguishing this time their leftmost $1$, namely, $0^k 1 w$ for some binary word $w$ of length $\ell-k-1$. We again
  use a correspondence between such words and nodes of the binary tree of depth
  $\ell-1$ (but slightly different from the one we used in \cref{prp:t0-correct}), illustrated in
  Figure~\ref{fig:deque-easy-tree}. The root of the tree is the word
  $0^{\ell-1}1$. If $0^k1w$ with $k\geq 1$ is a node $\ell$ of the tree, then
  $0^{k-1}1w0$ is its $0$-child and $0^{k-1}1w1$ is its $1$-child. In summary,
  the word $0^k1w$ encodes the node of the tree that is reachable from the root
  by the path $w$ (reading $w$ from left to right again).

  The deque machine $\calD_0$ then explores this tree with the even-odd trick, very similarly to $\calT_0$ from \cref{prp:d0-correct},
  and again memorizing in the state the parity of the depth of (the node corresponding to) the current word, and whether it is going down or up. More precisely, on a word $w$:
  \begin{itemize}
      \item The machine can detect whether $w$ corresponds to a leaf or not by checking whether the leftmost character is a $1$ or not;
      \item For $b \in \{0, 1\}$, if $w$ does not correspond to a leaf, the machine can go to the $0$-child of (the node corresponding to) $w$ by popping to the left and pushing a $b$ to the right;
      \item The machine can go to the sibling of (the node corresponding to) $w$, by flipping the rightmost bit;
      \item If $w$ does not correspond to the root, the machine can go to the parent by popping to the right and pushing a $0$ to the left.
  \end{itemize}
  The
  transitions of this deque machine and the corresponding tree for $\ell=4$ are
  shown in Figure~\ref{fig:deque-easy-tree} (with the part of the word corresponding to the path in the tree
  colored in orange).
  
  Notice that, unlike $\calT_0$ from \cref{prp:t0-correct}, the machine cannot detect whether the current word corresponds to the root or not: the rest of \cref{sec:deque} will address this issue in detail.
\end{proof}
\begin{proofsketch}
Just like the tape machine~$\calT_0$ (\cref{fig:tape-easy-table}) from~\cref{sec:tape}, the deque machine~$\calD_0$ is based on the depth-first traversal of a complete binary tree.
Once again~$0^\ell$ is treated independently and a word with one or more 1's encodes a node in the complete binary tree of depth~$\ell-1$.
However, the encoding is slightly different%
\footnote{In the
tape machine $D_0$, recall that the tape $w\omark{1}0^k$
encodes the node reached by reading $w$ from the root (from left to right),
and that we modify~$w$ only on the right.
On the other hand, a deque machine can only modify the endpoints of its deque. 
Hence the same node needs to be encoded either with $0^k1w$ or $\overleftarrow{w}10^k$, where~$\overleftarrow{w}$ is the mirror of~$w$. 
We chose the former encoding in order for the path from the root to the node to be consistently read from left to right.}:
the tape machine $\calT_0$ from the previous section distinguished the \emph{rightmost} 1 and 
the node reachable from the root by a path labeled $w$ was encoded by~$w\omark{1}0^{\ell-|w|-1}$, but the deque machine $\calD_0$ will instead distinguish the \emph{leftmost} 1 
and encode that node by~$0^{\ell-|w|-1}1w$.
The proof hinges on the analogue of \cref{clm:t0} for~$\calT_0$:
\begin{claim}\label{clm:d0-correct}
  During the enumeration of the first $2^\ell$ words, $\calD_0$ visits each deque of the form $0^k1\hlpath{w}$ exactly twice, first in state $(\downarrow,p)$, then later in state $(\uparrow,p)$, where $p = |w| \bmod 2$.
\end{claim}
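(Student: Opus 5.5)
We prove \cref{clm:d0-correct} by induction on the subtrees of the complete binary tree of depth $\ell-1$ encoded as in \cref{fig:deque-easy-tree}, mirroring the argument behind \cref{clm:t0}. The statement to establish by induction is the following. For every node with path $w$ (deque $0^k1w$, $k=\ell-1-|w|$) and $p=|w|\bmod 2$, if $\calD_0$ is in configuration $((\downarrow,p),\,0^k1w)$, then it eventually reaches $((\uparrow,p),\,0^k1w)$. In between, it visits each deque $0^{k'}1ww'$ of the subtree exactly twice, first in state $(\downarrow,|ww'|\bmod 2)$ and later in state $(\uparrow,|ww'|\bmod 2)$. It visits no other deque with a non-zero prefix outside this subtree. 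Finally, it outputs exactly the nodes of the subtree once, in even-odd order: a node at even depth is output when it is entered in state $\downarrow$, and a node at odd depth is output when it is left in state $\uparrow$. The number of transitions between consecutive outputs is bounded by a constant, because the even-odd order always moves to a node at tree distance at most $3$, and each tree move costs one transition.

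The induction runs on $k$, i.e.\ on the height of the subtree. In the base case $k=0$ the deque is a leaf $1w$. The leftmost symbol is $1$, so by inspecting the transition table of \cref{fig:deque-easy-table} the machine switches from $\downarrow$ to $\uparrow$ while keeping the deque unchanged, possibly producing the word depending on parity. In the inductive step, $k\ge1$ and the leftmost symbol is $0$. From $(\downarrow,p)$ the machine pops left and pushes $0$ on the right, reaching $((\downarrow,1-p),\,0^{k-1}1w0)$, and the induction hypothesis applies to the $0$-child. Back at $((\uparrow,1-p),\,0^{k-1}1w0)$, the rightmost bit is $0$, so the machine flips it and enters $(\downarrow,1-p)$ on the $1$-child, where the hypothesis applies again. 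At $((\uparrow,1-p),\,0^{k-1}1w1)$, the rightmost bit is $1$, so the machine pops right and pushes $0$ on the left, reaching $((\uparrow,p),\,0^k1w)$. The output behaviour is read off from which of these transitions enter output states according to parity.

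There is one subtlety in the inductive step. When the machine is going up from a child, it must decide between "go to sibling" and "go to parent" using only the rightmost bit. That bit is exactly the last letter of the path, so this is sound. The machine never needs to check for the root during a traversal of a proper subtree, which is why the induction hypothesis does not need any root detection.

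It remains to apply the statement to the whole tree. After producing $0^\ell$, the first transition brings the machine to $((\downarrow,0),\,0^{\ell-1}1)$, the root. The claim for the root then yields all $2^\ell-1$ non-zero words exactly once, hence the first $2^\ell$ outputs are all of $\bool^\ell$. This includes the degenerate case $\ell=1$, where the root is a leaf. The claim says nothing about what happens after $((\uparrow,0),\,0^{\ell-1}1)$, where the machine cannot detect the root, and that is why the statement is restricted to the first $2^\ell$ words.

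The main obstacle is constant delay for the entire infinite run, including after the traversal ends, since prefix-Hamiltonicity allows garbage afterwards. My plan for this is to show that every reachable configuration lies on a cycle of the same transition structure. Every state has a transition for every pair of end symbols, and each tree-move transition changes the deque by one push-pop. A potential argument then shows that the machine cannot take more than a constant number of consecutive non-output transitions: between two $\downarrow$-to-$\uparrow$ switches, the parity of the state forces an output within three moves, whatever the deque is. I will verify this directly on the finite table by checking that there is no cycle of non-output states longer than $3$ in the state graph, ignoring the deque.
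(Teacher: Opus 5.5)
Your induction on the height $k$ of the subtree is correct and is essentially the paper's argument, namely the analogue of \cref{clm:t0} proved by induction on the tree from the transition table: you descend by popping left and pushing $0$, turn around at a leaf, move to the sibling or the parent according to the rightmost bit, and need no root detection inside a proper subtree. Your last paragraph concerns the constant-delay part of \cref{prp:d0-correct} rather than this claim; the check there should be that the non-output states induce no cycle at all, not merely none longer than $3$, and reading the table as the paper describes it, each of $(\downarrow,\Odd)$ and $(\uparrow,\Even)$ is immediately followed by an output state.
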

See \cref{fig:deque-easy-tree} for the full binary tree in the case~$\ell=4$ and
Appendix~\ref{apx:dequealmost} for more details. 
\end{proofsketch}

The machine $\calD_0$ is \emph{not} Hamiltonian because it does not halt at the correct time, due to the lack of left and right markers ($\bow$, $\eow$).
Indeed, $\calD_0$ should halt when it visits the root the second time (i.e., in the configuration~$(\upstate,\Even)~0^{\ell-1}1$). 
We did not find how to detect that the current deque is~$0^{\ell-1}1$:
e.g., $\calD_0$ cannot store the number of $1$'s in the current word, or the current number of leading $0$'s, because these numbers depend on $\ell$ so they are not constant. This motivates the more complex techniques presented in the rest of this section.

To present them, we will need
to understand the behavior of $\calD_0$ when it reaches the root for the second time, that is, in state $(\upstate,\Even)$. This corresponds to the step where the machine has produced every word and ``should'' halt.

\begin{observation}[Parity is swapped at the root]
\label{obs:parity}
When the run of~$\calD_0$ visits the root (i.e., the word~$0^{\ell-1}1$) at the end of the traversal, it is with state~$(\upstate,\Even)$.
Then, the run continues as follows. (Transitions $(\ast)$ and $(\dagger)$, below, are the ones labeled in \cref{fig:deque-easy-table}.)
\begin{equation*}%
\cdots \quad\leadsto\quad (\upstate,\Even)~0^{\ell-1}1 
\quad\leadsto[(\ast)]\quad (\upstate,\Odd)~0^\ell
\quad\leadsto[(\dagger)]\quad (\downstate,\Odd)~0^{\ell-1}1
\quad\leadsto\quad\cdots
\end{equation*}

At this point, the machine performs a second depth-first traversal of the tree.
Recall that the first traversal starts in configuration $((\downarrow, \Even),~0^{\ell-1}1)$ while we can see above that the second starts in $((\downarrow, \Odd),~0^{\ell-1}1)$.  This means the parity will be swapped for the full duration of the second traversal: during the second traversal, a claim similar to \cref{clm:d0-correct} holds with states $(\downstate,\bar p)$ and $(\upstate,\bar p)$.
When the second depth-first traversal ends, that is when $\calD_0$ reaches the root for the fourth time, it is with state $(\upstate,\Odd)$.
Then, the parity is swapped again by $(\ast)+(\dagger)$ and $\calD_0$ restarts, looping indefinitely.
\end{observation}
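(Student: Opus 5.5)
To establish \cref{obs:parity}, I would first pin down, via \cref{clm:d0-correct}, which configuration ends the first traversal. The root $0^{\ell-1}1$ has $w$ empty, so $p=0$, and by the claim its second visit is in state $(\upstate,\Even)$. Moreover, all $2^\ell$ words have been produced by that time: $0^\ell$ at the start, and every non-zero word once, in either its $\downstate$ or its $\upstate$ visit according to the even-odd rule. So this configuration is the end of the first traversal.

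Next I would read off the transition table of \cref{fig:deque-easy-table} for state $(\upstate,\Even)$ on the deque $0^{\ell-1}1$. The leftmost symbol is $0$ and the rightmost is $1$. In the tree this looks exactly like a $1$-child going up to its parent, namely pop right and push $0$ on the left, giving $0^\ell$. The machine cannot tell that the current node is the root, because the rule only inspects endpoints, so it applies this step. Going up from a node of depth parity $\Even$ lands on parity $\Odd$, which is transition $(\ast)$ and yields $(\upstate,\Odd)\,0^\ell$.

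From $(\upstate,\Odd)$ on $0^\ell$, the rightmost bit is $0$. This looks like a $0$-child going up, so the machine moves to the sibling by flipping the rightmost bit and then descends. This is transition $(\dagger)$, giving $(\downstate,\Odd)\,0^{\ell-1}1$. Checking both steps is a direct lookup in the table.

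The main step is to show that the second traversal mirrors the first with the parity complemented. I would redo the induction behind \cref{clm:d0-correct} with the hypothesis \enquote{starting from $(\downstate,q)$ on the deque of node $n$ with arbitrary parity bit $q$, the machine traverses the subtree of $n$ and returns in $(\upstate,q)$}. The transitions of $\calD_0$ depend on the parity bit only to decide whether to output before or after the children, and they toggle it consistently on each move down or up. So the induction goes through uniformly in $q$, with the roles of prefix and postfix output swapped when $q$ is wrong. Applying it at the root with $q=\Odd$ gives a return to $(\upstate,\Odd)\,0^{\ell-1}1$.

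The same two-step analysis of $(\ast)$ and $(\dagger)$ with $\Even$ and $\Odd$ exchanged then brings the machine back to $(\downstate,\Even)\,0^{\ell-1}1$, which is the start of the first traversal. The run is therefore periodic. The delicate point is checking that $(\ast)$ and $(\dagger)$ are really the rules matching these endpoint patterns in both parities, with no special case for $0^\ell$ interfering. I would verify this directly against the table.
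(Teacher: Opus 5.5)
Your proposal is correct and follows the same route as the paper. The paper also gets the end configuration $(\upstate,\Even)\,0^{\ell-1}1$ from \cref{clm:d0-correct} and reads $(\ast)$ and $(\dagger)$ off the table of $\calD_0$. It then notes that the analogue of \cref{clm:d0-correct} holds with complemented parity for the second traversal, and your parity-generic induction (start in $(\downstate,q)$ at node $n$, return in $(\upstate,q)$) simply spells out that step.
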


\subparagraph{Constructing constant-delay Hamiltonian deque machines.}
In the following, we improve~$\mathcal{D}_0$ (\cref{fig:deque-easy-table}) in order to make it Hamiltonian, i.e., to detect when the enumeration can halt. This hinges on the fact that, unlike the root, some nodes of the complete binary tree of depth $\ell-1$ can be detected. Let $\lambda_0$ be the first leaf of the 0-subtree of the root (corresponding to the word $100^{\ell-2}$), and $\lambda_1$ the first leaf of the 1-subtree (corresponding to the word $110^{\ell-2}$). The nodes $\lambda_0$ and $\lambda_1$ are double-circled in \cref{fig:deque-easy-tree}. 
Then:

\begin{observation}[$\lambda_0$ and $\lambda_1$ can be detected]
\label{obs:100*-detection}
During the run of~$\calD_0$, the first time that $\lambda_0$ is visited is the first time in the run that $10$ appears to the left of the current word, and the first time $\lambda_1$ is visited is the first time in the run that $11$ appears to the left.
\end{observation}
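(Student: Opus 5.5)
We argue directly from the tree encoding used for $\calD_0$ (\cref{prp:d0-correct}) together with \cref{clm:d0-correct}. The statement concerns which words start with $10$ or $11$ during the first traversal, so we first describe these words in tree terms. A non-zero word starting with $1$ has no leading zeros, i.e., it is $0^k1w$ with $k=0$ and $|w|=\ell-1$. It is therefore a leaf, and its first letter after the leading $1$ is the first letter of its path $w$. Hence the word begins with $10$ exactly when it is a leaf of the $0$-subtree of the root, and begins with $11$ exactly when it is a leaf of the $1$-subtree. The zero word $0^\ell$ begins with $0$, so it plays no role; we take $\ell\geq 2$, since for $\ell=1$ the statement is degenerate.

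The first key step is to compare the order of leaves in the traversal with the order of first visits. By \cref{clm:d0-correct} together with the depth-first structure, the traversal performed by $\calD_0$ visits the whole $0$-subtree of the root before entering the $1$-subtree. Within any subtree it descends along $0$-edges first. Concretely, the transition from a node to its child pops on the left and pushes on the right, first pushing $0$. The sibling move then flips the rightmost bit $0$ to $1$, and this happens only after the $0$-child subtree is finished.

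Consequently, the first leaf reached in the $0$-subtree is reached by following $0$-edges all the way down from the root, which gives $w=0^{\ell-1}$, i.e., the word $10^{\ell-1}=\lambda_0$. Before this point, every configuration is an internal node, whose deque starts with $0$, or is the initial zero word. Therefore the first configuration whose deque begins with $10$ is $\lambda_0$.

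The first visit to $\lambda_1=110^{\ell-2}$ is handled symmetrically. Before entering the $1$-subtree, every leaf visited lies in the $0$-subtree and so begins with $10$, never with $11$. Once the machine is inside the $1$-subtree, it again descends along $0$-edges to the first leaf, namely path $10^{\ell-2}$. So the first deque in the run beginning with $11$ is $\lambda_1$.

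The main obstacle is rigor about intermediate configurations. The transitions of $\calD_0$ might pass through auxiliary non-output configurations, for instance after a pop and before a push, or inside multi-step macro-transitions. These configurations could momentarily show $10$ or $11$ on the left, so "the first time in the run" must be checked against the actual transition table in \cref{fig:deque-easy-table}. The plan is to inspect each transition and verify that the left endpoint only changes during moves between a parent and a child, i.e., by popping a $0$ on the left or pushing a $0$ on the left. If so, no transient deque exposes a new prefix $1b$ ahead of schedule. Equivalently, we verify that every transition leaves the first two letters either unchanged or changes them only through these moves. If some transition does not satisfy this, the statement should be read over produced words only, with the same argument applying.
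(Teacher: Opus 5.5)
Your argument is correct and matches the reasoning the paper relies on; the paper states this as an observation without writing out a proof. Your steps are the right ones: deques beginning with $1$ are exactly the leaves, those beginning with $10$ (resp.\ $11$) are the leaves of the root's $0$-subtree (resp.\ $1$-subtree), and the depth-first, $0$-child-first traversal of $\calD_0$ reaches $\lambda_0$, and later $\lambda_1$, before any other such leaf. The obstacle you raise about transient configurations does not arise. In the deque-machine model each transition is a single atomic pop-and-push, and every configuration of $\calD_0$ holds either $0^\ell$ or the encoding of a tree node, so no fallback reading of the statement is needed.
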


Note that, within the constant-delay bound of our deque machines, at each step we can pop the two leftmost characters and push them back; so indeed our machines can detect whenever the words $10$ or $11$ appear for the first time at the left endpoint. 
Our future deque machines will leverage \cref{obs:100*-detection} as follows.
The traversal starts at $\lambda_0$ (i.e., the word $100^{\ell-2}$) instead of the root~$r$.
Then, they count the number of \emph{half-traversals} since the start.
One half-traversal spans either from $\lambda_0$ (i.e., $100^{\ell-2}$) to $\lambda_1$ (i.e.,  $110^{\ell-2}$), or conversely.
The former visits the 0-subtree, minus the descent in the branch~$r\to\lambda_0$, plus the descent in the branch~$r\to\lambda_1$.
The latter visits the remainder of the 1-subtree, plus the descent in the branch~$r\to\lambda_0$.
Unfortunately, we cannot simply build a machine that performs two half-traversals, because 
by \cref{obs:parity} the parity is swapped when the root is visited, so that the incorrect parity during the descent 
in the branch $r\to\lambda_0$ would lead to duplicates.
We will now show two alternative ways to use the above and modify $\calD_0$ to a Hamiltonian machine: the \emph{double-traversal} technique, and the \emph{lookahead} technique.

\subparagraph{Double-traversal technique.}
We first build a Hamiltonian machine $\calD_1$ from $\calD_0$ 
with the double-traversal technique.
The machine $\calD_1$ will perform \emph{four} half-traversals (i.e., two traversals) starting from $\lambda_0$, and then it will halt. Each of these two traversals has an opposite parity by \cref{obs:parity}. We build the machine to ensure that the two traversals partition the output (using the parity), even though the machine is not aware of the moment when the parity is swapped.

\begin{figure}[ht]
    \input{table/deque-double}
    \caption{Definition of the constant-delay Hamiltonian deque machine~$\calD_1$}
    
    \label{fig:deque-double-table}%
\end{figure}

\begin{proposition}
    \label{prp:correct-d1}
    The deque machine~$\calD_1$ of \cref{fig:deque-double-table} is constant-delay and Hamiltonian.
\end{proposition}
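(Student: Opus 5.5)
We have not seen the transition table of $\calD_1$, so the plan is built on the design described in the text. $\calD_1$ starts from $0^\ell$, outputs it, and moves to $\lambda_0 = 10^{\ell-1}$. It then simulates $\calD_0$, keeping in its finite state a counter $h \in \{0,1,2,3\}$ of completed half-traversals. It increments $h$ each time it detects, by popping and re-pushing the two leftmost symbols, a fresh occurrence of $10$ or $11$ at the left end, as in \cref{obs:100*-detection}. It halts when the fourth half-traversal ends at $\lambda_0$. Outputs are filtered by parity, so a node is output only in one of the two traversals.

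\textbf{Constant delay.} We would check this first, since it is routine. Each transition does $O(1)$ deque operations. Between two consecutive outputs, the underlying $\calD_0$ run needs $O(1)$ steps by \cref{prp:d0-correct}. Suppressing outputs according to parity at most doubles the gap. This is because a visited node is output either on its $\downarrow$ or on its $\uparrow$ visit, and in a DFS two consecutive visits of the same kind are at bounded distance.

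\textbf{Correctness of one traversal.} We would extend \cref{clm:d0-correct} to a run started at $\lambda_0$ rather than at the root. During each full traversal, every node $0^k1w$ is visited twice, with states $(\downarrow,p\oplus s)$ and $(\uparrow,p\oplus s)$, where $p = |w| \bmod 2$ and $s \in \{0,1\}$ is the traversal index. The index $s$ flips exactly at the root visit, via $(\ast)+(\dagger)$ in \cref{obs:parity}. Starting at $\lambda_0$ means the first traversal covers the tree minus the descent to $\lambda_0$. The descent $r \to \lambda_0$ is then traversed only after the swap, that is, with the flipped parity.

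The argument rests on a key invariant. Fix the output rule in $\calD_1$ so that node $n$ is output when the state is $(\downarrow,q)$ with $q$ even, or $(\uparrow,q)$ with $q$ odd. The two visits of $n$ occur in the two traversals with opposite values of $s$. Combining this with the modified \cref{clm:d0-correct}, we must check that across both traversals each node meets exactly one output-triggering pair (direction, recorded parity). This is a case analysis on $|w| \bmod 2$ and on whether $n$ lies on the initial branch $r\to\lambda_0$. For those branch nodes, the missing $\downarrow$ visit of the first traversal must be compensated by the $\downarrow$ visit in the second. The counter $h$ may also be used to modify the rule during the first and last half-traversals.

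\textbf{Termination.} Halting after four detections occurs exactly upon returning to $\lambda_0$ the second time. The detection is sound because, by \cref{obs:100*-detection}, the prefixes $10$ and $11$ first appear only at $\lambda_0$ and $\lambda_1$. This holds within each traversal, so the detection must be re-armed using $h$. Hence the run is finite and outputs exactly $2^\ell$ distinct words.

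The main obstacle is the invariant, and specifically the nodes on the branches $r\to\lambda_0$ and $r\to\lambda_1$. Their visits straddle the parity swap and the half-traversal boundaries, so this is where duplicates or omissions would arise. We would handle them first by a small induction along the left spine, using the explicit transitions of $\calD_1$.
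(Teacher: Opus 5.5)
There is a genuine gap: with your encoding, the two full traversals output the same words. You keep $\calD_0$'s bijection between the non-zero words of length $\ell$ and the nodes of $B_{\ell-1}$. You correctly note that in full traversal $s$ a node of path parity $p$ is visited in $(\downarrow,p\oplus s)$ and in $(\uparrow,p\oplus s)$. That makes four visits per node, not two. Under your rule (output on $(\downarrow,\Even)$ or $(\uparrow,\Odd)$), a generic node is therefore output exactly once in each traversal, hence twice overall, so the invariant you plan to verify is false. Similarly, $0^\ell$ is output at the start and again at the first swap, in $(\uparrow,\Odd)$.

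Using $h$ does not help either. Nodes of the $0$-subtree off the branch $r\to\lambda_0$ are visited only in half-traversals $\nth{1}$ (parity correct) and $\nth{3}$ (parity swapped), once in each direction each time. So those with $p=0$ meet $(\downarrow,\Even,\nth{1})$, $(\uparrow,\Even,\nth{1})$, $(\downarrow,\Odd,\nth{3})$, $(\uparrow,\Odd,\nth{3})$, and those with $p=1$ meet the four complementary combinations. Exactly-once output means exactly two of these eight combinations can be output combinations. But constant delay on the long descents and ascents of this subtree (which alternate parities) needs an output combination for each direction in each of the two half-traversals, i.e.\ at least four.

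The idea you are missing is that $\calD_1$ writes its parity bit into the deque. It traverses $B_{\ell-2}$, encoded by the first $\ell-1$ symbols as $0^k1u$, and keeps the current parity bit as the rightmost symbol. So at node $v$ with recorded parity $q$ the deque is $vq$. The traversal with correct parity and the one with swapped parity then visit different words, namely $vp$ and $v\overline{p}$ for the correct parity $p$ of $v$. Each deque $vq$ is seen in exactly two states, both with parity component $q$ and with opposite directions. Exactly one of them is an output state under precisely your rule.

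This is Claim~\ref{clm:d1-correct}. The branch nodes whose visits straddle the swap cause no trouble. For instance, for $u\in 0^+$ the deque $vp$ is visited in $(\uparrow,p,\nth{1})$ and $(\downarrow,p,\nth{4})$, and $v\overline{p}$ in $(\downarrow,\overline{p},\nth{2})$ and $(\uparrow,\overline{p},\nth{3})$. The deques $0^{\ell-1}0$ and $0^{\ell-1}1$, seen at the start and at the two swaps, give the two remaining words (Claim~\ref{clm:d1-special-words}). The total is $2(2^{\ell-1}-1)+2=2^\ell$. Your termination detection (re-armed via $h$) and your delay argument carry over to this encoding.
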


We sketch the proof of \cref{prp:correct-d1} in the rest of this paragraph, before turning to the alternative \emph{lookahead} technique.
Let~$\ell$ be the word length. We ignore the two words~$0^{\ell-1}1$ and~$0^{\ell-1}0$, that are handled in a special way (see \cref{clm:d1-special-words}).
The machine $\calD_1$ traverses the complete binary tree $B_{\ell-2}$ of depth $\ell-2$ (in contrast with $\calD_0$, which used $B_{\ell-1}$; we will explain this later).
Like $\calD_0$, the machine $\calD_1$ 
stores whether it is currently going up or down the tree, along with a parity bit in the second component of its internal state.     
However, unlike $\calD_0$, the machine $\calD_1$ starts at $\lambda_0$ (the first leaf of the 0-subtree of the root, corresponding to the word $100^{\ell-3}$).
For this reason, the parity bit maintains the parity \emph{of the height} instead of the depth:
the \emph{correct parity} of a node encoded by $0^{k}1w$ is~$|k|\bmod2$. (In particular, the correct parity of $\lambda_0$ is $0$, represented as $\Even$ in the state of the machine.)

Moreover~$\calD_1$ maintains in its internal state the index of the current half-traversal ($\{\nth{1},\nth{2},\nth{3},\nth{4}\}$): the machine stops after reaching $\lambda_0$ in the \nth{4} half-traversal. The four half-traversals are regrouped into a \emph{first full traversal} and a \emph{second full traversal}, each going from $\lambda_0$ back to $\lambda_0$.
Let us further split each full traversal in two parts called A and B:
\begin{itemize}
    \item Part A is the majority of the traversal: starting at  $\lambda_0$, traversing the tree until the root ($0^{\ell-2}1$) is reached.
    \item Part B corresponds to the last descent from the root ($0^{\ell-2}1$) down to $\lambda_0$ again, hence passing through all nodes of the form~$0^k10^{\ell-k-2}$.
\end{itemize}
Parts A and B are \emph{not} aligned with half-traversals, though part B is always included in the second half of each traversal (i.e., the visit of $\lambda_1$ happens in part~A). 
Let us stress that while $\calD_1$ maintains the current half-traversal, it does not know 
whether it is in part A or part B of the current full traversal.
Furthermore, due to \cref{obs:parity}, the parity bit is equal to the correct parity during part A of the first traversal and it is swapped during part B of the first traversal.
On the other hand, the parity bit is swapped during part A of the second traversal and correct again during part B of the second traversal. 

An overview of the run of~$\calD_1$, with half-traversal numbers and parity bit status (correct/swapped), is given in \cref{fig:deque-double-explanation} of the appendix. 
Altogether, each node is visited once in each of the four possible combinations of ${\{\upstate,\downstate\} \times \{\Even,\Odd\}}$.
In order to produce each word only once, $\calD_1$ writes the current parity bit as the rightmost symbol of the current deque (with the convention $\Even=0$ and $\Odd=1$).
This special treatment of the rightmost symbol of the deque is why $\calD_1$ performs traversals of the binary tree~$B_{\ell-2}$ instead of~$B_{\ell-1}$.
(We note that, even though the parity bit is always written as the rightmost symbol of the deque, we still need to additionally keep track of parity in the internal state of~$\calD_1$, because the parity is part of the definition of which states are output states.)

The proof of \cref{prp:correct-d1} consists in showing that~$\calD_1$ has the invariants stated by Claims~\ref{clm:d1-correct} and~\ref{clm:d1-special-words}, given below.
They are established by a simple inspection of the transition table (\cref{fig:deque-double-table}).

\begin{claim}\label{clm:d1-correct}
    Let $v\coloneq 0^{k}1\hlpath{u}$
    be a word of length~$\ell-1$, for some~$k\in\{0,\ldots,\ell-2\},u\in\{0,1\}^*$.
    Let $p=|k|\bmod 2$ be the correct parity, and define $w \coloneq v \hlpari{p}$ and $\overline{w} \coloneq v \hlpari{\overline{p}}$.
    The deques $w$ and~$\overline{w}$ are both visited twice by the run of~$\calD_1$, in the states and order given below.
    \begin{itemize}
        \item If~$u=\varepsilon$ (i.e., $w$ is the root): $w$ is visited by states $(\upstate,p,\nth{2})$ and $(\downstate,p,\nth{4})$; while $\overline{w}$ is visited by states $(\downstate,\overline{p},\nth{2})$ and $(\upstate,\overline{p},\nth{4})$.
        \item Otherwise, if~$u\in 0^*$ (i.e.,~$w$ is in the first branch of the 0-subtree)): 
            $w$ is visited by states $(\upstate,p,\nth{1})$ and $(\downstate,p,\nth{4})$; while $\overline{w}$ is visited by states $(\downstate,\overline{p},\nth{2})$ and $(\upstate,\overline{p},\nth{3})$.
        \item Otherwise, if~$u\in 10^*$ (i.e.,~$w$ is in the first branch of the 1-subtree): 
            $w$ is visited by states $(\downstate,p,\nth{1})$ and $(\upstate,p,\nth{2})$; while $\overline{w}$ is visited by states $(\downstate,\overline{p},\nth{3})$ and $(\upstate,\overline{p},\nth{4})$.
        \item Otherwise: $w$ is visited by states $(\downstate,p,s)$ and $(\upstate,p,s)$;
        while  $\overline{w}$ is visited by $(\downstate,\overline{p},s')$ and $(\upstate,\overline{p},s')$;
        where~$(s,s')=(\nth{1},\nth{3})$ if the first bit of~$u$ is~$0$, or~$(s,s')=(\nth{2},\nth{4})$ otherwise.
    \end{itemize}
\end{claim}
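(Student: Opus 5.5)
We prove the claim by following the run of $\calD_1$ through the four half-traversals and reducing each part of it to the behaviour of $\calD_0$ established in \cref{clm:d0-correct} and \cref{obs:parity}.

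\textbf{Step 1: subtree lemma.} First we isolate the local behaviour inside a subtree that is not on the two special branches. Consider a node $v=0^k1u$ whose path $u$ is neither in $0^*$ nor in $10^*$, and consider the deque $v\,q$ for a parity bit $q$. We show by induction on the height $k$ that the following holds: if $\calD_1$ is in state $(\downstate,q,s)$ on $v\,q$, then it traverses the subtree of $v$ and returns to $v\,q$ in state $(\upstate,q,s)$. During this traversal, every descendant $0^{k'}1u'$ is visited exactly twice, with last bit $q'$ satisfying $q'\equiv q+(k-k')\pmod 2$, and the half-traversal counter is never changed.
\begin{itemize}
\item The base case is a leaf ($k=0$). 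Here we read off the single transition on leading symbol $1$ from \cref{fig:deque-double-table}.
\item The inductive step proceeds in three moves. Going to the $0$-child pops the left end and inserts the child bit just before the parity bit; this requires a constant-size sequence of pops and pushes at the right end. Going to the sibling flips the penultimate bit. Going up pushes a $0$ on the left and removes the last path bit.
\end{itemize}
The key point is that the counter changes only when the prefix $10$ or $11$ appears for the first time in the current half-traversal. Inside such a subtree this never happens, because those prefixes there belong to leaves other than $\lambda_0,\lambda_1$. I expect to check this by reading off which transitions test the two leftmost symbols.

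\textbf{Step 2: global run.} Next we trace the run globally, starting from $\lambda_0$ (the deque $10^{\ell-3}p$ with $p$ even) in state $(\upstate,\Even,\nth{1})$, after the handling of $0^{\ell-1}0$ and $0^{\ell-1}1$ described in \cref{clm:d1-special-words}. We follow the even-odd traversal of $\calD_0$.
\begin{itemize}
\item In half-traversal \nth{1}, the machine climbs the branch $r\to\lambda_0$ with correct parity. It visits each $0$-branch node $w$ in state $\upstate$, and fully explores (by Step 1) the subtrees hanging off that branch whose path starts with $0$. It then reaches the $1$-child of the root and descends the branch toward $\lambda_1$ in state $\downstate$.
\item On first seeing the prefix $11$, the counter becomes \nth{2}. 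The machine then explores the remaining subtrees with path starting with $1$ and goes up to the root in state $(\upstate,p,\nth{2})$.
\item At the root, the transitions playing the role of $(\ast)$ and $(\dagger)$ swap the internal parity as in \cref{obs:parity}. They also flip the written last bit, which gives the visit of $\overline{w}$ at the root in state $(\downstate,\overline p,\nth{2})$. The machine then descends the $0$-branch with swapped parity, visiting $\overline{w}$ in state $\downstate$, until the prefix $10$ is seen again and the counter becomes \nth{3}.
\end{itemize}
Half-traversals \nth{3} and \nth{4} repeat this pattern with all parities swapped, which yields the remaining visits listed in the claim. The run then stops when $\lambda_0$ is reached during half-traversal \nth{4}.

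\textbf{Main obstacle.} The hard part is the bookkeeping at the two detection events and at the root. We must verify three things:
\begin{itemize}
\item the counter increments exactly at the first visits of $\lambda_1$ and $\lambda_0$ in each half-traversal, and not at later visits such as the upward return through $\lambda_1$;
\item the written last bit always agrees with the internal parity bit, so that the visits of $w$ and $\overline{w}$ are correctly distinguished;
\item the four branch cases ($u=\varepsilon$, $u\in0^*$, $u\in10^*$, and the rest) receive exactly the stated state pairs.
\end{itemize}
I would carry this out by writing explicit configuration traces for a small value of $\ell$ (say $\ell=5$), and then arguing that the branch descents and ascents are uniform in $k$. Each such step only pushes or pops a $0$ on the left and toggles the parity, so a straightforward induction on $k$ along each branch finishes the proof.
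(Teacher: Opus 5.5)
Your route is the same as the paper's. The paper establishes this claim only by ``a simple inspection of the transition table'', and a local induction on subtrees combined with a global trace of the four half-traversals is the natural way to organize that inspection. Your Step~2 does yield exactly the state pairs listed in the four cases.

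The step that fails as written is Step~1. What $\calD_1$ does at a leaf depends only on its state and on boundedly many symbols at the two ends of the deque. So it cannot tell $\lambda_1=110^{\ell-3}0$ apart from a leaf such as $110^j10^{\ell-4-j}0$ with $j$ and $\ell-j$ large. Such a leaf is the first leaf of the subtree of $u=10^j1$. If $\calD_1$ entered that subtree in state $(\downstate,q,\nth{1})$ with $q$ the correct parity, it would reach that leaf in $(\downstate,\Even,\nth{1})$, exactly the local situation at $\lambda_1$. It would therefore switch the counter to \nth{2}. Hence your subtree lemma, quantified over all $s$, is false. Your justification (``those prefixes there belong to leaves other than $\lambda_0,\lambda_1$'') is also not the operative reason: every leaf of the $1$-subtree starts with $11$ like $\lambda_1$, and every leaf of the $0$-subtree starts with $10$ like $\lambda_0$. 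What actually prevents spurious counter changes is the parity of the counter. During \nth{1} and \nth{3} only the prefix $11$ triggers, and during \nth{2} and \nth{4} only $10$ does.

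The repair is to add a hypothesis to the lemma: $s\in\{\nth{1},\nth{3}\}$ when $u$ starts with $0$, and $s\in\{\nth{2},\nth{4}\}$ when $u$ starts with $1$.
\begin{itemize}
\item It is inherited by the children $u0$ and $u1$, which keep the first bit of $u$ and also stay outside $0^*\cup 10^*$, so the induction goes through.
\item It holds at every entry point of your global trace. The subtrees hanging off the $0^*$ branch are entered during \nth{1} and \nth{3}, and those hanging off the $10^*$ branch during \nth{2} and \nth{4}.
\item The same counter-parity argument settles your first ``main obstacle''. The upward visit of $\lambda_1$ happens with an even counter, so the prefix $11$ no longer triggers there.
\end{itemize}

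Two minor slips:
\begin{itemize}
\item $\lambda_0$ is the deque $10^{\ell-2}0$; your $10^{\ell-3}p$ has length $\ell-1$.
\item Only $0^{\ell-1}0$ is visited before the start, in state $\qi$. The deque $0^{\ell-1}1$ is met only at a root crossing, as in \cref{clm:d1-special-words}, and your third bullet of Step~2 already covers that.
\end{itemize}
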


\begin{claim}\label{clm:d1-special-words}
    (Beware that the $1$ below is a parity bit.)
    \begin{itemize}
        \item The deque~$w=0^{\ell-1}\hlpari{1}$ is visited by state $(\upstate,\Odd,\nth{2})$
    if~$\ell$ is even, or by state $(\upstate,\Odd,\nth{4})$ otherwise.
        In particular,~$w$ is produced once during the traversal.
        \item The deque~$\overline{w}=0^{\ell-1}\hlpari{0}$ is visited by state~$q_i$ and then by state~$(\upstate,\Even,\nth{4})$ if~$\ell$ is even, or then by~$(\upstate,\Even,\nth{2})$ otherwise.
        In particular,~$\overline{w}$ is produced once during the traversal.
    \end{itemize}
\end{claim}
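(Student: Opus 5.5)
The plan is to prove \cref{clm:d1-special-words} by combining \cref{clm:d1-correct}, applied to the root, with a direct inspection of the few transitions of $\calD_1$ (\cref{fig:deque-double-table}) that can lead to a deque whose first $\ell-1$ symbols are all $0$.

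The root of $B_{\ell-2}$ is $v=0^{\ell-2}1$, so $k=\ell-2$ and the correct parity is $p=\ell\bmod 2$. By the first item of \cref{clm:d1-correct}, the root is left \emph{upwards} exactly twice during the run:
\begin{itemize}
  \item once from the deque $v\hlpari{p}$ in state $(\upstate,p,\nth{2})$, at the end of part~A of the first traversal;
  \item once from $v\hlpari{\overline p}$ in state $(\upstate,\overline p,\nth{4})$, at the end of part~A of the second traversal.
\end{itemize}
I will then follow, in the transition table, the constant-length sequence of transitions taken from each of these two configurations. This sequence is the analogue of the moves $(\ast)$ and $(\dagger)$ of \cref{obs:parity}:
\begin{itemize}
  \item the parity symbol on the right and the $1$ are popped and a $0$ is pushed on the left, giving a deque $0^{\ell-1}b$;
  \item the parity stored in the state is flipped, and the rightmost symbol is rewritten accordingly;
  \item the machine then descends again to $0^{\ell-2}1$, starting part~B.
\end{itemize}
Along this sequence I will identify which deque of the form $0^{\ell-1}b$ is visited, and in which state.

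Next, I split into cases on the parity of $\ell$. If $\ell$ is even, then $p=\Even$. Leaving the root in half-traversal~\nth{2} with parity $\Even$ should land in $(\upstate,\Odd,\nth{2})$ on $0^{\ell-1}\hlpari{1}$. Leaving it in half-traversal~\nth{4} with parity $\Odd$ should land in $(\upstate,\Even,\nth{4})$ on $0^{\ell-1}\hlpari{0}$. If $\ell$ is odd, the two cases exchange half-traversal indices. This matches the statement, provided the rightmost symbol written in each case is the new parity. I will check that detail directly against the table, together with the initial transition from $q_\ii$ on $0^{\ell-1}0$, which produces that word and then moves to $\lambda_0=10^{\ell-3}\hlpari{0}$ in state $(\downstate,\Even,\nth{1})$.

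For \emph{exactly} these visits, the argument is as follows. A deque $0^{\ell-1}b$ contains no $1$ among its first $\ell-1$ symbols, so it is not covered by \cref{clm:d1-correct}. I will argue that the only transitions whose target has a zero prefix of length $\ell-1$ are the ones that push a $0$ on the left while popping the last $1$. By \cref{clm:d1-correct}, such a transition is applied only at the two root-upward moves above; together with the initial configuration, this accounts for every visit.

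Finally, for \emph{produced once}, I will check in the table which of the visiting states are output states:
\begin{itemize}
  \item $q_\ii$ is an output state, so $0^{\ell-1}\hlpari{0}$ is produced at the start of the run;
  \item the state with which $0^{\ell-1}\hlpari{0}$ is revisited later must be non-output;
  \item the unique state visiting $0^{\ell-1}\hlpari{1}$ must be an output state.
\end{itemize}

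The main obstacle is the bookkeeping: confirming from the table that the half-traversal counter and the parity flip at the root produce exactly these states and rightmost bits, with no off-by-one between ``even $\ell$'' and half-traversal~\nth{2}. The rest reduces to \cref{clm:d1-correct}.
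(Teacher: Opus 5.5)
Your proposal is correct and follows essentially the paper's route. The paper establishes this claim only by inspecting the table of $\calD_1$, and your argument is that inspection made explicit: you take the two upward visits of the root from \cref{clm:d1-correct}, follow the analogues of the transitions $(\ast)$ and $(\dagger)$ of \cref{obs:parity}, and check which states are output states.

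One small slip, which does not affect the argument. By \cref{clm:d1-correct}, the first visit of $\lambda_0$ (the deque $10^{\ell-2}\hlpari{0}$, not $10^{\ell-3}\hlpari{0}$) is in state $(\upstate,\Even,\nth{1})$, and the $\downstate$-visit is the final one in half-traversal~\nth{4}; so the transition from $\qi$ lands in $(\upstate,\Even,\nth{1})$, not $(\downstate,\Even,\nth{1})$.
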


\begin{toappendix}
\begin{figure}
    \centering
    \begin{tikzpicture}%
    \let\hdist\relax\newlength{\hdist}\setlength{\hdist}{12mm}%
    \let\vdist\relax\newlength{\vdist}\setlength{\vdist}{12mm}%
    \let\vsep\relax\newlength{\vsep}\setlength{\vsep}{6mm}%
    \let\vsepii\relax\newlength{\vsepii}\setlength{\vsepii}{9mm}%
    \let\vsepdiff\relax\newlength{\vsepdiff}\setlength{\vsepdiff}{\dimexpr\vsepii-\vsep\relax}%
    \tikzset{
    annot/.style={fill=white,draw,line width=.2mm,solid,circle,font=\footnotesize,minimum width=2mm,inner sep=1pt,pos=\annotpos},
        arrow/.style={sloped, pos=\arrowpos, allow upside down},
    run/.style={draw,line width=.4mm,>={Stealth[length=2.5mm]},line cap=round},    
    }
    \newcommand{\annotpos}{.5}
    \newcommand{\arrowpos}{.5}

    \path(0,0) node[draw=black,vertex,fill=cbyellow] (root) {};
    \path ($(root)+(-2\hdist,0)$) node[draw=black,vertex,fill=cbyellow] (pre-root) {};
    \path[draw] ($(root)+(4\hdist,-2\vdist)$) coordinate (south-subtree-root)
        -- ++(2\hdist,-\vdist) node[draw=black,vertex,fill=cbyellow] (south-subtree-first-leaf){}
        -- ++(0,2\vdist) coordinate (south-subtree-last-leaf)
        -- cycle;
    \path[draw] ($(root)+(4\hdist,2\vdist)$) coordinate (north-subtree-root)
        -- ++(2\hdist,\vdist) coordinate (north-subtree-last-leaf)
        -- ++(0,-2\vdist) node[draw=black,vertex,fill=cbyellow] (north-subtree-first-leaf) {}
        -- cycle;
    \path[draw,->, shorten >=2pt] (root) to node[below left] {} (south-subtree-root);
    \path[draw,->, shorten >=2pt] (root) to node[above left] {} (north-subtree-root);

    \foreach \i/\sep in {1/\vsep,2/\vsepii} {
        \foreach \s/\m/\e in {%
            south-subtree-root/south-subtree-first-leaf/south-subtree-last-leaf,%
            south-subtree-first-leaf/south-subtree-last-leaf/south-subtree-root,%
            north-subtree-root/north-subtree-first-leaf/north-subtree-last-leaf,%
            north-subtree-root/north-subtree-first-leaf/north-subtree-last-leaf,%
            north-subtree-first-leaf/north-subtree-last-leaf/north-subtree-root,%
            root/pre-root/root%
            }
        {
            \path (\s) -- (\m) -- ([turn]-90:\sep) coordinate (\m-start-\i);
            \path (\e) -- (\m) -- ([turn]90:\sep) coordinate (\m-end-\i);
        
        }
    }

    \newcommand{\arrowIn}{
    \tikz \draw[run,solid,->] (-2.5mm,0) -- (1pt,0);
    }
    \newcommand{\curstate}{}

    \renewcommand{\annotpos}{.2}
    \renewcommand{\arrowpos}{.6}
    \renewcommand{\curstate}{1}
    \path[run,cborange,->,rounded corners=0]
    ($(south-subtree-first-leaf)+(0:\vsep)$)
        node[black,anchor=north, inner sep=.5mm,xshift=-1mm]{\tiny start}
        to
        node[annot]{\curstate}
        node[arrow]{\arrowIn}
    ($(south-subtree-last-leaf-start-1)$)
    arc[start angle=0, end angle=90+31.75, radius=\vsep]
        to[rounded corners] 
        node[annot]{\curstate}
        node[arrow]{\arrowIn}
    ($(south-subtree-root)+(90:\vsep)$)
        to[rounded corners] 
        node[annot]{\curstate}
        node[arrow]{\arrowIn}
    ($(root)+(0:1.66\vsep)$)
        to[rounded corners] 
        node[annot]{\curstate}
        node[arrow]{\arrowIn}
    ($(north-subtree-root)+(-90:1\vsep)$)
        to
        node[annot]{\curstate}
    ($(north-subtree-first-leaf)+(180+63:1\vsep)$)
        arc[start angle=180+63.5,end angle=270+31.75,radius=\vsep]
    ;

    \renewcommand{\curstate}{2}
    \path[run,cbblue,-]
    ($(north-subtree-first-leaf)+(270+31.75:1\vsep)$)
        arc[start angle=270+31.75,end angle=360,radius=\vsep]
        to
        node[annot]{\curstate}
        node[arrow]{\arrowIn}
    ($(north-subtree-last-leaf-start-1)$)
        arc[start angle=0, end angle=180-63.5, radius=\vsep]
        to[rounded corners]
        node[annot]{\curstate}
        node[arrow]{\arrowIn}
    ($(root)+(90:1\vsep)$)
        to 
        node[annot]{\curstate}
        node[arrow]{\arrowIn}
    ($(pre-root)+(0,\vsep)$)
    arc[start angle=90, end angle=180, radius=\vsep]
    ;
    \path[run,cbblue,->,dashed]
    ($(pre-root)+(-\vsep,0)$)
    arc[start angle=180, end angle=270, radius=\vsep]
        to
        node[annot]{\curstate}
    ($(pre-root)+(1\vsep,-1\vsep)$)
        to[rounded corners]
        node[arrow]{\arrowIn}
    ($(root)+(-90:\vsep)$)
        to
        node[arrow]{\arrowIn}
        node[annot]{\curstate}
        node[annot,pos=.9]{\curstate}
    ($(south-subtree-first-leaf-start-1)$)
        to
        ([turn]0:\vsepdiff*1.13)%
        arc[start angle=180+63.5,end angle=270+31.75,radius=\vsep]
        coordinate (end2)
    ;

    \renewcommand{\annotpos}{.8}
    \renewcommand{\arrowpos}{.4}
    \renewcommand{\curstate}{3}
    \path[run,cbteal,dashed,->]
        (end2)
        arc[start angle=270+31.75, end angle=360, radius=6mm]
        to
        node[annot]{\curstate}
        node[arrow]{\arrowIn}
        ($(south-subtree-last-leaf-start-2)$)
        arc[start angle=0, end angle=90+31.75, radius=\vsepii]
        to[rounded corners] 
        node[annot]{\curstate}
        node[arrow]{\arrowIn}
        ($(south-subtree-root)+(90:\vsepii)$)
        to[rounded corners] 
        node[annot]{\curstate}
        node[arrow]{\arrowIn}
        ($(root)+(0:1.66\vsepii)$)
        to[rounded corners] 
        node[annot]{\curstate}
        node[arrow]{\arrowIn}
        ($(north-subtree-root)+(-90:\vsepii)$)
        to[rounded corners] 
        node[annot]{\curstate}
        node[arrow]{\arrowIn}
        ($(north-subtree-first-leaf-start-2)$)
        arc[start angle=180+63.5, end angle=270+31.750, radius=\vsepii];

    \renewcommand{\curstate}{4}
    \path[run,cbpink,dashed]
    ($(north-subtree-first-leaf)+(270+31.750:\vsepii)$)
        arc[start angle=270+31.750, end angle=360, radius=\vsepii]
        to[rounded corners]
        node[annot]{\curstate}
        node[arrow]{\arrowIn}
    ($(north-subtree-last-leaf-start-2)$)
        arc[start angle=0, end angle=180-63.5, radius=\vsepii]
        to[rounded corners]
        node[annot]{\curstate}
        node[arrow]{\arrowIn}
    ($(root)+(0,\vsepii)$)
        to
        node[arrow]{\arrowIn}
        node[annot]{\curstate}             
    ($(pre-root)+(0,\vsepii)$)
    arc[start angle=90, end angle=180, radius=\vsepii]
    ;
    \path[run,cbpink,->]    
    ($(pre-root)+(-\vsepii,0)$)
    arc[start angle=180, end angle=270, radius=\vsepii]
        to[rounded corners]
        node[annot]{\curstate}
        node[arrow]{\arrowIn}
    ($(root)+(0,-\vsepii)$)
        to[rounded corners]
        node[arrow]{\arrowIn}
        node[annot]{\curstate}
    ($(south-subtree-first-leaf-start-2)$)
    node[black,anchor=north, inner sep=.5mm,xshift=-1mm]{\tiny end};

    \path($(pre-root)+(5mm,1mm)$) node[anchor=south west,inner sep=.5mm] (pre-root-lab) {\small$z$};  
    \path[gray,draw,-] (pre-root) to[out=10,in=180] (pre-root-lab); 
    
    \path($(root)+(-5mm,-1mm)$) node[anchor=north east,inner sep=.5mm] (root-lab) {\small$r$};
    \path[gray,draw,-] (root) to[out=190,in=0] (root-lab); 

    \path($(south-subtree-first-leaf)+(0mm,5mm)$) node[anchor=south east](south-subtree-first-leaf-lab) {\small$\lambda_0$};  
    \path[gray,draw,-] (south-subtree-first-leaf-lab) to[out=-90,in=115] (south-subtree-first-leaf);   
    
    \path($(north-subtree-first-leaf)+(0mm,5mm)$) node[anchor=south east](north-subtree-first-leaf-lab) {\small$\lambda_1$};
    \path[gray,draw,-] (north-subtree-first-leaf-lab) to[out=-90,in=115] (north-subtree-first-leaf);   
    
\end{tikzpicture}%
    \caption{Explanation of the double-traversal technique in the proof of \cref{prp:correct-d1}. 
    The complete binary tree of depth $\ell-2$ is drawn in black and three nodes of particular interest are indicated: 
    the root~$r$, where the run deviates from a depth-first traversal;
    the first leaves $\lambda_0$ and $\lambda_1$ in the 0-subtree
    and the 1-subtree of the root, respectively (which we can detect during the run, and from which the traversals start and end).
    In addition, a false node $z$ is represented left of the root; it corresponds to the zero-deque and is visited during the run at the moment when parity is swapped by \cref{obs:parity}. 
    The four half-traversals (called $\{\nth{1},\nth{2},\nth{3},\nth{4}\}$ in the third component of the internal state of $\calD_1$) are given by the color and the circled number. The moments when the parity is ``correct'' (i.e., Part A of the first full traversal and part B of the second full traversal) are shown with solid lines, and the moments when it is swapped (i.e., Part B of the first full traversal and Part A of the second full traversal) are shown with dashed lines.
    }
    \label{fig:deque-double-explanation}
\end{figure}
\end{toappendix}
%

\subparagraph{Lookahead technique.} An alternative way to make~$\calD_0$ Hamiltonian is to use the \emph{lookahead} technique. We think it is more flexible though more technical than double traversal.

The deque machine $\calD_2$, defined in \cref{fig:deque-lookahead-table}, implements this technique.
More precisely,~$\calD_2$ performs two half-traversals (i.e., one full traversal) of~$B_{\ell-1}$, starting from $\lambda_0$. 
However, each time the machine visits a $1$-child~$n$ and moves to the state~$\downstate$, it performs a “lookahead”: it descends to the leftmost leaf of~$n$ and then ascends back to~$n$, before continuing the traversal (and descending again to the leftmost leaf of~$n$). This first upward return to $n$ is possible because $\calD_2$ can detect the first node that is a $1$-child while moving upward in the tree.

This lookahead phase requires keeping track of the height of the current node modulo~$4$, since in $\calD_2$ each node is visited four times instead of twice as in $\calD_0$. Concretely, a node of height~$h \bmod 4$ is output on its $h^{\text{th}}$ visit.
Like~$\calD_1$, the machine $\calD_2$ detects the second visit of~$\lambda_0$. This event occurs during the lookahead initiated from the root~$r$.  Hence,~$\calD_2$ is able to detect that it is visiting the root after this lookahead phase.  The machine is thus able to adjust the stored height modulo~$4$ before performing the final descent in $r \to \lambda_0$.  This allows to fix the issue from \cref{obs:parity} and to produce the missing solutions in this branch.

\begin{figure}[ht]
    \input{table/deque-lookahead}
    \caption{Definition of the constant-delay Hamiltonian deque
    machine~$\calD_2$. The transition $(\diamond)$ is the simplified version of
  the machine, and is in fact replaced by the three transitions given at the
bottom of the table}
    \label{fig:deque-lookahead-table}
\end{figure}

\begin{proposition}[Lookahead]
    \label{prp:correct-d2}
    The deque machine~$\calD_2$, defined in \cref{fig:deque-lookahead-table}, is constant-delay and Hamiltonian.
\end{proposition}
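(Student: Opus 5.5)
We argue as for \cref{prp:correct-d1}: we state a visiting invariant for every node of the complete binary tree $B_{\ell-1}$ (encoded as in \cref{prp:d0-correct}, a node with path $w$ being the deque $0^k1w$), and check it against the transition table of \cref{fig:deque-lookahead-table}. Constant delay is immediate, since every transition performs a bounded number of endpoint operations and at most a bounded lookahead on the two leftmost symbols (\cref{obs:100*-detection}). So the whole task is Hamiltonicity: every word of $\{0,1\}^\ell$ is produced exactly once, and the machine halts.

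\textbf{Step 1: the lookahead traversal on an ordinary subtree.} For a node $n$ of height $h$, the state stores the direction and a counter $c \in \ZZ/4\ZZ$. We prove by induction on $h$ that, if the machine enters $n$ going down with $c \equiv h$ in the appropriate normalisation, then the following holds. It visits $n$ exactly four times: two of these visits come from the lookahead descent to the leftmost leaf of $n$ and the ascent back, and two come from the genuine traversal. It outputs $n$ exactly on the visit whose index matches $h \bmod 4$. It produces every word of the subtree exactly once, and it leaves $n$ upward with the counter restored.

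The key sub-steps are the following.
\begin{itemize}
\item The lookahead ascent stops exactly at $n$. This holds because $n$ is the first $1$-child met while going up from its leftmost leaf: that path consists only of $0$-children, detectable by the rightmost bit being $0$.
\item The visits of nodes on the leftmost branch of $n$ during the lookahead are counted consistently with their own heights, so that they are output at most once.
\end{itemize}
This is the analogue of \cref{clm:d0-correct}, with four visits instead of two.

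\textbf{Step 2: global behaviour.} The run starts at $\lambda_0$ after producing the zero word specially. We follow the run through the half-traversal from $\lambda_0$ to $\lambda_1$ (detected by \cref{obs:100*-detection}), and then through the second half up to the root $r$. When the machine turns down into the lookahead from $r$, the counter is shifted by the phenomenon of \cref{obs:parity}, since it passes through $0^\ell$. During this lookahead it reaches $\lambda_0$ a second time, which it detects because $10$ appears on the left again. It then ascends to $r$ and, knowing it is at the root, applies the three replacement transitions for $(\diamond)$ to correct the counter modulo~$4$. Finally it performs the last descent $r \to \lambda_0$, producing exactly the branch nodes not yet output, and halts upon reaching $\lambda_0$.

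To finish, we count occurrences of each word. We show that every node on the branch $r\to\lambda_0$ has exactly one visit with matching counter across the initial partial descent, the lookahead, and the final descent. All other nodes are handled by Step 1.

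\textbf{Main obstacle.} The delicate part is the bookkeeping on the branch $r\to\lambda_0$ and the root. These nodes are visited in an irregular pattern: only partially at the start, since the run begins at $\lambda_0$; again during the root lookahead with a shifted counter; and a final time. We must show that the four-visit scheme with output on visit $h \bmod 4$, combined with the counter correction made by the $(\diamond)$ replacement, outputs each of them exactly once. We also need to check that the second $10$ on the left is not triggered earlier. I would handle this by tabulating, for a branch node of height $h$, the counter value at each visit as a function of $h \bmod 4$, checking all four residues. I would cross-check this table against the implementation for small $\ell$.
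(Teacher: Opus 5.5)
Your treatment of the tree nodes matches the paper: the four-visit invariant of \cref{clm:nonleftpath} off the branch $r\to\lambda_0$, and a per-residue table for that branch as in \cref{clm:leftpath}. Two other parts of the proposal do not hold up.

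\textbf{Constant delay is not immediate.} Bounding the work of each transition only bounds the cost of one step. The delay is the number of steps between two consecutive outputs. In $\calD_2$ a configuration is an output only when the direction index agrees with the counter, so three of the four visits of every node are silent. You must therefore show that silent stretches have bounded length. The paper argues as follows.
\begin{itemize}
\item The counter is the opposite of the height modulo~4 plus an offset, and that offset changes only at constantly many points.
\item Within one direction state the height changes monotonically. Hence four consecutive steps in the same direction state meet every residue, and so meet an output state.
\item An output-free stretch must therefore switch direction at least every $3$ steps. This forces a switch $\downarrow_0\to\uparrow_1$ or $\downarrow_2\to\uparrow_3$.
\item These switches occur only at leaves. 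The four visits of that leaf then lie within a constant window, and one of them is an output.
\end{itemize}

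\textbf{The zero word is mishandled.} You assume $0^\ell$ is produced at the start and then count only tree nodes. But the run visits $0^\ell$ again. The step root $\to 0^\ell\to$ root behind \cref{obs:parity} passes through $0^\ell$ in state $(\uparrow_3,h,\nth{2})$. That configuration is an output exactly when $h=3$, which depends on $\ell \bmod 4$. So an initial output would be a duplicate for some lengths.

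In the paper, $0^\ell$ is not produced at the start. It is produced at that mid-run visit if $h=3$. Otherwise it is produced through the state $\sspecial$, when the machine is at the root in state $(\uparrow_1,h,\last)$ just before the final descent. This is what the three transitions replacing $(\diamond)$ are for. The counter correction itself is already performed by $(\diamond)$, so describing the replacement transitions as the correction step is mistaken. Your final count must include this case split for $0^\ell$.
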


We sketch the proof of Proposition~\ref{prp:correct-d2} in the remainder of the paragraph.
Let~$\ell$ be a word length. 
Once again, we ignore for now the word~$0^{\ell}$ which is handled in a special way, and focus on the other \emph{non-zero} words. The top part of table \cref{fig:deque-lookahead-table} describes a simplified version of the machine, that the reader should consider first. The actual machine is obtained by replacing the transition marked with $(\diamond)$ by the three transitions given at the bottom of \cref{fig:deque-lookahead-table}.  We will explain this at the end of the proof, together with the purpose of the state $\sspecial$.

The deque machine $\calD_2$ traverses the complete binary tree $B_{\ell-1}$ of depth $\ell-1$. We use the same bijection
between nodes of $B_{\ell-1}$ and nonzero words of length~$\ell$. The state of $\calD_2$ contains three components:
\begin{itemize}
\item The first component of the state of $\calD_2$ stores whether it is currently going up or down the tree with four different values $\{\downarrow_0,\uparrow_1,\downarrow_2,\uparrow_3\}$. The intuition is that $\downarrow_0, \uparrow_1, \downarrow_2$ correspond to the value $\downarrow$ of~$\calD_0$, whereas $\uparrow_3$ corresponds to the value $\uparrow$ of $\calD_0$.
\item The second component of the state of $\calD_2$ stores the height modulo 4 of the current node (possibly with an offset after visiting the root, as we will explain).
\item The third component of the state is in $\{\nth{1},\nth{2},\last\}$, which indicates the half-traversal that we are currently doing, as we explain below. 
\end{itemize}
More precisely, for the third component, we want to use \cref{obs:100*-detection} to detect the end of the enumeration.
The initial value used for the third component is $\nth{1}$, which corresponds to the first half-traversal, i.e., the traversal of $B_{\ell-1}$ between $\lambda_0$ (the first leaf of the 0-subtree of the root, corresponding to the word $100^{\ell-2}$) and $\lambda_1$ (the first leaf of the 1-subtree of the root, corresponding to the word $110^{\ell-2}$). When $\calD_2$ traverses the leaf $\lambda_1$, it detects it by \cref{obs:100*-detection}, and then transition ($\dagger$) of \cref{fig:deque-lookahead-table} is used to change the state to $\nth{2}$, indicating that the second half-traversal has started. 
Finally, when $\lambda_0$ is traversed a second time by $\calD_2$, we detect it again by \cref{obs:100*-detection}, and then transition ($\ast$) is used, which changes the state to $\last$. The machine then enumerates some missing words of the form $0^* 1 0^*$ (as we will explain at the end), and then halts.

We now explain the traversal in more detail. The machine $\calD_2$ starts at $\lambda_0$ and mimic the traversal of $B_{\ell -1}$ by $\calD_0$. The only difference is that it does a ``lookahead'' when it starts to go down the tree from a node $n$ which is a $1$-child. More precisely, letting $\lambda_n$ be the leftmost leaf of the subtree rooted in $n$, the machine $\calD_2$ traverses the path $n \rightarrow \lambda_n$ three times: first going down (state $\downarrow_0$), then up (state $\uparrow_1$) and finally down (state $\downarrow_2$). When the leaf $\lambda_n$ is reached in state $\downarrow_2$, $\calD_2$ transitions to state $\uparrow_3$, which corresponds to the state $\uparrow$ in $\calD_0$. To slightly simplify the machine, we maintain the opposite of the height (modulo 4), that is: when the machine goes up the height is decremented, and it is incremented when the machine goes down. 

The correctness of~$\calD_2$ rely on Claims~\ref{clm:nonleftpath}
and~\ref{clm:leftpath}, which describe the configurations of~$D_2$ when the tape contains at least two 0's, and exactly one 0, respectively.
Both are proved with straightforward inductions based on Table (\cref{fig:deque-lookahead-table}).
The case of the tape~$0^\ell$ is treated separately.

\begin{claim}
\label{clm:nonleftpath}
Let $w \in \{0,1\}^{\ell}$ be a word which is a non-zero word and \textbf{not} of the form $0^*10^*$ (i.e., $w$ encodes a node that is not on the branch from the root to~$\lambda_0$).
Let $h$ be the opposite of the height of $w$ in $B_{\ell-1}$ modulo 4.
Then, four configurations of the form $(w,q)$ are reached by $\calD_2$, successively in states $q=(\downarrow_0,h,s)$, $q=(\uparrow_1,h,s)$, $q=(\downarrow_2,h,s)$, and $q=(\uparrow_3,h,s)$, where $s$ is either $\nth{1}$ 
($w$ is in between $\lambda_0$ and $\lambda_1$ in the traversal order) or $\nth{2}$ ($w$ is in between $\lambda_1$ and the root). In particular each such word $w$ is visited in exactly one of the output states $\{(\downarrow_0,0,s), (\uparrow_1,1,s), (\downarrow_2,2,s),(\uparrow_3,3,s)\}$, depending on its height.
\end{claim}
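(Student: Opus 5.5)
We prove \cref{clm:nonleftpath} by structural induction on the subtrees of $B_{\ell-1}$, mirroring the induction behind \cref{clm:d0-correct} but with the four passes $\downarrow_0,\uparrow_1,\downarrow_2,\uparrow_3$ in place of the two passes of $\calD_0$. Call a node \emph{regular} if it is non-zero and not on the branch $r\to\lambda_0$, i.e., not of the form $0^*10^*$. The first step is to state a stronger local invariant describing how the traversal enters and leaves a regular subtree. For a regular node $n$ with opposite-height $h \bmod 4$:
\begin{itemize}
\item if $n$ is a $0$-child, then entering $n$ in state $(\downarrow_2,h,s)$ leads to the exit of $n$ in state $(\uparrow_3,h,s)$;
\item if $n$ is a $1$-child, then entering $n$ in state $(\downarrow_0,h,s)$ leads to the exit of $n$ in state $(\uparrow_3,h,s)$.
\end{itemize}
In both cases every regular node of the subtree is visited exactly as \cref{clm:nonleftpath} prescribes, and no other configurations occur in between. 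For a $0$-child, the visits in $\downarrow_0$ and $\uparrow_1$ have already happened during the lookahead initiated at its nearest $1$-child ancestor. The induction therefore also needs an auxiliary \emph{lookahead lemma}, proved separately by induction on the height: from a $1$-child $n$ in state $\downarrow_0$, the machine descends along $0$-children to the leftmost leaf $\lambda_n$ in $\downarrow_0$, turns to $\uparrow_1$ there, and ascends until it first detects that the current node is a $1$-child (the left endpoint pattern prescribed by the table), namely $n$ itself. It then switches to $\downarrow_2$ and descends again, with the height counter incremented on each downward step and decremented on each upward step.

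The inductive step reads off the table much like the operations listed in the proof of \cref{prp:d0-correct}. Going down (pop left, push right) gives the $0$-child and increments the counter. The sibling move flips the rightmost bit. Going up from a $1$-child (pop right, push $0$ left) decrements the counter. For a $0$-child $n$ in $\downarrow_2$, the induction hypothesis is applied to its $0$-child, entered in $\downarrow_2$. The machine then moves to the sibling $1$-child, which is entered in $\downarrow_0$ and handled by the hypothesis; finally it climbs back to $n$ in $\uparrow_3$. A $1$-child first runs the lookahead lemma and then behaves like a $0$-child in $\downarrow_2$. Leaves are the base case: the $\downarrow_0\to\uparrow_1$ turn happens only for the leftmost leaf of a lookahead, and the $\downarrow_2\to\uparrow_3$ turn occurs at every leaf. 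The half-traversal index $s$ is unchanged by all these transitions except $(\dagger)$ at $\lambda_1$ and $(\ast)$ at $\lambda_0$, so it equals $\nth{1}$ before $\lambda_1$ is reached and $\nth{2}$ afterwards.

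It then remains to connect the local invariant to the global run. The run starts at $\lambda_0$ and ascends the branch $r\to\lambda_0$, which consists exactly of the nonregular nodes. At each node of this branch it enters the regular $1$-sibling subtree in $\downarrow_0$ and gets it back in $\uparrow_3$, so the local invariant covers each such subtree, including the $1$-subtree of the root, which contains $\lambda_1$. The output statement then follows because, among the four successive states, exactly one has its pass index equal to $h$.

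I expect the main obstacle to be the bookkeeping of the counter $h$ across the lookahead. One must check that descending, re-ascending, and descending again leaves the counter equal to the opposite height at every visit. One must also check that the special state $\sspecial$ and the three transitions replacing $(\diamond)$ never fire at regular nodes, so that they cannot perturb the invariant. I would verify this first by identifying the exact left-endpoint patterns that trigger $(\diamond)$, $(\ast)$ and $(\dagger)$, and by showing, using \cref{obs:100*-detection}, that these patterns only appear on the nonregular branch or at $\lambda_0$ and $\lambda_1$.
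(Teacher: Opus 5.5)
Your plan is a mutual induction on subtrees (a $0$-child entered in $\downarrow_2$, a $1$-child entered in $\downarrow_0$), together with a lookahead lemma, glued along the branch $r\to\lambda_0$. This is the straightforward induction on the table that the paper has in mind, and the counter bookkeeping you single out as the main obstacle is immediate. The step that actually fails is the third component. Your invariant says that a $1$-child entered in $(\downarrow_0,h,s)$ is left in $(\uparrow_3,h,s)$. For the $1$-child of the root, $0^{\ell-2}11$, this is false: it is entered in $(\downarrow_0,\cdot,\nth{1})$, the lookahead from it reaches $\lambda_1$ where $(\dagger)$ fires, and it is left in $(\uparrow_3,\cdot,\nth{2})$. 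More generally, every node $0^k110^{\ell-2-k}$ with $k\geq 1$ has its $\downarrow_0$-visit before $(\dagger)$ and its $\uparrow_3$-visit after it. So the part of the claim saying that all four visits carry the same $s$ is not literally true and cannot be proved as stated. Your own (correct) remark that $s$ is $\nth{1}$ before $\lambda_1$ and $\nth{2}$ afterwards is incompatible with your invariant on this subtree.

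The repair is as follows. First check that every transition other than $(\dagger)$, $(\ast)$ and the $(\diamond)$ family ignores and preserves $s$. Also check that $(\dagger)$, which fires at the \emph{regular} leaf $\lambda_1$, acts on the deque and on the first two components exactly like the ordinary transition it replaces. Then run the induction on the deque and the first two state components only, and attach $s$ to each visit according to whether it occurs before or after $(\dagger)$. Since the output states in the claim are specified by the first two components alone, the conclusion about outputs is unaffected.

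A few further points need correcting.
\begin{itemize}
\item Whether the current node is a $1$-child is read on the rightmost bit (the last letter of the path), not on a left-endpoint pattern. The leftmost bit only tells leaves from internal nodes.
\item The check you plan for the special transitions would fail as formulated. Every leaf has the form $1w$, so $10$ or $11$ appears on the left at \emph{every} leaf, not only on the nonregular branch or at $\lambda_0,\lambda_1$. What you need is that $(\dagger)$ is guarded by $s=\nth{1}$ and $(\ast)$ by $s=\nth{2}$. Combine this with the first-occurrence property of \cref{obs:100*-detection}, re-verified for the visit order of $\calD_2$ (which starts at $\lambda_0$ and includes lookaheads).
\item Similarly, $(\diamond)$ and its replacements are triggered by the $\last$ tag and a rightmost $1$, hence only at the root.
\item For the ``exactly one output state'' part, your global argument must continue past the root. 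After $r$ is reached in $\uparrow_3$, the run only visits $0^\ell$, the lookahead along $r\to\lambda_0$, and the final descent along the same branch. So no regular word is visited outside the windows covered by your local invariant.
\end{itemize}
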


The previous claim guarantees that the words are produced once, except for the words of the form $0^* 1 0^*$: these are the words of the \emph{leftmost path}, i.e., the path from the root (corresponding to the word $0^{\ell -1}$) to the leftmost leaf $\lambda_0$ (corresponding to the word $10^{\ell -1}$). We deal with the elements of this branch differently, since we know that their height is shifted by one when going over the root, by \cref{obs:parity} (generalized to the present setting, where we keep track of the opposite of the height, and we do it modulo 4 instead of modulo~2). These words are first traversed when going up from a $1$-child, that is in state $\uparrow_3$, before the traversal has reached the root for the first time --
hence with the correct height. Then, thanks to the lookahead, $\calD_2$ first traverses the leftmost branch up and down in states $\downarrow_0$ and $\uparrow_1$, with a height shifted by one. When again at the root of the tree, it has transitioned to the state $\last$ thanks to the lookahead, with transition ($\ast$): this uses the analogue of \cref{obs:100*-detection} to argue that the machine detects that it has now revisited $\lambda_0$ after having visited $\lambda_1$.
Hence, before $\calD_2$ does the last descent along the leftmost path in state $\downarrow_2$, the height is again decremented by one to avoid duplicates, in transition $\diamond$. 
Hence, all the words of the leftmost path are enumerated exactly once, as precisely stated in the following claim.

\begin{claim}
\label{clm:leftpath}
Let $w = 0^k10^{\ell -1 -k}$ be a word for some~$k<\ell$, and let $h = -k \bmod 4$. 
Then,~$\calD_2$ reaches a configuration of the form $(w,q)$ four times by $\calD_2$, successively in states $q=(\uparrow_3,h,s)$, $q=(\downarrow_0,h-1,\nth{2})$,
$q=(\uparrow_1,h-1,\last)$, and $q=(\downarrow_2,h-2,\last)$, for some value of $s \in \{\nth{1},\nth{2}\}$ (namely $s = \nth{2}$ when $w$ is the root and $s = \nth{1}$ otherwise).

Moreover, the word $w$ is produced exactly once, depending on the value.
\end{claim}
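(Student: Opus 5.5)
We prove \cref{clm:leftpath} by following the run of $\calD_2$ along the leftmost path $r\to\lambda_0$, whose nodes are exactly the words $0^k10^{\ell-1-k}$. Write $w_k \coloneq 0^k10^{\ell-1-k}$; thus $w_{\ell-1}$ is the root and $w_0=\lambda_0$. Throughout, we take \cref{clm:nonleftpath} as established for all other nonzero words. The proof has three phases, which we treat in the order the run encounters them.

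\textbf{Phase 1: first upward pass.} The run starts at $\lambda_0$ (transition from $\qi$). After the 0-subtree of $w_{k-1}$ is finished, the machine climbs from the $1$-child of $w_k$ in state $\uparrow_3$. This is the ordinary upward step of $\calD_0$, with the opposite height incremented by one per level: it pops on the right and pushes $0$ on the left. By induction on $k$, using \cref{clm:nonleftpath} for the subtree hanging off $w_{k-1}$, we show that $w_k$ is reached in $(\uparrow_3,-k\bmod 4,s)$. Here $s=\nth{1}$ before $\lambda_1$ has been met. Only the root $w_{\ell-1}$ is reached after its $1$-subtree, which contains $\lambda_1$ and triggers $(\dagger)$; hence $s=\nth{2}$ for the root.

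\textbf{Phase 2: the deviation at the root.} At the root in $\uparrow_3$, the machine cannot tell it is at the root. It therefore performs the analogue of \cref{obs:parity}: it goes to $0^\ell$ and back, swapping the offset by one. It then starts what it believes is a descent from a $1$-child, namely a lookahead. We check on the table that this produces the root in $(\downarrow_0,h-1,\nth{2})$. Descending in $\downarrow_0$ pushes $0$ on the right, so each $w_k$ is visited with its height shifted by $-1$. At $w_0=\lambda_0$, \cref{obs:100*-detection} (second occurrence of $10$ on the left) fires $(\ast)$, so the ascent in $\uparrow_1$ carries $\last$. This gives the state $(\uparrow_1,h-1,\last)$ on each $w_k$, by induction going up. At the root the upward lookahead stops, because the machine recognizes the first $1$-child boundary; here it is the top, where the left symbol pattern changes. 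The replacement of $(\diamond)$ with the $\sspecial$ transitions then decrements the height once more before the descent in $\downarrow_2$, giving $(\downarrow_2,h-2,\last)$ down to $\lambda_0$, where the machine halts.

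\textbf{Output uniqueness.} The four visits of $w_k$ occur in states with first components $\uparrow_3,\downarrow_0,\uparrow_1,\downarrow_2$, whose output heights are $3,0,1,2$ respectively. The stored values are $h,h-1,h-1,h-2$. Exactly one of the four conditions $h\equiv3$, $h-1\equiv0$, $h-1\equiv1$, $h-2\equiv2$ holds, namely according to $h\equiv 3,1,2,0$. So $w_k$ is produced exactly once. $0^\ell$ is output only in the $\qi$ configuration, since the detour through $0^\ell$ uses a non-output state.

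\textbf{Main obstacle.} The hard step is Phase 2 at the root: showing that the $(\diamond)$ replacement with $\sspecial$ applies exactly at the root after $\last$ is set and nowhere else. Elsewhere, the simplified transition must behave as in \cref{clm:nonleftpath}. My plan is to note that $\last$ is set only by $(\ast)$, which fires only at the second visit of $\lambda_0$. After that moment, the run is confined to $\uparrow_1$ along the leftmost path until the lookahead terminates. So the $\last$-guarded transitions are used only on the leftmost path, at its top.
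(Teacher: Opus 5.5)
Your route is the table-driven induction the paper intends. You follow the run of $\calD_2$ along the leftmost path: first ascent in $\uparrow_3$, the detour through $0^\ell$, the lookahead down in $\downarrow_0$ and up in $\uparrow_1$, then the final descent in $\downarrow_2$. You then match the stored values $h,h-1,h-1,h-2$ against the output heights $3,0,1,2$. The four states you derive and the exactly-once count ($h\equiv 3,1,2,0$) are correct. Some of your justifications are wrong, though, and need repair.

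\textbf{The trigger for $(\ast)$.} It is not the ``second occurrence of $10$ on the left''. Every leaf $1w$ of the root's $0$-subtree begins with $10$, so $10$ appears on the left many times during the first half-traversal, and that rule would set $\last$ far too early. The correct argument has two parts. First, $(\ast)$ is guarded by the third component being $\nth{2}$. Second, no word visited between $\lambda_1$ and the return to $\lambda_0$ begins with $10$: words of the root's $1$-subtree begin with $11$ or $0$, and the words $0^k10^{\ell-1-k}$ with $k\geq 1$ (including the root) and $0^\ell$ all begin with $0$.

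\textbf{Where the $\uparrow_1$ ascent stops.} It does not stop at the root because of any change on the left; nothing distinguishes the root there. It stops because the root $0^{\ell-1}1$ ends with $1$, so it looks like the $1$-child that started the lookahead. This is the same reason the root was sent to $0^\ell$ in the detour.

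\textbf{Sign slip.} When climbing, the stored opposite height is decremented per level, not incremented. Your value $-k \bmod 4$ is nevertheless right.

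\textbf{The word $0^\ell$.} This lies outside the claim, but your remark that $0^\ell$ is output only at $\qi$ is false. In the paper's description, $0^\ell$ is not produced in the initial configuration. It is visited during the detour in state $(\uparrow_3,h-1,\nth{2})$, where $h$ is the root's value, and it is produced there exactly when $h-1\equiv 3$. Otherwise it is produced through $\sspecial$ just before the final descent. That is the actual purpose of the three transitions replacing $(\diamond)$. The extra decrement before the $\downarrow_2$ descent is already performed by $(\diamond)$ itself, not added by the replacement.
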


Finally, we must explain how to deal with the word $0^\ell$, which is the purpose of the state called ``$\sspecial$'' and of the three transitions at the bottom of \cref{fig:deque-lookahead-table} to replace transition~$(\diamond)$.
Observe that the machine $\calD_2$ visits the word $0^\ell$ only once in
state $(\uparrow_3,h,\nth{2})$, see \cref{obs:parity}. Hence, this word is produced by $\calD_2$ if and only if $h=3$.
The complete version of the transition table of \cref{fig:deque-lookahead-table} ensures the following: when it visits the root in state $(\uparrow_1,h,\last)$ before the last descent, it first produces $0^{\ell}$ if $h\neq 3$. 
These are the transitions given at the bottom of \cref{fig:deque-lookahead-table} to replace the transition $(\diamond)$: we distinguish two cases depending on whether $h= 3$ (where we do the analogue of $(\diamond)$) or not (where we additionally output $0^\ell$ via the $\sspecial$ state).

We last argue why the machine satisfies the constant-delay requirement. For this, we observe that the second component of the state always stores the opposite of the current height modulo 4 plus some offset $b$, and the offset changes at only constantly many points in the enumeration (namely, the first time we visit the root where it is decremented, and the second time when it is fixed when revisiting the root after the lookahead). So it suffices to show the constant-delay guarantee whenever the offset does not change. We can also ignore the additional states mentioned at the end of the previous paragraph, as they are only visited once.

When the offset does not change, the machine simply performs a traversal of the complete binary tree $B_{\ell -1}$, except that descents from a $1$-child are replaced by a back-and-forth successively involving the states $\downarrow_0, \uparrow_1, \downarrow_2$. If we stay in the same state for more than 3 steps, then we visit all values of the second component, and one of the states encountered must be an output state. So a long run without outputs must change states among $\downarrow_0, \uparrow_1, \downarrow_2, \uparrow_3$ every 3 steps at most. Looking at the possible transitions between states, these state transitions must include a transition from $\downarrow_0$ to $\uparrow_1$, or from $\downarrow_2$ to $\uparrow_3$. But these transitions only happen at leaves of $B_{\ell-1}$, and since we assumed that we do not stay in the same state for more than 3 steps it easy to see that the time between the 4 visits of the leaf that are prescribed by \cref{clm:nonleftpath} must happen within a constant time bound. 

\begin{toappendix}
\subsection{Counter Implementation for Deque Machines}
\label{apx:dequecounter}

In this appendix, we sketch how the deque machines that we presented can be
extended to support a decrement operation, to give a constant-time and
constant additional memory implementation of a counter in the deque machine
model.

The deque machines $\calD_1$ and $\calD_2$ have transitions which are not injective, hence decrement cannot be implemented as easily as in the case of tape machines. 
We can partially fix the lack of injectivity, which comes from maintaining whether we are in the first or the second half traversal. When incrementing,  we detect the transition from the first to the second half traversal 
at node $\lambda_0$ (the leftmost leaf of the right subtree), see \cref{obs:100*-detection}. When decrementing,
we could detect the transition from the second to the first half traversal but only at the rightmost leaf of the left subtree. Fortunately, in between these two leaves, both machines do exactly the same transitions no matter whether their internal state indicates that their are in the first or second half traversal. In particular, each machine enumerates the same words during that time for these two possible internal states.

However, the transitions from the initial state $q_\ii$ are not injective either, i.e., we do not know when we should stop decrementing and move back to $q_\ii$.
This issue is similar to $\calD_0$ being unable to detect the root, and we are not sure whether it can be addressed.
All in all, $\calD_1$ and $\calD_2$ can be easily modified to support the decrement operation in constant time, but only subject to the assumption that the counter they maintain stays positive.
\end{toappendix}

\section{Lower Bounds: Queue Machines and Stack Machines}
\label{sec:stackqueue}
We now study variants of deque machines where
we restrict which operations are allowed on the endpoints. 
We consider \emph{queue machines} and \emph{stack machines}, and show that no such machine can be Hamiltonian.

\subparagraph{Queue machines.}
We define queue machines as a variant of deque machines that can only push to the left and pop to the right;
but we still augment them with the ability to read the $\rho$ rightmost symbols of the word, for some constant $\rho>0$. 
For convenience, we assume without loss of generality that the current word $\ell$ is no smaller than~$\rho$.
A \emph{queue machine} $\calM = (Q, \qi, \qh, Q_\OO, \rho, \delta)$ is then defined similarly to a deque machine,
with the transition function having signature
$\delta \colon (Q \setminus \{\qh\}) \times \{0,1\}^\rho \to Q \times \{0,1\}^{\leq \rho}$. 
The arguments to the transition function describe the current state and the $\rho$ rightmost bits of the word. The return values describe the new state and the word $w$ with length $0 \leq |w| \leq \rho$ pushed to the left of the deque, with the $|w|$ rightmost characters being popped so that 
the word length stays the same.
Note that the definition of queue machines does not allow them to read the leftmost bits of the deque, but they can be modified to do so without loss of generality, e.g., by remembering the~$\rho$ most recently pushed bits as part of their internal state.

We investigate queue machines that are \emph{Hamiltonian} and \emph{constant-delay} in the same sense as in~\cref{sec:tape}, and we show no such machines exist, even without bounding the delay:

\begin{theorem}
\label{thm:noqueue}
There exists no Hamiltonian queue machine.
\end{theorem}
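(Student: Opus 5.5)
The plan is to study the infinite \emph{stream} of pushed symbols rather than the individual configurations. Fix a Hamiltonian queue machine $\calM$ with state set $Q$ and read-window $\rho$, and run it on $0^\ell$ for a large $\ell$. Since $\calM$ only pushes on the left and pops on the right, the word at any time is the suffix of length $\ell$ of the stream $S = 0^\ell\,x_1x_2x_3\cdots$, where $x_i$ is the (possibly empty) word pushed at step $i$. Hamiltonicity then says that $S$, up to the halting time, contains every word of $\{0,1\}^\ell$ as a factor that is \emph{produced} (visited in an output state) exactly once. Cut the pushed part of $S$ into consecutive \emph{blocks} $\beta_1,\beta_2,\dots$: block $\beta_{j+1}$ is what $\calM$ pushes while it pops and reads the symbols of $\beta_j$, with $\beta_0 = 0^\ell$. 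The structural fact driving everything is that $\beta_{j+1}$ is the image of $\beta_j$ under a fixed finite-state transducer (the machine itself, reading $\rho$ bits at a time), started from the state reached at the end of block $j$. Only the choice of the starting state in $Q$ depends on the past.

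The first step is the zero phase. While $\calM$ is still popping $\beta_0=0^\ell$, it reads only $0^\rho$, so its state sequence is eventually periodic with preperiod and period at most $|Q|$. Hence $\beta_1$ is ultimately periodic, with preperiod and period bounded by a constant independent of $\ell$. The second step is an induction on $j$. If $\beta_j$ is ultimately periodic with preperiod $c_j$ and period $p_j$, then the transducer run on $\beta_j$ also becomes periodic. So $\beta_{j+1}$ is ultimately periodic with $p_{j+1}\le |Q|\,p_j$ and $c_{j+1}\le c_j+|Q|\,p_j$. This gives $p_j,c_j\le |Q|^{O(j)}$.

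The third step is counting. A length-$\ell$ window lies inside $\beta_{j-1}\beta_j$, and such a window is determined by $j$, its offset modulo the periods, and its position relative to the preperiods. So the number of distinct words visited during the first $k$ blocks is at most $\sum_{j\le k}O(c_j+p_j)\le |Q|^{O(k)}$. To see all $2^\ell$ words, the run must therefore last for $k=\Omega(\ell/\log|Q|)$ blocks; in particular it must run well beyond the zero phase.

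The main obstacle, and where the contradiction must actually come from, is turning this into an impossibility rather than just a lower bound on the running time. Once $p_j$ reaches $\ell$, the periodicity information is vacuous, and an arbitrary finite-state feedback with delay $\ell$ could in principle produce a de Bruijn-like stream (universal words exist, as in \cref{prp:dequeexists}). So the argument has to exploit uniformity in $\ell$ and the need to \emph{halt}. I would try two routes, in this order. The first is to compare the runs for $\ell$ and $\ell'=\ell+p$, where $p$ is the period of the zero phase. For the first several blocks, the stream for $\ell'$ should be obtained from that for $\ell$ by inserting one period in each block, so the sets of produced words, and the block at which the machine halts, should correspond. This would contradict the requirement that exactly $2^{\ell'}$ words are produced. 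The second is to look at the block in which the halting state is reached. There the transducer sees only a word whose structure is inherited from $\beta_1$ through bounded-state transformations, so the decision to halt cannot depend on whether all $2^\ell$ words have been seen. I expect the first route to be the one that works, and verifying that the correspondence between the $\ell$ and $\ell+p$ runs persists until halting is the delicate part.
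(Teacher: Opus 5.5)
Your proposal does not derive a contradiction. As you say yourself, steps 1--3 only bound the running time from below, and the impossibility is left to two routes that are not carried out. Neither works as stated.

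\textbf{Route 1.} Inserting one zero-phase period into $\beta_1$ is only guaranteed to commute with the next transducer pass when that pass's (state, phase) cycle on the periodic part consumes exactly one period. In general you would have to insert a common multiple of all cycle lengths met so far, and this grows geometrically with the block index. So after $O(\log \ell)$ blocks the periodic description is vacuous, which is your own caveat. A Hamiltonian run, however, lasts on the order of $2^\ell/\ell$ blocks, since at most $\ell$ distinct windows start inside a given block. The correspondence would therefore have to survive exactly in the regime where you have no control over the stream.

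\textbf{Route 2.} This is a heuristic, not an argument.

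\textbf{Step 3.} This step is also wrong as written. A window of $\beta_{j-1}\beta_j$ depends on where it is split between the two blocks, not only on offsets modulo the periods. For instance, take $\beta_0=0^\ell$ and $\beta_1=1^\ell$ (a machine that pushes a $1$ for each $0$ it pops). Both have period $1$ and no preperiod, yet they give $\ell+1$ distinct windows $0^{\ell-s}1^s$.

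\textbf{The missing idea.} You need to argue \emph{backward from the halting configuration}, across lengths, rather than forward from $0^\ell$. Halting is decided by the state and the $\rho$ rightmost bits. The future of a configuration depends only on its state and the symbols it will read. So a state together with a suffix can be reused for longer words. This is the paper's proof:
\begin{itemize}
\item Fix $n_0>\rho$ and let $(q_0,w_0)$ be the last configuration before halting on length $n_0$. Any configuration $(q_0,xw_0)$ halts at the next step.
\item Hence, on length $n_0+1$, at most one of the two words $0w_0$ and $1w_0$ (both must be visited) is visited in state $q_0$. The other one, $u_1$, is visited in some state $q_1\neq q_0$.
\item Let $v_1$ be everything pushed on the run from $(q_1,u_1)$ until halting. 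From any $(q_1,xv_1u_1)$, the machine reads the same bits at every step as on that run: everything read there lies in $u_1$ or was pushed by the machine itself, and those symbols already sit in $v_1$ where they will be read.
\item So it halts after the same number $n_1$ of steps. On each length, at most one configuration of this form is in state $q_1$, namely the one $n_1$ steps before the end.
\item Iterating with all paddings $w$ of length $\lceil\log(k+1)\rceil$ produces pairwise distinct states $q_0,q_1,\ldots$. This contradicts the finiteness of the state set.
\end{itemize}
Comparing different lengths, as in your route 1, is the right instinct. But only in this backward direction do the runs coincide exactly, with no periodicity analysis needed.
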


The proof is by contradiction and uses a pumping-like argument: we start from the last state visited by the machine, we argue that this state cannot occur on any other word, and then successively eliminate more and more states by building longer and longer words.
Note that the result applies even without requiring machines to be constant-delay, but that it would not apply if we could have unlimited auxiliary memory: indeed it is known that
we can produce all binary words with left-push and right-pop operations. %

\begin{proof}[Proof of \cref{thm:noqueue}]
For the sake of contradiction, assume there is a Hamiltonian queue machine $\calM$ with set of states $Q$, which reads $\rho$ bits at a time. 
Let $n_0>\rho$ be some integer: we consider the word $w_0$ of size $n_0$, such that $c=(q_0, w_0)$ is the last configuration of $\calM$ before reaching the halting state.
Let us consider the words $0w_0$ and $1w_0$: they are both enumerated by $\calM$ when generating the words of size $n_0+1$, in states $q$ and $q'$ respectively. 
Remark that $q$ and $q'$ cannot both be equal to $q_0$, because whenever $\calM$ is in configuration $(q_0,w'w_0)$ for some $w' \in \Sigma^*$ then it will reach the halting state (as the transition only depends on the last $\rho$ bits $w_0$ and on the current state $q_0$). W.l.o.g., we assume that we have $q\neq q_0$. Hence, let $q_1 \coloneq q$ and $u_1 \coloneq 0w_0$. We consider the run of~$\calM$ from $(q_1,u_1)$ up to the moment when the halting state is reached, and we let $v_1$ be the word formed of all words pushed to the left of $u_1$ during this run of $\calM$ until $\calM$ halts. 
For instance, if~$\calM$ pushes~$x_0$, then~$x_1$, then~$x_2$ and then halts, then~$v_1=x_2x_1x_0$.
Note that~$v_1$ can be exponentially longer than~$u_1$.

Notice now that the run of $\calM$ on $(q_1,wv_1u_1)$, for any~$w\in\{0,1\}^*$, will use exactly the same transitions (in the same order) as the run on $(q_1,u_1)$, and then it halts.
Indeed, an easy induction shows that the leftmost $\rho$ bits will always be equal in both cases.
In particular, all these runs have the same length~$n_1$ (measured as the number of transitions performed).
Let us now consider the four words $wv_1u_1$ for $w \in \{00,01,10,11\}$, and the run of~$\calM$ for words of length~$\ell=|wv_1u_1|$.
For the same reason as before, at most one of these four words can be visited in state $q_0$ during that run.
Moreover, only one of these four words can be visited in state~$q_1$ during that run.  Indeed, only one configuration can be at position~$n_1$ from the end.
Thus, pick a word $u_2 = wv_1u_1$ with~$w\in\{0,1\}^*$ which is visited in a state $q_2 \notin \{q_0,q_1\}$ during the run of~$\calM$ from its initial configuration.
We then let~$v_2$ be the concatenation of all words pushed to the left of $u_2$ by~$\calM$ on the run from $(q_2, u_2)$ until $\calM$ halts.

This method easily generalizes to build sequences~$(u_0,u_1,\ldots ,u_k),(v_0,v_1,\ldots,v_k)$ of words and~$(q_0,q_1,\ldots,q_k)$ of states that satisfies:
for every~$i<k$, there is at most one word~$w$ of length $\left\lceil \log(k+1)\right\rceil$
such that the run of $\calM$ from its initial configuration reaches $(q_i,wv_iu_i)$.
When~$k$ equals the number of states in~$\calM$, we reach a contradiction.
\end{proof}

Note that the proof above also rules out the existence of Hamiltonian queue machines in somewhat more powerful models, e.g., if the machine is also allowed to edit in-place the $\rho$ leftmost and rightmost characters for some constant $\rho$ without changing the length.

\subparagraph{Stack machines.}
We now move from queue machines to \emph{stack machines}. Stack machines are defined like queue machines, except that they can only push and pop on the right endpoint. 

Formally, a \emph{stack machine} is a tuple $\calM = (Q, \qi, \qh,
Q_\OO, \delta)$ with 
$\qi$ and $\qh$ respectively the \emph{initial state} and \emph{halting state},
with $Q_\OO$ the \emph{output states}, and $\delta \colon (Q \setminus \{\qh\}) \times \{0,1,\bow\}
\to Q \times (\{\text{pop}\} \cup (\{\text{push}\} \times \bool))$, where $\bow$ signals that the stack is empty.
The semantics are defined as expected: based on the current state and the rightmost bit of the current word (or the $\bow$ marker),
the machine transitions to some state and either pops the rightmost bit or
pushes a bit to the right of the word. (Popping on the empty word has no effect.)

With this definition, stack machines obviously cannot achieve constant-delay enumeration of $\{0,1\}^\ell$ for arbitrary $\ell \in \NN$. Indeed, for any delay bound $B \in \NN$, the machine will not have time to change the bits at distance greater than $B/2$ from the right endpoint, which is problematic as soon as $\ell > B$. However, one can ask whether \emph{Hamiltonian} stack machines can exist without the constant-delay requirement.
The answer is negative.

\begin{proposition} %
\label{prp:missing}
Let $\calM$ be a stack machine and $k$ be its number of states. Then there are words of length $k+2$ that the machine never visits.
\end{proposition}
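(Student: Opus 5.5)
The plan is to exploit the fact that a stack machine only ever sees the top symbol. I will first isolate an \emph{excursion} structure in the run, then apply a pigeonhole argument on the $k$ states over $k+1$ heights. Throughout, fix $\ell \coloneq k+2$ and consider the run of $\calM$ from $(\qi, 0^\ell)$. A ``visit'' to a word of length $\ell$ means a configuration whose stack content is that word.

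\emph{Step 1 (excursion lemma).} Suppose the machine is in configuration $(q, ub)$ with $b\in\bool$ the top symbol. Consider the portion of the run until the stack first has length $|u|$ (or until the run halts). By induction on the number of steps, the stack content during this portion always has the form $ub\,x$ or $u\,y$ with $u$ untouched, and the sequence of states and of the suffixes above $u$ depends only on $(q,b)$, not on $u$. Consequently:
\begin{itemize}
\item the exit state (the state on first return to height $|u|$) is a function $f(q,b)$;
\item the set of length-$\ell$ words visited during the excursion is $u\cdot S_{|u|}(q,b)$, for a set $S_{|u|}(q,b)$ depending only on $q$, $b$ and the height.
\end{itemize}
In particular, the bottom symbol can only change after the stack has been emptied. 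From the empty stack, the future is a function of the state alone.

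\emph{Step 2 (pigeonhole on the initial descent).} Since some word starting with $1$ must be visited, the run eventually empties the stack. For each $h\in\{0,\dots,\ell-1\}$, let $p_h$ be the state at the first time the height equals $h$. At that moment the stack is exactly $0^h$, since the lower cells were never popped. These are $k+1$ heights and only $k$ states, so there exist $h'<h$ with $p_h=p_{h'}\eqqcolon q$. By Step 1 applied with top symbol $0$, the descent from $(q,0^h)$ to height $h'$ is a shifted copy of the descent from $(q,0^{h'})$ to height $2h'-h$. So the run eventually becomes ``periodic'' in its state pattern on the way down. More importantly, every return to a configuration $(q, v0)$ generates the same excursion behaviour above $v$.

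\emph{Step 3 (counting/contradiction).} Suppose $\calM$ visited every word of length $\ell$. Consider all words with a fixed prefix $v$ of length $\ell-2$: four words $v00,v01,v10,v11$. Each visit to one of them lies inside an excursion above $v$ entered with some (state, top bit) pair. Since the run is deterministic and only $k$ states exist, the number of distinct excursion behaviours is at most $2k$. I would combine this with Step 2: the prefixes $0^{\ell-2}$, $0^{\ell-3}1$, \dots{} reached through the repeating state $q$ must reuse identical excursions. Hence two different prefixes receive identical suffix sets $S(q,b)$. Moreover, because the run can only change a cell at height $j$ after descending to height $j-1$ through states of bounded variety, some suffix pattern of length about $k+1$ is never produced. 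Concretely, I will aim to show that a word of the form $1\,0^{h-h'-1}\cdots$, which would require a state sequence on the way up that is incompatible with the forced period, is missed.

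The main obstacle is Step 3: turning the ``same state at two heights'' from Step 2 into a \emph{specific} unvisited word. Excursions can be re-entered many times with different states, so Step 1 alone does not bound visits per prefix. I would first try to track the \emph{last} time (rather than the first) each height $h$ is reached from above before the final halt. At that time the state sequence is again forced, and pigeonholing these $k+1$ last-visit states should yield two heights whose remaining futures coincide. This would exhibit a length-$(k+2)$ word whose top $k+1$ symbols could never be written below the repeated height.
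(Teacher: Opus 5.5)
Steps~1 and~2 are correct, but the proposal stops where the proof has to happen: Step~3 never exhibits an unvisited word or derives a contradiction. Step~2 also does not help. The first-visit states $p_h$ only describe how the machine first dismantles the initial stack $0^{k+2}$. They say nothing about the excursions after the stack has been emptied, and that is where every word starting with $1$ must be produced. Likewise, two prefixes receiving the same suffix set $S(q,b)$ is not contradictory in itself.

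The missing idea is a pumping argument anchored at \emph{last}-visit times on a climb, which is what the paper's Lemma does. Suppose the machine visits $w$ of length $\ell$ at time $t$, and let $s_d$ be the last time before $t$ at which the height is $d$. Then the stack at $s_d$ is $w_1\cdots w_d$, and the height stays above $d$ on $(s_d,t]$. Now suppose the configurations at $s_a$ and $s_b$, with $a<b$, agree in state \emph{and} top bit. Your Step~1 then shows that the segment from $s_a$ to $s_b$ is replayed forever from $s_b$, shifted up by $b-a$. Hence the machine never halts and never again goes below height $b$, so all later words start with $w_1\cdots w_b$. The paper applies this to the first visited word of length $k+2$ whose two-bit prefix $xy$ differs from $00$, starting from the last visit of $xy$. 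It concludes that a third two-bit prefix is never seen.

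Be careful with the pigeonhole, though. As your own Step~1 shows, the future after $s_d$ depends on the pair (state, top bit). So pigeonholing $k+1$ last-visit \emph{states}, as your final paragraph proposes, does not make the futures coincide. Pigeonholing on pairs instead would need a climb longer than $2k$. (The paper's Lemma also pigeonholes on states alone and does not address the top bit; the choice below sidesteps this.)

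To get length $k+2$, choose witnesses with a constant run. Let $w$ be whichever of $01^{k+1}$, $10^{k+1}$, $1^{k+2}$ is visited first, say at time $t$. Since $w$ differs from $0^{k+2}$ in its first two bits, the machine was at height $1$ before $t$, so $s_1<s_2<\cdots<s_{k+2}=t$ all exist. The $k+1$ configurations at $s_2,\ldots,s_{k+2}$ all have the same top bit, because $w_2=\cdots=w_{k+2}$. With only $k$ states, two of them share a state, say at $s_a$ and $s_b$ with $2\le a<b$. The replay argument then freezes the prefix $w_1w_2w_3$ forever, since $b\geq 3$. The other two words of the triple differ from $w$ in their first two bits and must be visited after $t$, which is a contradiction.
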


\label{apx:stack-lower-bound}
To prove \cref{prp:missing}, we first prove a general lemma reminiscent of pumping lemmas:

\begin{lemma}
  \label{lem:nostack}
  Let $\calM$ be a stack machine with state space $Q$, and let $k > |Q|$. Assume that the machine visits a word $u$ of length $n$ at time $s$ and later visits a word $v$ of length $n+k$ at time $t>s$, with all words between $s$ and $t$ having length $>n$. Then the machine never halts and all words visited after $t$ have $u$ as a prefix.
\end{lemma}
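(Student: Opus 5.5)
The plan is a pigeonhole argument on the stack heights between $n$ and $n+k$, combined with the determinism of $\calM$ and the fact that a stack machine only ever sees the top symbol of its stack.

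For each height $h \in \{n+1, \ldots, n+k\}$, let $t_h$ be the last time in $[s,t]$ at which the current word has length $h$. Since lengths change by at most one per step and the word at time $t$ has length $n+k$, every such $h$ is reached. Moreover, after $t_h$ the length stays $\geq h$ up to time $t$: if it dropped below $h$, it would have to pass through $h$ again on the way back up to $n+k$, contradicting the choice of $t_h$ as the last visit.

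Hence the bottom $h$ symbols of the stack are frozen from $t_h$ to $t$, and we get $t_{n+1} < t_{n+2} < \cdots < t_{n+k}$. The $k > |Q|$ times $t_h$ give $k$ states, so two of them coincide: there are $h < h'$ with the same state $q$ at $t_h$ and $t_{h'}$. At time $t_h$ the stack is $x a$ with $|x| = h-1$, and at time $t_{h'}$ it is $x a y b$, where $ya b$ is a nonempty block.

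The core claim is that from $(q, xa)$ the machine never reads below position $h$ before reaching $(q, xayb)$. Indeed, the segment from $t_h$ to $t_{h'}$ keeps the length $\geq h$, so it depends only on the state and the top symbol, and the symbols below position $h$ are irrelevant. Replaying it from $(q, xayb)$ then yields $(q, xayb\,yb)$, since the top symbol $b$ must play the role that $a$ played. This is the main obstacle: the top symbols differ ($a$ versus $b$), so the replay is not literally identical. I would fix it by choosing $t_h$ so that the top symbol is part of the configuration, pigeonholing on pairs (state, top symbol). With $k > |Q|$ this may require care; if it fails I would instead use that the symbol at position $h$ is itself frozen and was pushed at time $t_h$, so the same transition determines it from the preceding state.

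Once the periodic behaviour is established, induction gives an infinite run whose height grows without bound and never drops to $h$. The machine therefore never halts, and all later words keep the prefix $x a$ of length $h > n$. Since the length is $> n$ throughout $[s,t]$, position $n$ is frozen on $[s,t]$, so $u$ is a prefix of $xa$, and hence of every word visited after $t$.
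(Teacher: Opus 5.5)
Your outline matches the paper's: take the last visit to each height, pigeonhole, replay. But the replay step you flag is a genuine gap, and neither of your repairs closes it under the hypothesis $k > |Q|$.

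\textbf{The repairs.} Pigeonholing on pairs (state, top symbol) is the right idea. If $(q,a)$ occurs at both $t_h$ and $t_{h'}$, the segment between them never pops position $h$. It is therefore replayed verbatim from $t_{h'}$, ending in $(q, xayb\,yb)$ whose top symbol is $b = a$ again, and the rest of your argument goes through. But this needs $k > 2|Q|$, not $k > |Q|$. Your fallback does not help either. The frozen symbol at position $h$ is pushed by the transition taken at $t_{h-1}$, not at $t_h$. That push is determined by the state \emph{and} the top symbol at $t_{h-1}$, so the dependence on top symbols just moves one level down.

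\textbf{The statement is false as written.} No argument can work with only $k > |Q|$. Take $Q = \{\qi, p, \qh\}$ with
$\delta(\qi,0) = (\qi,\text{push }1)$, $\delta(\qi,1) = (p,\text{push }0)$, $\delta(p,0) = (p,\text{push }1)$, $\delta(p,1) = (\qh,\text{push }0)$; the transitions on $\bow$ are irrelevant. From $(\qi,0^\ell)$ the run is
$(\qi,0^\ell)$, $(\qi,0^\ell 1)$, $(p,0^\ell 10)$, $(p,0^\ell 101)$, $(\qh,0^\ell 1010)$.
The length climbs by $4 > 3 = |Q|$ without returning to $\ell$, and the machine halts.

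This generalizes. With $m$ pushing states the machine can climb $2m$ levels through distinct (state, top symbol) pairs. It can then halt, or hand over to a state that pops everything below $u$. Since $2m > m+2$ once $m \geq 3$, both conclusions of the lemma can fail.

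\textbf{The paper's proof.} It has exactly the flaw you spotted. It pigeonholes on states alone and asserts that from $s_j$ the machine ``will replay the same sequence of transitions,'' ignoring that the last letters of $u_i$ and $u_j$ may differ. The correct fix is the one you proposed: strengthen the hypothesis to $k > 2|Q|$ and pigeonhole on (state, top symbol) pairs. With that change your proof is complete. The word length in the subsequent proposition on missing words must then be adjusted to roughly $2|Q|+3$.
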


Intuitively, we show that we must reach on one word $u'$ on a state $q$ and reach a larger word $v'$ on the same state $q$ without having popped any bit from~$u'$. Once this has happened the sequence repeats indefinitely. The formal proof is below:

\begin{proof}[Proof of \cref{lem:nostack}]
  Fix a word $u$ of length $n$ visited at a time $s$ and $v$ a word of length $n+k$ visited at a time $t>s$, with all words between $s$ and $t$ having length $>n$. Define a sequence of timestamps $s, \ldots, s_k$ and corresponding visited
  words $u_0, \ldots, u_k$ in the following way: $s_0 \coloneq s$ and $u_0 \coloneq u$, then $s_1$ is the moment where we visit a word $u_1$ which is the last word of length $n+1$ visited between $s_0$ and $t$, and $s_2$ is the last moment where we visit a word $u_2$ which is the last word of length $n+2$ visited between $u_1$ and $v$, and so on. At the end, the time $s_k$ is such that $s_k \leq t$ and $u_k$ is a word of length $n+k$ which is the last word of length $n+k$ visited between $u_{k-1}$ and~$v$, so it must be the case that $s_k \coloneq t$ and $u_k \coloneq v$. Further note that the definition ensures that the $u_i$ are successive prefixes of one another: each $u_i$ with $0\leq i\leq k$ is a prefix of length $n+i$ of~$v$
  
  By the pigeonhole principle, there are two indices $1 \leq i < j \leq k$ such that the state reached by the machine at time $s_i$ and $s_j$ (reaching words $u_i$ and $u_j$ respectively) is the same state $q$. Recall that we have $u_i = u (v_{n+1} \cdots v_{n+i})$ and $u_j = u (v_{n+1} \cdots v_{n+j}) = u_i x$ with $x = (v_{n+i+1} \cdots v_{n+j})$.

  Now, consider the execution of the machine from time $s_j$ onwards. At $s_i$, the machine was in the same state, and it followed a sequence of transitions that never popped below depth $n+i$ until reaching $s_j$: indeed $s_i$ is the last visit of a word of length $n+i$ between $s$ and $t$ so if the machine popped below $n+i$ between $s_i$ and $s_j$ then it would visit a word of length $n+i$ again before $t$ and this would contradict the definition of $u_i$. Then, from $s_j$, the machine will replay the same sequence of transitions, in particular not popping below $n+j$, until we reach after time $s_j-s_i$ the state $q$ and the word $u_i x^2$. We can repeat this argument indefinitely to show that the machine never halts and all words produced from that point onwards start with $u_i$, hence with~$u$.
\end{proof}

We can now prove \cref{prp:missing}:

\begin{proof}[Proof of \cref{prp:missing}]
  We reason by contradiction. Let $\calM$ be a stack machine, let $k$ be the number of states of~$\calM$, let $\ell \coloneq k+2$ as in the statement, and assume by way of contradiction that $\calM$ visits all words of length $\ell$. 
  
  Consider the first word $u$ of length $k+2$ that $\calM$ visits whose first two bits $xy$ are different from $00$, and consider the moment where $\calM$ last visited the word $xy$ before visiting $u$ (this must have happened because the initial tape starts with $00$ so $\calM$ must have changed them and visited $u$). From \cref{lem:nostack} we know that all words visited afterwards start with $xy$. But, letting $x'y'$ be a sequence of two bits such that $x'y' \neq xy$ and $x'y' \neq 00$, we know that no word of length $\geq k+2$ starting with $x'y'$ was visited, so $\calM$ cannot visit all words of length $\ell$.
\end{proof}

This result also implies that prefix-Hamiltonian stack machines do not exist, or even stack machines that visit all words of length $\ell$ for sufficiently large $\ell$ (in particular they cannot output all such words, even with duplicates).
This also rules out the existence of stack machines to enumerate other languages than $\{0,1\}^\ell$, for instance $\{0,1\}^{\leq l}$ (the binary words of length up to~$\ell$), or the infinite language $\{0,1\}^*$. 

However, with unlimited memory, some languages can be enumerated in constant delay by performing push-right and pop-right operations, similarly to what was done in Appendix~\ref{sec:push-pop}.
Since we cannot enumerate $\{0,1\}^\ell$ in constant-delay, we focus on other languages.

We first note that enumeration for the language $\{0,1\}^{\leq\ell}$ can easily be done in constant-delay with push-right and pop-right operations. This contrasts with the nonexistence of stack machines for this language (as follows from \cref{prp:missing}).

\begin{claim}
  For any $\ell \geq 0$ we can design a (length-dependent) push-pop-right Gray code that produces all words over $\{0,1\}$ of length \emph{up to $\ell$} with a constant number of push-right and pop-right operations between any two produced words.
\end{claim}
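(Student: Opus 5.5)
We construct the code as a depth-first traversal of the complete binary trie $T_\ell$ of all words of length at most $\ell$. The root is the empty word $\varepsilon$, and a node $w$ with $|w|<\ell$ has children $w0$ and $w1$. Moving from a node to a child is a single push-right, and moving from a node to its parent is a single pop-right. So any walk in $T_\ell$ in which consecutive produced nodes are at bounded tree distance gives a push-pop-right code with a bounded number of operations between outputs. The task therefore reduces to ordering the nodes of $T_\ell$ so that consecutive nodes are at distance at most $3$, and this is exactly what the even-odd trick~\cite{karaganis1968cube,sekanina1960ordering} provides, as in the proof of \cref{prp:t0-correct}.

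The order is defined by a recursive procedure on a node $w$ of depth $d=|w|$. If $d$ is even, we produce $w$, then recurse into $w0$, then into $w1$. If $d$ is odd, we first recurse into $w0$, then into $w1$, and then produce $w$. Leaves ($d=\ell$) are simply produced. The procedure starts at the root, which has depth $0$. Every node is produced exactly once, so all words of length up to $\ell$ are listed without repetition. The walk itself is the ordinary depth-first walk, which is implemented by pushes when descending and pops when ascending.

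The key step is to bound the distance in the walk between two consecutive outputs. We prove by induction on the height of a node $w$ that in the traversal of its subtree:
\begin{itemize}
\item if $d$ is even, the first produced node is $w$ itself and the last produced node is at distance at most $1$ from $w$;
\item if $d$ is odd, the last produced node is $w$ and the first produced node is at distance at most $1$ from $w$.
\end{itemize}
For a leaf both statements are trivial. For an even internal node $w$, the subtree of $w0$ starts at a node within distance $1$ of $w0$, hence within distance $2$ of $w$. The subtree of $w0$ ends at $w0$ (odd depth), and the subtree of $w1$ starts within distance $1$ of $w1$, so these two consecutive outputs are at distance at most $3$ through $w$. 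The last output is $w1$, at distance $1$ from $w$. The odd case is symmetric.

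Once the induction is in place, the number of push/pop operations between two consecutive produced words is at most $3$, and at most $1$ before the first word $\varepsilon$. The main obstacle is only the bookkeeping of the two boundary distances in the induction, which must be tracked alongside the maximal gap inside each subtree. The code is explicitly allowed to depend on $\ell$, so leaves are recognised by their length and no constant-memory argument is needed.
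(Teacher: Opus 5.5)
Your proof is correct and takes the same approach as the paper, whose proof is the one-line remark that one explores the complete binary tree of depth $\ell$ (push-right to descend, pop-right to ascend) using the even-odd trick. Your induction on the boundary distances fills in the details the paper leaves implicit, and yields an explicit bound of $3$ operations between consecutive outputs.
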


\begin{proof}
This is simply by exploring the complete binary tree of depth $\ell$ and using again the even-odd trick.
\end{proof}

We last point out that the above is not possible if we want to produce the infinite sequence of the words of $\{0,1\}^*$ (i.e., of arbitrary length). This is intuitively because we must eventually produce all short words and then we can no longer change the first bits of the word. This is made formal in \cite[Theorem~6.2]{amarilli2023enumerating}.%
\section{Conclusion and Future Work}
\label{sec:ccl}

We have studied how to enumerate all binary words of length~$\ell$ in constant delay and constant auxiliary space. We defined two tape machines to perform that task ($\calT_0$ and $\calT_1$), one of which also implements a Gray code ($\calT_1$). We also constructed two deque machines that meet our goal ($\calD_1$ and $\calD_2$), all built by improving a simpler deque machine ($\calD_0$) to make it halt at the end of the enumeration. Our work opens two main lines of interesting questions.

First, can our constructions be improved?
For tape machines, we have shown that~$\calT_1$ surprisingly implements a code
similar to the 3-skew-tolerant Gray code of~\cite{sachimelfarb2025improved}.
The same paper presents a 2-skew-tolerant code, and it is not clear if it can be implemented by a tape machine.
For deque machines, there are at least two ways to improve
$\calD_1$ and~$\calD_2$: supporting decrements, which seems to boil down to detecting that the enumeration is back at the start; and minimizing the distance between consecutive outputs, for a relevant notion of distance that generalizes Hamming-1 by taking into account cyclic shifts. Another direction for improvement is to look at even weaker models: we already know that restricting the operations allowed on deque machines makes them too weak for our purposes, even with unrestricted delay.
Can we somehow weaken tape machines and still meet our goal?

Second, can our constructions be generalized? 
Indeed, while we have focused on enumerating all binary words of length~$\ell$, our
constructions would easily extend to enumerating binary words of length~$\geq
\ell$ in increasing order of length. However, it is not clear that our
work generalizes to harder cases, such as enumerating all $n$-ary words (which
may lead to different bounds~\cite{chakraborty2018space}), or all words of an
arbitrary regular language.
More generally, it would be especially interesting to study how we can use our
techniques to enumerate all walks of a directed graph with constant delay and
constant additional memory, with possible
applications to graph database query processing.
We could approach these tasks by devising variants of our constructions relying on
efficiently encoding depth-first traversals of trees, but
unfortunately the techniques used to build~$\calT_1$ from~$\calT_0$ and~$\calD_1,\calD_2$ from~$\calD_0$ (e.g., lookahead, double traversal, storing state on the tape/deque) all rely on specific properties of~$\{0,1\}^\ell$, which are not simple to extend already in the case of larger alphabets. 
The enumeration of regular languages with constant delay has also been studied in~\cite{amarilli2023enumerating}, featuring push-pop edits (reminiscent of deque machines) and push-pop-right edits (reminiscent of stack machines). Their characterization of enumerable
languages is given in a non-uniform model, but it implies a necessary condition for
languages to be enumerable in our models. A natural question is then to understand if the algorithms of~\cite{amarilli2023enumerating} can be implemented with constant additional memory, and if possible with a machine size that is not much larger than an automaton for~$L$.

\vfill
\pagebreak

\bibliography{biblio}

\end{document}